\documentclass[11pt]{article}

\usepackage[T1]{fontenc}
\usepackage[utf8]{inputenc} 
\usepackage{lmodern}

\usepackage[ruled,vlined,linesnumbered,noend]{algorithm2e}
\usepackage{amsfonts}
\usepackage{amsmath}
\usepackage{amssymb}
\usepackage{amsthm}
\usepackage{authblk}
\usepackage{bm}
\usepackage{dsfont}
\usepackage[a4paper, margin=2.5cm]{geometry}
\usepackage{mathtools}
\usepackage{newtxtext}
\usepackage{placeins}
\PassOptionsToPackage{hyphens}{url}\usepackage{hyperref}
\usepackage[subscriptcorrection]{newtxmath}
\usepackage{natbib}
\usepackage{xcolor}
\usepackage{tikz}
\usetikzlibrary{positioning, shapes.geometric, arrows.meta, calc, decorations.pathreplacing}

\mathtoolsset{showonlyrefs=true}

\newtheorem{theorem}{Theorem}
\newtheorem{lemma}{Lemma}
\newtheorem{proposition}{Proposition}
\newtheorem{corollary}{Corollary}

\theoremstyle{definition}
\newtheorem{definition}{Definition}
\newtheorem{assumption}{Assumption}

\theoremstyle{remark}
\newtheorem{remark}{Remark}
\newtheorem{example}{Example}

\newcommand{\dd}{\mathrm{d}}  %

\DeclarePairedDelimiter\abs{\lvert}{\rvert}

\DeclarePairedDelimiter\paren{(}{)}
\DeclarePairedDelimiter\brac{[}{]}
\DeclarePairedDelimiter\curly{[}{]}
\DeclarePairedDelimiter\norm{{\|}}{{\|}}
\DeclarePairedDelimiter{\ceil}{\lceil}{\rceil}

\newcommand{\Exp}{\mathbb{E}}
\DeclareMathOperator{\Prob}{\mathbb{P}}
\DeclareMathOperator{\Var}{var}

\DeclareMathOperator{\Corr}{corr}

\newcommand{\bbR}{\mathbb{R}}  %
\newcommand{\setC}{\mathcal C}

\newcommand{\ess}{\textsc{ess}}
\newcommand{\kl}{\operatorname{KL}}
\newcommand{\tv}{\textsc{TV}}

\newcommand{\PG}{\operatorname{PG}}
\newcommand{\proj}{\textsc{proj}}
\newcommand{\Uniform}{\mathcal{U}}

\begin{document}

\title{
A coupling-based approach to $f$-divergences diagnostics for Markov chain Monte Carlo
}

\author[1]{Adrien Corenflos}
\affil[1]{Department of Statistics, University of Warwick}

\author[2]{Hai-Dang Dau}
\affil[2]{Department of Statistics and Data Science, National University of Singapore}
\affil[ ]{\href{mailto:adrien.corenflos@warwick.ac.uk}{adrien.corenflos@warwick.ac.uk}, \href{mailto:hddau@nus.edu.sg}{hddau@nus.edu.sg}}

\maketitle

\begin{abstract}
A long-standing gap exists between the theoretical analysis of Markov chain Monte Carlo convergence, which is often based on statistical divergences, and the diagnostics used in practice.
We introduce the first general convergence diagnostics for Markov chain Monte Carlo based on any $f$-divergence, allowing users to directly monitor, among others, the Kullback--Leibler and the $\chi^2$ divergences as well as the Hellinger and the total variation distances.
Our approach rests on a coupling-based `weight harmonization' scheme that produces direct, computable, and consistent importance weights for interacting Markov chains with respect to their target distribution.
Beyond their use as convergence diagnostics, these weights are consistent estimates of the Radon--Nikodym derivative $\dd\pi/\dd \mu_t$, a richer object than the convergence bounds alone, with natural applications to importance-weighted inference.
We show how such weightings can provide upper bounds to any $f$-divergence, prove that these bounds tighten over time and converge to zero as the chains approach stationarity, and demonstrate that, while more conservative than existing coupling-based total variation estimators, our method remains a practical and broadly applicable diagnostic tool.
\end{abstract}

\begin{keywords}
Markov chain Monte Carlo; Couplings; Diagnostics; Effective sample size; Chi-squared divergence; Importance weights
\end{keywords}

\section{Introduction}\label{sec:introduction}
\subsection{Markov chain Monte Carlo and convergence diagnostics}\label{subsec:mcmc}
Computing the expectation $\pi(\varphi) = \Exp_{\pi}(\varphi)$ of a test function $\varphi$ with respect to a distribution $\pi$ of interest is routinely achieved by means of Markov chain Monte Carlo~\citep[MCMC, see, e.g.,][]{brooks2011handbook}.
This class of methods generates samples $X_{t+1} \sim K(X_t, \cdot)$ under a Markov kernel $K$ designed to keep $\pi$ invariant: $(\pi K)(\dd y) = \int K(x, \dd y) \pi(\dd x) = \pi(\dd y)$.
Expectations under $\pi$ are then often computed in one of two compatible ways: either by averaging over the iterations generated by the Markov chain, $\pi(\varphi) \approx \sum_{t = B}^T \varphi(X_t) / (T - B + 1)$; or by combining $N > 1 $ such independent estimates $\hat{\varphi}^n = \sum_{t = B}^T \varphi(X^n_t) / (T - B + 1)$ together as $\pi(\varphi) \approx \sum_{n=1}^N \hat{\varphi}^n / N$.
Here $B > 0$ is a ``burn-in'' period, used to discard the initial samples of the Markov chain, which are often not representative of the target distribution $\pi$.
Under technical conditions on the initial distribution of $X_0 \sim \mu_0$, the kernel, and the target, the marginal distribution $\mu_t$ of $X_t$ converges to $\pi$ as $t \to \infty$.
Such results are typically obtained in terms of the rate of decay of the total variation distance $\norm{\mu_t - \pi}_{\tv}$ between the two distributions, or of the Wasserstein distance~\citep[see, e.g.,][]{meyn2012markov,douc2018mc}.
However, despite their elegant theoretical foundations, these results are hardly amenable to practical use, as they often make use of non-accessible properties of the target distribution $\pi$.

Diagnostic tools for MCMC have therefore historically drifted away from the theoretical metrics highlighted above, focusing instead on more accessible and interpretable quantities that can provide guidance for practitioners.
Central among these is the Gelman--Rubin diagnostic~\citep{Gelman1992diagnostics}, which, from $N$ independent simulations $X_{0:T}^{1:N}$, tests whether all chains agree about their estimate $\sum_{t=0}^T \varphi(X^n_t) / (T+1)$ of $\pi(\varphi)$. 
This method, refined over the years~\citep{vats2021,vehtari2021rank} has been very successful, in part due to its integration to software packages, its ease of use in practice, and the fact that it provides an estimator of the \emph{effective sample size} of a test statistic: 
\begin{equation}\label{eq:ess}
  \ess = \frac{T+1}{1 + 2 \sum_{t=0}^T \Corr\curly*{\varphi(X_0), \varphi(X_{t})}}.
\end{equation}
This effective sample size is interpreted as the number of independent samples from $\pi$ that would yield the same variance as exhibited by the Markov chain $X_{0:T}$ for $\varphi$.
Nonetheless, it is not without limitations: it directly targets a specific aspect of the convergence of the Markov chain, namely the agreement in terms of a specific functional, rather than a more fundamental property of the chain itself.
While this may serve as a surrogate (for example when $\varphi$ is a collection of moments), it does not provide a complete picture of the convergence behaviour of the chain.

Complementary to Gelman--Rubin-style diagnostics, few but important monitoring tools have been proposed that directly assess the convergence of the Markov chain to stationarity in terms of a given metric, always either the total variation or the Wasserstein distance as far as we are aware, or in terms of Stein's discrepancies~\citep{gorham2015measuring,gorham2017measuring,srinivasan2025the}.
When considering the total variation or Wasserstein metrics, the idea of diagnosing convergence using coupled Markov chains has been recognized early~\citep{Johnson1996convergence} and has played an important role.
A notable example is \citet{biswas2019estimating} who obtain generally applicable and computable (via Monte Carlo) upper bounds to both total variation and Wasserstein distances.
Their method is based on a debiasing technique using coupled Markov chains, originating from \citet{glynn2014exact,jacob2020unbiasedmcmc}.
For instance, for the total variation, $\norm{\mu_t - \pi}_{\tv}$ can be upper bounded by $\Exp\curly*{(\ceil{\tau} - L - t) / L}$, where $\tau = \min\left(t > L \mid X_t = Y_{t-L}\right)$ and $(X_t)_{t \geq 0}$ and $(Y_t)_{t \geq 0}$ are marginally distributed according to the same Markov chain.
Implementing such a method in practice often amounts to designing a coupling between the two lagged chains, $\bar{K}(x_t, y_{t-L}, \dd x_{t+1}, \dd y_{t+1-L})$ under which $\tau$ is almost surely finite.
Many Markov kernel couplings have since been proposed in the literature, covering a range of popular kernels~ \citep{heng2019couplings,jacob2020unbiasedsmoothing,lee2020coupled,wang2021coupling,biswas2022coupling,ceriani2024linearcostunbiasedposteriorestimates,Papp2024couplings,corenflos2025pdmp}, which we shall review in Supplement~\ref{app:couplings}.
While it can simply be taken equal to $1$, the lag $L > 0$ is a hyper-parameter controlling the tightness of the bound $\Exp\curly*{(\ceil{\tau} - L - t) / L}$, and ideally chosen such that $X_L$ is approximately at stationarity.
Calibrating it must therefore be implemented as an adaptive offline procedure, first using some pre-fixed choice for $L$ (e.g., $L=1$), and updating $L$ as to eventually make the bounds non-vacuous~\citep[Section 2.3]{biswas2019estimating}.
Another practical choice~\citep[Section 3]{atchade_unbiased} is to choose $L$ adaptively as a statistic of the meeting time $\tau$, for instance the 95\% percentile of meeting times thus observed for a previous choice of $L$. 
Both choices give implementable and practical ways to upper bound the total variation and Wasserstein distances of the Markov chain to the target.

Still, controlling these two distances does not provide a mechanism to correct for the discrepancy between $\mu_t$ and $\pi$, and thus on their own cannot be exploited further than for diagnostics.
More specifically, while generating an infinitely large number of draws from $\mu_t$ allows one to understand that distribution perfectly, the information that $\norm{\mu_t - \pi}_{\tv} < \varepsilon$ alone does not allow us to translate perfect knowledge of $\mu_t$ to perfect knowledge of $\pi$.
In other terms, while the method of \citet{biswas2019estimating} is useful as an alert tool providing warnings when the Markov chain does not converge, or as a comparison tool where different kernels can be pitched together, it does not fully confer to the user the information on $\pi$ when the chain does converge.

On the other hand, if we knew the Radon--Nikodym derivative $\dd \pi/\dd \mu_t$ (which in most cases of interest is equivalent to the ratio of their densities), we could obtain consistent estimates of $\pi$ from a $\mu_t$-distributed sample using the celebrated \textit{importance sampling} procedure~\citep[dating at least from][]{Kahn1951SplittingParticleTransmission}.
In addition, we could estimate any statistical divergences of $\pi$ with respect to $\mu_t$ provided that they are defined solely in terms of the Radon--Nikodym derivative. 
This class of divergences, called $f$-divergences~\citep{renyi1961measures}, includes popular objectives such as the Kullback--Leibler divergence ($\kl$), the total variation distance ($\tv$), and the Rényi divergences (see Section~\ref{subsec:divergences} for more details). Unfortunately the exact ratio $\dd \pi/\dd \mu_t$ is intractable in all but the most trivial cases.

\subsection{Contributions}\label{subsec:contrib}
In this article, we propose a novel way to run parallel MCMC chains which gives a consistently weighted approximation of $\pi$ at each time step. 
These can be interpreted as noisy approximations of the Radon--Nikodym derivative $\dd \pi/\dd \mu_t$.
We leverage this construction to introduce upper bounds to the $f$-divergences of the target with respect to the current sample distribution.
Contrary to existing diagnostics, our method is fully online and valid from step one, requiring no lag or warm-up.
Our construction is based on Markov kernel coupling techniques arising from the MCMC~\citep{glynn2014exact,jacob2020unbiasedmcmc} literature and a weighting structure arising from sequential Monte Carlo literature~\citep[see e.g.][]{chopin2020book}, in particular \citet[Section 4.2]{dau2023complexity} using two interacting copies of $N$ independent chains.

The article is organized as follows.
Section~\ref{sec:importance-weighting} explains how any weighted approximation of the target can be used as a diagnostic and highlights the challenge of finding a good one in the context of MCMC. 
Section~\ref{sec:weight-harmonization} details our method, termed weight harmonization, and explains how it can be used both to diagnose the convergence of MCMC and to produce a consistent estimate of expectations under the target for any finite $t$.
Section~\ref{sec:theory} provides consistency and unbiasedness results for the resulting importance weighted MCMC chains.
Additionally, under strong mixing assumptions and a coupling assumption akin to uniform ergodicity, we prove that the system of weights converges exponentially fast to uniform weighting of all trajectories, recovering a computational version of exponential ergodicity theorems and validating the use of our method as a diagnostic.
Section~\ref{sec:exps} illustrates the empirical behaviour of our method and shows that it returns practical albeit conservative estimators of the effective sample size of the system.\footnote{The code used to generate the results can be found at \url{https://github.com/AdrienCorenflos/ImportanceMCMC}.}
Section~\ref{sec:discussion} concludes with a discussion of different advantages and drawbacks of our method. 
In particular, we identify two likely points of improvement: increasing the interaction between particles by Rao--Blackwellization, and offline correction of the estimators, in a fashion similar to particle smoothing operations~\citep[see, e.g.][Chapter 12]{chopin2020book}.
\subsection{Notations and assumptions}\label{subsec:notations}
For two measurable spaces $(\mathcal X, \mathcal B_X)$ and $(\mathcal Y, \mathcal B_Y)$; a kernel $V: \mathcal X \times \mathcal B_Y \to [0,1]$; and a measurable function $f: \mathcal Y \to \bbR$, we denote by $Vf$ the function $Vf: \mathcal X \to \bbR$ defined by $(Vf)(x) = \int_{\mathcal Y} f(y) V(x, \dd y)$.
For any measure $\mu$ on $\mathcal X$, we write $\mu V$ for the measure $(\mu V)(B) = \int_{\mathcal X} V(x, B) \mu(\dd x), \forall B \in \mathcal B_Y$.
We also write $\mu \times V$ for the joint measure $\mu(\dd x) V(x, \dd y)$.

With respect to a dominating measure on $\mathcal X$, let $\pi(x) = \gamma(x) / Z$ be a target density known up to its normalizing constant $Z = \int \gamma(x) \dd x$. 
We consider $\pi$-invariant (and therefore $\gamma$-invariant) Markov kernels $K(x, \cdot)$: $\pi K = \pi$. 
Couplings over $K$ are defined as joint constructions $X', Y' \sim \bar{K}(x, y, \cdot, \cdot)$, which marginally verify $(X' \mid x) \sim K(x, \cdot)$ and $(Y' \mid y) \sim K(y, \cdot)$.
We write $K^t(x, \cdot) = \int \int \cdots \int K(x_{t}, \cdot) K(x_{t-1}, \dd x_t) \cdots K(x, \dd x_1)$ for $K$ applied $t$ times, with $K^1 = K$ and $K^0 = \delta_x(\dd x)$ being the identity kernel.

We use subscript for time index and superscript for sample index: $X^n_t$ is the particle $n$ evolved until time $t$. 
Collections over both indices are written as $X_{0:t}^{1:N}$, the set of all $N$ `trajectories' $(X^n_{0:t})_{n=1}^N$ over the time interval $(0:t) = 0, \ldots, t$.

In this article, Markov chains are started using samples from a tractable distribution $\mu_0$, and we write $\mu_t = \mu_0 K^t$ for the marginal distribution of the chain at time $t$.
Gaussian distributions are denoted as $\mathcal{N}(\mu, \Sigma)$, where $\mu$ and $\Sigma$ are the mean and covariance of the distribution, respectively. 
We do not distinguish between the multivariate and the univariate cases.
The vector of $n$ ones is denoted $1_n$, the vector of $n$ zeroes $0_n$, and we write $I_n$ for the identity matrix of size $n$.

\section{Importance weighting and Markov chain}\label{sec:importance-weighting}
\subsection{\texorpdfstring{$f$}{f}-divergences and effective sample size}\label{subsec:divergences}
We first recall the definition of $f$-divergences, apply it to discrete representations, and define the notation of effective sample size.
\begin{definition}
	\label{def:f_divergence}
	Let $\mu$ and $\pi$ be two mutually absolutely continuous probability distributions on $\mathcal X$. Let $f:[0, \infty) \to [-\infty, \infty]$ be a convex function such that $f(1) = 0$. The $f$-divergence of $\pi$ with respect to $\mu$ is defined by
\begin{equation}
D_f(\pi\|\mu) = \int \mu(\dd x) f\curly*{\frac{\dd \pi}{\dd \mu}(x)}.
\end{equation}
\end{definition}
\begin{example}\label{ex:divergences}
    Typical examples of $f$-divergences include the Rényi divergences for $f(t) = \{t^{\alpha} - \alpha t - (1-\alpha)\}/\{ \alpha (\alpha-1) \}$, total variation distance for $f(t) = \abs{t-1} / 2$, the Kullback--Leibler divergence for $f(t) = t\log t$, the reverse Kullback--Leibler for $f(t) = -\log t$, and the squared Hellinger distance for $f(t) = (\surd{t} - 1)^2 / 2$.
    (In these examples, the value of $f$ at $0$ is implicitly defined as $\lim_{t \to 0^+} f(t)$ if needed, and could be equal to plus or minus infinity.)
    We will pay particular attention to the chi-squared distance, denoted by $\chi^2$, which is the Rényi divergence for $\alpha=2$, i.e., $f(t) = (t-1)^2$.
\end{example}

The following immediate lemma applies Definition~\ref{def:f_divergence} to the case of discrete measures.
\begin{lemma}\label{lem:empirical-div}
	Let $W^1, \ldots, W^M$ be $M$ non-negative real numbers such that $\sum_{m=1}^M W^m = 1$. 
    Then, for any $M$ elements $X^1, \ldots, X^M$ in an arbitrary space $\mathcal X$,
	\begin{equation} 
    \label{eq:f-div-emp-meas}
	D_f\paren*{\sum_{m=1}^M W^m \delta_{X^m},
		\sum_{m=1}^M \frac 1M \delta_{X^m}}
	= \frac 1M \sum_{m=1}^M f(MW^m).
	\end{equation}
	In particular, for the chi-squared distance where $f(t) = (t-1)^2$,
	\begin{equation}
	\label{eq:chi_squared_discrete}
	\chi^2\paren*{\sum_{m=1}^M W^m \delta_{X^m},
		\sum_{m=1}^M \frac 1M \delta_{X^m}} = M \sum_{m=1}^M (W^m)^2 - 1.
	\end{equation}
\end{lemma}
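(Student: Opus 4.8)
The plan is to recognize the uniform empirical measure as a dominating measure and simply read off the Radon--Nikodym derivative. Write $\mu = \frac1N\sum_{n=1}^N \delta_{X^n}$ and $\pi = \sum_{n=1}^N W^n \delta_{X^n}$ for the two discrete measures appearing in~\eqref{eq:f-div-emp-meas}. Assuming first that $X^1,\dots,X^N$ are pairwise distinct, $\mu$ puts mass $1/N$ on each atom $X^n$ while $\pi$ puts mass $W^n$; since $W^n\ge 0$, this gives $\pi\ll\mu$ with $\frac{\dd\pi}{\dd\mu}(X^n) = NW^n$, and the atoms carry all of $\mu$'s mass. Substituting into Definition~\ref{def:f_divergence} collapses the integral to a finite sum over the atoms,
\begin{equation}
D_f(\pi\,\|\,\mu) = \sum_{n=1}^N \mu\paren*{\curly{X^n}}\, f\paren*{NW^n} = \frac1N\sum_{n=1}^N f(NW^n),
\end{equation}
which is~\eqref{eq:f-div-emp-meas}. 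No integrability issue arises: each summand is $f(NW^n)/N$, a fixed finite multiple of a value of $f$, with the usual convention $f(0) = \lim_{t\to0^+}f(t)$ when some $W^n = 0$, so the identity is valid in $[-\infty,\infty]$.

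For the chi-squared specialization I would then just compute: with $f(t) = (t-1)^2$ we have $f(NW^n) = N^2 (W^n)^2 - 2NW^n + 1$, so summing over $n$, dividing by $N$, and using $\sum_{n=1}^N W^n = 1$ gives $N\sum_{n=1}^N (W^n)^2 - 2 + 1 = N\sum_{n=1}^N (W^n)^2 - 1$, which is~\eqref{eq:chi_squared_discrete}.

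The only point requiring care — and the closest thing to an obstacle in an otherwise immediate argument — is the possibility that some of the $X^n$ coincide. Taken literally, coincident atoms must have their weights aggregated before forming the ratio $\dd\pi/\dd\mu$, and the clean per-index formula $\frac1N\sum_n f(NW^n)$ need not survive unless $f$ is affine (e.g.\ by Jensen, merging two atoms with weights $W^i\neq W^j$ changes the corresponding contribution). The natural resolution, consistent with how the lemma is used subsequently, is to treat the $N$ samples as labelled particles: equivalently, to work with the pushforward of the normalized counting measure on $\curly{1,\dots,N}$ under the map $n\mapsto X^n$, or, what amounts to the same thing, to state and apply the lemma under the standing assumption that the $X^n$ are distinct, which holds almost surely in all the continuous settings considered here. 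Under either reading the computation above goes through verbatim.
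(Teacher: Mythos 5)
Your proof is correct and matches the paper's treatment, which presents the lemma as an immediate application of Definition~\ref{def:f_divergence}: the Radon--Nikodym derivative of the weighted empirical measure with respect to the uniform one is $NW^n$ at each atom, and the integral collapses to the stated sum, with the $\chi^2$ case following by direct expansion. Your observation about coincident atoms is a fair caveat that the paper glosses over (the identity as written implicitly treats the samples as labelled, or distinct, which holds almost surely in the continuous settings considered), but it does not change the argument.
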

This lemma is particularly appealing in the scenario where $X^1, \ldots, X^M$ are independent draws from $\mu$ and the weights $W^m$ are proportional to $\dd \pi/\dd\mu (X^m)$.
Then we expect~\eqref{eq:f-div-emp-meas} to converge to $D_f(\pi\|\mu)$ in probability under appropriate conditions.
For instance, it is straightforward to show that~\eqref{eq:chi_squared_discrete} converges to $\chi^2(\pi\|\mu)$ if this last quantity is finite.
In Section~\ref{subsect:weighted_as_diag}, we shall investigate the more general situation in which $X^1, \ldots, X^M$ are not necessarily independent and the weights $W^m$ are not necessarily proportional to the Radon-Nikodym derivative $\dd\pi/\dd \mu$ evaluated at $X^m$.

An alternative way of characterizing the chi-squared distance on discrete measures is the effective sample size~\citep{kong1994sequential}, defined on the weights $W^{1:M}$ as
\begin{equation}\label{eq:ess-is}
\ess(W^{1:M}) = \frac{1}{\sum_{m=1}^M (W^m)^2},
\end{equation}
which relates to the ``theoretical'' effective sample size
\begin{equation}\label{eq:ess-theo}
\ess^*(M) = \frac{M}{\chi^2\paren*{\pi\| \mu} + 1}.
\end{equation}
The more `uniform' the weights are, the higher the effective sample size is, and the lower the discrete chi-squared divergence in~\eqref{eq:chi_squared_discrete} becomes.
In particular, a zero chi-squared divergence corresponds to an effective sample size of $M$, the maximum possible value, obtained when all the weights are equal to $1/M$.
The effective sample size can be interpreted as quantifying the number of active particles~\citep[Section 8.6]{chopin2020book}: if $k$ weights are zero, and the rest are all equal to one another, the resulting effective sample size will be $M-k$.

\begin{assumption}\label{ass:f-differentiability}
    The function $f$ is continuous on $(0, \infty)$. Moreover, it is continuously differentiable everywhere on $(0, \infty)$ except for at most one point $c$, at which we define $f'(c)= \curly{\lim_{t \to c^-} f'(t) + \lim_{t \to c^+} f'(t)}/2$.
\end{assumption}
We assume that this condition holds throughout the paper.
All the divergences in Example~\ref{ex:divergences} satisfy this assumption.

\subsection{Weighted approximations as diagnostics}
\label{subsect:weighted_as_diag}
A central object in our methodology is the class of weighted approximations of the target distribution.
Let $X^1, \ldots, X^M$ be $M$ (not necessarily independent) identically distributed draws from a distribution $\mu$. 
We are most interested in the case where $\mu = \mu_t = \mu_0 K^t$ is the distribution of a Markov chain state after $t$ iterations of a $\pi$-invariant MCMC kernel $K$ when started at $\mu_0$. 
Suppose further that we have $M$ random weights $W^1, \ldots, W^M$ such that $\sum_{m=1}^M W^m = 1$ and
\begin{equation} 
\label{eq:weighted_representation}
\sum_{m=1}^M W^m \varphi(X^m) \rightarrow \int \varphi(x) \pi(\dd x)
\end{equation}
converges in probability, for all functions $\varphi$ in a reasonably large function class $\Phi$. 
We say that $\sum_{m=1}^M W^m \delta_{X^m}$ is a \textit{weighted representation} of $\pi$. 
Writing $\psi = \dd \pi/\dd \mu$ for the Radon--Nikodym derivative of $\pi$ with respect to $\mu$, we state a theorem showing how this consistent approximation can be turned into a convergence diagnostic.

\begin{assumption}\label{ass:unweighted-LLN}
    The two functions $x \mapsto f\{\psi(x)\}$ and $x \mapsto \psi(x) f'\{\psi(x)\}$ are finitely integrable under $\mu$, and
    the unweighted empirical measure satisfies the weak law of large numbers for them:
    \begin{align}
    \frac 1M \sum_{m=1}^M f\{\psi(X^m)\} \to \int f\{\psi(x)\} \mu(\dd x), \quad
    \frac 1M \sum_{m=1}^M \psi(X^m) f'\{\psi(X^m)\} \to \int \psi(x) f'\{\psi(x)\} \mu(\dd x),
    \end{align}
    where both convergences are in probability for $M \to \infty$.
\end{assumption}

\begin{assumption}\label{ass:weighted-convergence}
    The integral $\int f'\{\psi(x)\} \pi(\dd x)$ is finite, and 
    the weights $W^m$ are such that
    \[
    \sum_{m=1}^M W^m f'\{\psi(X^m)\} \to \int f'\{\psi(x)\} \pi(\dd x)
    \]
    in probability as $M \to \infty$.
\end{assumption}

\begin{theorem}\label{thm:weighted_as_diag}
    Let $f:[0, \infty) \to [-\infty, \infty]$ be a convex function such that $f(1) = 0$ and Assumption~\ref{ass:f-differentiability} holds.
    Let $\mu$ and $\pi$ be two mutually absolutely continuous distributions, and
    let $X^1, \ldots, X^M$ be M (not necessarily independent) identically distributed draws from $\mu$. 
    Suppose that Assumptions~\ref{ass:unweighted-LLN} and~\ref{ass:weighted-convergence} are satisfied for some random variables $W^1, \ldots, W^M$.
    Then,
    for any fixed $\varepsilon > 0$ we have, as $M \to \infty$,
    \[
    \Prob\curly*{\frac 1M \sum_{m=1}^M f(MW^m) \leq D_f(\pi\|\mu) - \varepsilon} \rightarrow 0.
    \]
\end{theorem}
In plain English, Theorem~\ref{thm:weighted_as_diag} states that the probability that a consistent weighted approximation of~\eqref{eq:f-div-emp-meas} gives an \emph{underestimate} of an $f$-divergence is vanishingly small, and thus it can serve as an \emph{upper bound} with high probability.

\subsection{Naive approximation in MCMC and its suboptimality}
\label{subsec:naive}
Consider $M$ independent parallel MCMC chains $(X_{0:T}^m)$ for $m=1, \ldots, M$ where $X_0 \sim \mu_0$ and $X_{t+1}^m \sim K(x_t^m, \cdot)$. 
Write
\[
W_0^m = \frac{(\dd \pi/\dd \mu_0)(X_0^m)}{\sum_{i=1}^M (\dd \pi/\dd \mu_0)(X_0^i)},
\]
which is tractable because the quantity does not depend on the normalizing constant of $\pi$.
Let $\varphi$ be a test function such that $\int |\varphi| \dd \pi < \infty$.
Then, by applying the law of large number to the function $(x_0, x_t) \mapsto (\dd \pi/\dd \mu_0)(x_0) \varphi(x_t)$ and the measure $\mu_0(x_0) K^t(x_0, \dd x_t)$, we have the convergence in probability
\[
\sum_{m=1}^M W_0^m \varphi(X_t^m) \to \int \varphi(x) \pi(\dd x)
\]
as $M \to \infty$, for any fixed $t$.
Therefore, Theorem~\ref{thm:weighted_as_diag} says that the $f$-divergence of $\pi$ with respect to $\mu_t$, for \textit{any} $t$, can be asymptotically upper-bounded by $\sum_{m=1}^M f(MW_0^m) / M$.
This bound is suboptimal: it does not vary in $t$ and ignores the mixing of the Markov chain.
We now present a coupling scheme where the weights are `harmonized' as the chain progresses, reflecting its mixing.

\section{Weight-harmonization via couplings}\label{sec:weight-harmonization}
\subsection{Couplings of Markov kernels}
In Section~\ref{subsec:naive}, we considered MCMC chains running independently in parallel, an approach that precludes information sharing. 
The key component of our method relies instead on pairwise interactions between two sets of $N$ chains. 
When the states of two Markov chains carrying different weights coincide (we say that they couple or meet), the chains become indistinguishable and consequently average their accumulated weights.

However, for target distributions with Lebesgue density in $\bbR^d$ simulated via popular MCMC algorithms, such as most Metropolis--Hastings-corrected dynamics~\citep{metropolis1953equation,hastings1970mcmc}, the states of two distinct chains will never naturally coincide.
Formally, for continuous state-spaces the \textit{independent coupling} $\bar{K}_{\textrm{ind}}(x, y, \dd x', \dd y') = K(x, \dd x') K(y, \dd y')$ has a zero meeting probability: $\Prob(X'=Y'\mid X,Y) = 0$ for any $X \neq Y$.

To resolve this, we instead consider couplings $\bar{K}(x,y, \dd x', \dd y')$ designed such that the meeting probability can be positive, while each chain taken in isolation behaves as an ordinary Markov chain: the marginal $\bar{K}(x, y, \dd x') = K(x, \dd x')$ does not depend on $y$ and $\bar{K}(x, y, \dd y') = K(y, \dd y')$ does not depend on $x$. 
In Supplement~\ref{app:couplings}, we detail couplings for usual MCMC algorithms.

\subsection{Equalizing weights of coupled particles}\label{subsec:construction}
Because our scheme relies on the notion of `couples', it is more convenient to work with a system of $M$ draws where $M$ is even.
In the rest of this paper we assume that this condition holds, and let $N$ be such that $M=2N$.

Consider a $\pi$-invariant kernel $K(x, \dd x')$ and a coupling $\bar{K}$ of $K$ with itself.
At initialization ($t=0$), we simulate $X_0^1, \ldots, X_0^{2N}$ i.i.d. from an initial distribution $\mu_0$.
Recalling that $\gamma$ is the un-normalized density of $\pi$ that we have access to, we define the un-normalized weights $w_0^m = \gamma(X_0^m)/\mu_0(X_0^m)$ for $m \in \curly{1, \ldots, 2N}$ as well as their normalized counterparts $W_0^m = w_0^m/\sum_{i=0}^M w_0^i$.

Now, for $t\geq0$, let $\sum_{m=1}^{2N} W_{t}^m \delta_{X_t^m}$ be a weighted representation of $\pi$, in the sense of~\eqref{eq:weighted_representation}, for a population $X^{1:2N}_t$ distributed identically according to $\mu_t = \mu_0 K^t$.
We obtain $(X_{t+1}^{n}, X_{t+1}^{n + N})$ by applying the kernel $\bar{K}$ to the pairs $(X_t^n, X_t^{n + N})$, independently for $n = 1, \ldots, N$.
This gives a new particle representation of $\pi$ as $\sum_{m=1}^{2N} W_{t}^m \delta_{X_{t+1}^m}$.

Assume that, under the coupled simulation, we obtain $X_{t+1}^j = X_{t+1}^{j + N}$ for some $j \in \{1, \ldots, N\}$.
Then 
\begin{equation}
  W^j_{t} \delta_{X_{t+1}^j} + W^{j + N}_{t} \delta_{X_{t+1}^{j+N}} = 
  \curly*{\alpha W^j_{t} + (1 - \alpha)W^{j + N}_{t}} \delta_{X_{t+1}^j}+
  \curly*{(1-\alpha)W^j_{t} + \alpha W^{j + N}_{t}} \delta_{X_{t+1}^{j+N}}
\end{equation}
for any $\alpha \in [0, 1]$. 
Taking $\alpha = 1/2$ in what follows,
we can therefore set, for $n \in \{1, \ldots, N \}$,
\begin{equation}
\label{eq:weight-update}
\begin{cases}
    W_{t+1}^n = W_{t+1}^{n+N} = (W_t^n + W_t^{n+N})/2 &\text{ if } X_{t+1}^n = X_{t+1}^{n+N},\\
    W_{t+1}^n = W_t^n; W_{t+1}^{n+N} = W_{t}^{n+N} &\text{ otherwise.}
\end{cases}
\end{equation}
This combines the weights of the two particles into a unique shared weight, reducing the variability of the weights and hopefully decreasing the $f$-divergence upper bound as per Theorem~\ref{thm:weighted_as_diag}.
Clearly, more indices $n$ that couple lead to a greater improvement in the effective sample size~\eqref{eq:ess-is}.
In Supplement~\ref{app:thm1bis}, we derive an 
upper bound on the maximum improvement of effective sample size we can achieve after one step of weight harmonization and justify the choice $\alpha=1/2$.
\begin{remark}\label{rem:unnormalized_weights}
    Under the weight update rule~\eqref{eq:weight-update}, we have $\sum_{m=1}^{2N} W_{t+1}^{m} = \sum_{m=1}^{2N} W_t^{m} = 1$.
    While we have written the procedure over normalized weights, the same operation can be performed over un-normalized weights:
\begin{equation}
\begin{cases}
    w_{t+1}^n = w_{t+1}^{n+N} = (w_t^n + w_t^{n+N})/2 &\text{ if } X_{t+1}^n = X_{t+1}^{n+N},\\
    w_{t+1}^n = w_t^n; w_{t+1}^{n+N} = w_{t}^{n+N} &\text{ otherwise.}
\end{cases}
\end{equation}
  Consequently, the sum of the weights is unchanged by the harmonization operation: $\sum_{m=1}^{2N} w^m_{t+1} = \sum_{m=1}^{2N} w_{t}^m$.
  In particular, computing
  $\tilde{W}^m_{t+1} = w^m_{t+1} / \sum_{i=1}^{2N} w^i_{t+1}$ post-normalization yields the same result as directly applying the update on the normalized weights.
\end{remark}

\subsection{Exchanging pairs of particles}\label{subsec:weight-harmonization}
Coupling the same particles over and over again will result in $N$ pairs of equal particles, but will not modify the weights of the particles across these pairs.
In our notations, repeatedly coupling $X_t^n$ and $X_t^{n+N}$ for all $t$ might eventually result in $W_t^n = W_t^{n+N}$ for $t$ large enough, but not $W_t^n = W_t^m$ for $1 \leq m < n \leq N$ since the $m$-th and $n$-th chains would have no chance to communicate.
We introduce exchange of information by randomizing the pairings of particles.
\begin{figure}[t!]
    \centering
    \resizebox{\linewidth}{!}{
    \begin{tikzpicture}[
    node distance=0.35cm and 1.5cm,
    cell/.style={
        rectangle, 
        draw, 
        thick, 
        minimum width=2.8cm, 
        align=center, 
        font=\normalsize,
        text height=2.8ex, %
        text depth=1.5ex   %
    },
    kernel/.style={
        circle, 
        draw, 
        thick, 
        align=center, 
        inner sep=5pt
    },
    bracket/.style={
        decorate, 
        thick,
        decoration={brace, 
                    amplitude=5pt,
                    mirror}
    }
]

\node[align=center] (title1) {Time $t$};

\node[cell, fill=gray!20, below=of title1] (c1_t) {$X_t^1, W_t^1$};
\node[cell, fill=gray!20, below=of c1_t] (c2_t) {$X_t^2, W_t^2$};
\node[cell, below=of c2_t] (c3_t) {$X_t^3, W_t^3$};
\node[cell, below=of c3_t] (c4_t) {$X_t^4, W_t^4$};

\draw[bracket] ([xshift=-5pt]c1_t.north west) -- ([xshift=-5pt]c2_t.south west)
    node[midway, xshift=-10pt, rotate=90] {$1:N$};
    
\draw[bracket] ([xshift=-5pt]c3_t.north west) -- ([xshift=-5pt]c4_t.south west)
    node[midway, xshift=-10pt, rotate=90] {$N+1:2N$};

\node[kernel, right=of c2_t, anchor=center] (k13) {$\bar{K}$};
\node[kernel, right=of c3_t, anchor=center] (k24) {$\bar{K}$};
\node[align=center, right=of title1, anchor=west] (title2) {Coupling};

\draw[-{Stealth[length=2mm]}] (c1_t.east) -- (k13.north west);
\draw[-{Stealth[length=2mm]}] (c3_t.east) -- (k13.south west);
\draw[-{Stealth[length=2mm]}] (c2_t.east) -- (k24.north west);
\draw[-{Stealth[length=2mm]}] (c4_t.east) -- (k24.south west);

\node[cell, fill=gray!20, right=of c1_t, xshift=2cm] (c1_a) {$X_{t+1}^1, \frac{W_t^1+W_t^3}{2}$};
\node[cell, fill=gray!20, right=of c2_t, xshift=2cm] (c2_a) {$X_{t+1}^2, \frac{W_t^2+W_t^4}{2}$};
\node[cell, right=of c3_t, xshift=2cm] (c3_a) {$X_{t+1}^3, \frac{W_t^1+W_t^3}{2}$};
\node[cell, right=of c4_t, xshift=2cm] (c4_a) {$X_{t+1}^4, \frac{W_t^2+W_t^4}{2}$};

\node[align=center, above=of c1_a] (title3) {Weight Averaging};

\coordinate (split1) at ($(k13.east)!0.5!(c1_a.west |- k13.east)$);
\draw (k13.east) -- (split1);
\draw[-{Stealth[length=2mm]}] (split1) -- (c1_a.west);
\draw[-{Stealth[length=2mm]}] (split1) -- (c3_a.west);

\coordinate (split2) at ($(k24.east)!0.5!(c2_a.west |- k24.east)$);
\draw (k24.east) -- (split2);
\draw[-{Stealth[length=2mm]}] (split2) -- (c2_a.west);
\draw[-{Stealth[length=2mm]}] (split2) -- (c4_a.west);

\node[cell, fill=gray!20, right=of c1_a] (c2_s) {$X_{t+1}^2, \frac{W_t^2+W_t^4}{2}$};
\node[cell, fill=gray!20, right=of c2_a] (c1_s) {$X_{t+1}^1, \frac{W_t^1+W_t^3}{2}$};
\node[cell, right=of c3_a] (c3_s) {$X_{t+1}^3, \frac{W_t^1+W_t^3}{2}$};
\node[cell, right=of c4_a] (c4_s) {$X_{t+1}^4, \frac{W_t^2+W_t^4}{2}$};

\node[align=center, above=of c2_s] (title4) {Time $t+1$};

\draw[-{Stealth[length=2mm]}, thick, dashed] (c1_a.east) to[bend left] node[pos=0.38, above=0.4cm, font=\small]{swap} (c1_s.west);
\draw[-{Stealth[length=2mm]}, thick, dashed] (c2_a.east) to[bend right] (c2_s.west);
\draw[-{Stealth[length=2mm]}] (c3_a.east) -- (c3_s.west);
\draw[-{Stealth[length=2mm]}] (c4_a.east) -- (c4_s.west);

\node[kernel, right=of c1_s, anchor=center] (ks13) {$\bar{K}$};
\node[kernel, right=of c3_s, anchor=center] (ks24) {$\bar{K}$};
\node[align=center, right=of title4, anchor=west] (title5) {Coupling};

\draw[-{Stealth[length=2mm]}] (c1_s.east) -- (ks24.north west);
\draw[-{Stealth[length=2mm]}] (c3_s.east) -- (ks13.south west);
\draw[-{Stealth[length=2mm]}] (c2_s.east) -- (ks13.north west);
\draw[-{Stealth[length=2mm]}] (c4_s.east) -- (ks24.south west);

\coordinate (end_x) at ([xshift=1.5cm]ks13.east);

\coordinate (split3) at ($(ks13.east)!0.5!(end_x |- ks13.east)$);
\draw[-{Stealth[length=2mm]}, dashed] (ks13.east) -- (split3);

\coordinate (split4) at ($(ks24.east)!0.5!(end_x |- ks24.east)$);
\draw[-{Stealth[length=2mm]}, dashed] (ks24.east) -- (split4);

\end{tikzpicture}
    }
    \caption{Step of Algorithm~\ref{alg:weight-harmonization} for $2N = 4$ particles and successful couplings: $X_{t+1}^1 = X_{t+1}^3$ and $X_{t+1}^2 = X_{t+1}^4$.}
    \label{fig:illustration}
\end{figure}
More precisely, given the particles $X_{t+1}^{1:N}$, we do not couple the one at index $n$ with the one at index $n+N$, but the one with index $n$ with the one at index ${A_{t+1}^n+N}$ for a time-changing permutation $A_s^{1:N}$ of $\{1, \ldots, N\}$, $s \geq 0$. 
We determine $A_{t+1}^{1:N}$ from the previous pairing $A_t^{1:N}$ with two objectives in mind: pairs that have not yet coupled should be left alone, and pairs that have coupled should be exchanged whenever possible. 
This amounts to setting $A_{t+1}^{1:N}$ to a modification of the array $A_t^{1:N}$ where the subset of coupled indices is permuted and the rest of the indices are left intact.
The algorithmic description of the method is given in Algorithm~\ref{alg:weight-harmonization} while a more intuitive visual illustration is given in Figure~\ref{fig:illustration}.

\begin{remark}\label{rem:unnormalized_weights_algorithm}
    We have written Algorithm~\ref{alg:weight-harmonization} in terms of normalized weights $W^{1:2N}$, but as per Remark~\ref{rem:unnormalized_weights}, this can be written directly in terms of the un-normalized weights with no impact on the downstream representation.
\end{remark}
The initialization of the method is given by simply taking independent samples from an initial distribution $\mu_0$ and weighting them accordingly as $X^{n}_{0} \sim \mu_0(\dd x)$, $w_0^n = \gamma(X^n_0) / \mu_0(X^n_0)$, and $W_0^n = w_0^n / \sum_{m=1}^{2N} w_0^m$ for $n=1, \ldots, 2N$.

\begin{algorithm}
\DontPrintSemicolon
\caption{Weight-harmonization of MCMC simulations}
\label{alg:weight-harmonization}

\KwIn{Particles $X_t^{1:2N}$, weights $W_t^{1:2N}$, pairings $A_t^{1:N}$, and coupled kernel $\bar{K}$.}
\KwOut{Updated particles $X_{t+1}^{1:2N}$, weights $W_{t+1}^{1:2N}$, and pairings $A_{t+1}^{1:N}$.}
\BlankLine

\nl\textit{1. Couple particles and update weights}\;
$C \leftarrow \emptyset$\; \tcp{Collect indices of coupled pairs.}
\For{$n \leftarrow 1$ \KwTo $N$}{
    $m \leftarrow A_t^n$\;
    Sample $(X_{t+1}^n, X_{t+1}^{m+N}) \sim \bar{K}(X_t^n, X_t^{m+N}, \cdot, \cdot)$\;
    \If{$X_{t+1}^n = X_{t+1}^{m+N}$}{
        $w_{*} \leftarrow (W_t^n + W_t^{m+N}) / 2$\;
        $W_{t+1}^n \leftarrow w_{*}$; $W_{t+1}^{m+N} \leftarrow w_{*}$\;
        $C \leftarrow C \cup \{n\}$\;
    }
    \Else{
        $W_{t+1}^n \leftarrow W_t^n$; $W_{t+1}^{m+N} \leftarrow W_t^{m+N}$\;
    }
}
\BlankLine

\nl\textit{2. Reshuffle pairings for the next step}\;
$A_{t+1} \leftarrow A_t$\;
\If{$|C| > 1$}{
    Sample a derangement $\sigma$ of $\{ 1, \ldots, N \}$ such that $\sigma(n) = n$ for all $n \notin C_{t+1}$\;
    \ForEach{$n = 1, \ldots, N$}{
        $A_{t+1}^n \leftarrow A_t^{\sigma(n)}$\;
    }
}
\end{algorithm}

\subsection{Interpretation and usage}
At each step, the algorithm produces $2N$ draws $X_t^{1:2N}$ and weights $W_{t}^{1:2N}$.
These quantities can be used in at least two ways.
First, $\sum_{m=1}^{2N} W_t^m \varphi(X_t^m)$ is an approximation of $\int \varphi(x) \pi(\dd x)$.
Second, $\frac{1}{2N} \sum_{m=1}^{2N} f(2N W_t^m)$ is a probabilistic upper bound of $D_f(\pi\|\mu_t)$.
Moreover this upper bound converges to $0$ almost surely as $t \to \infty$.
This justifies the use of Algorithm~\ref{alg:weight-harmonization} in convergence diagnostics.

In Section~\ref{sec:theory} these statements will be rigorously formulated and proved under appropriate regularity conditions; see Theorems~\ref{thm:consistency} and~\ref{thm:as-convergence} and Corollaries~\ref{cor:diag-smooth} and~\ref{cor:diag_kl}.
Finally, in Supplement~\ref{app:weight-harmonization-fixed}, we detail an end-to-end implementation of Algorithm~\ref{alg:weight-harmonization} for a fixed horizon $T$.

\section{Theoretical results}
\label{sec:theory}
\subsection{Consistency for a large number of chains}
The following proposition shows that Algorithm~\ref{alg:weight-harmonization} preserves expectations.
\begin{proposition}[Invariance of expectations under Algorithm~\ref{alg:weight-harmonization}]\label{prop:expectation-invariance}
	Consider the un-normalized estimator $\hat{I}_{t,N}(\varphi) = \sum_{m=1}^{2N} w_t^m \varphi(X_t^m)$ at step $t$ of Algorithm~\ref{alg:weight-harmonization}, where the $w_t^n$ are the un-normalized weights as per Remark~\ref{rem:unnormalized_weights_algorithm}.
	If the Markov kernel $K$ admits $\pi$ as an invariant distribution, then, for any bounded function $\varphi$ and for all $t \ge 0$, the expectation
	\begin{equation}
	\begin{split}
	\Exp\curly*{\hat{I}_{t,N}(\varphi)}
	&= \Exp\curly*{\hat{I}_{0,N}(\varphi)} = 2N \int \varphi(x) \gamma(\dd x)
	\end{split}
	\end{equation}
	is constant over time.
\end{proposition}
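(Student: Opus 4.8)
The plan is to argue by induction on $t$, conditioning at each step on the natural filtration $\mathcal F_t = \sigma\curly*{X_s^{1:2N}, w_s^{1:2N}, A_s^{1:N} : 0 \le s \le t}$ and establishing the one-step identity $\Exp\curly*{\hat I_{t+1,N}(\varphi) \mid \mathcal F_t} = \hat I_{t,N}(K\varphi)$, from which the result follows upon taking expectations and using $\gamma$-invariance of $K$.

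First I would dispatch the base case $t = 0$. Since the $X_0^n$ are drawn independently from $\mu_0$ and $w_0^n = \gamma(X_0^n)/\mu_0(X_0^n)$, a one-line importance-sampling computation gives $\Exp\curly*{w_0^n \varphi(X_0^n)} = \int \varphi(x)\,\gamma(\dd x)$, and summing over $n = 1,\dots,2N$ yields the claim at $t=0$. Taking $\varphi \equiv 1$ this also shows $\Exp\curly*{\sum_{n} w_0^n} = 2NZ < \infty$; since $w_t^n \ge 0$ and $\sum_n w_t^n$ is left unchanged by Algorithm~\ref{alg:weight-harmonization} (Remark~\ref{rem:unnormalized_weights}), every $\hat I_{t,N}(\varphi)$ is integrable for bounded $\varphi$, so all conditional expectations below are well defined.

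For the induction step, fix a pair index $n \in \curly*{1,\dots,N}$ and write $m = A_t^n$, $a = w_t^n$, $b = w_t^{m+N}$. After sampling $(X_{t+1}^n, X_{t+1}^{m+N}) \sim \bar K(X_t^n, X_t^{m+N}, \cdot, \cdot)$, the contribution of this pair to $\hat I_{t+1,N}(\varphi)$ is $\tfrac{a+b}{2}\curly*{\varphi(X_{t+1}^n) + \varphi(X_{t+1}^{m+N})}$ when $X_{t+1}^n = X_{t+1}^{m+N}$ and $a\varphi(X_{t+1}^n) + b\varphi(X_{t+1}^{m+N})$ otherwise. The key observation --- and the one subtle point of the whole argument --- is that these two expressions coincide on the coupling event, because weight harmonization only ever averages the weights of two particles that sit at the same location, so $\tfrac{a+b}{2}\curly*{\varphi(z)+\varphi(z)} = a\varphi(z) + b\varphi(z)$ and the numerical value of the estimator is never altered. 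Hence the pair's contribution equals $a\varphi(X_{t+1}^n) + b\varphi(X_{t+1}^{m+N})$ unconditionally, and taking $\Exp\curly*{\cdot \mid \mathcal F_t}$ together with the marginal property of the coupling, $\Exp\curly*{\varphi(X_{t+1}^n)\mid \mathcal F_t} = (K\varphi)(X_t^n)$ and likewise for $X_{t+1}^{m+N}$, gives $a(K\varphi)(X_t^n) + b(K\varphi)(X_t^{m+N})$.

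Summing over $n = 1,\dots,N$ and using that $A_t$ is a permutation of $\curly*{1,\dots,N}$, so that $n \mapsto n$ and $n \mapsto A_t^n + N$ together enumerate $\curly*{1,\dots,2N}$ exactly once, I obtain $\Exp\curly*{\hat I_{t+1,N}(\varphi)\mid \mathcal F_t} = \sum_{n=1}^{2N} w_t^n (K\varphi)(X_t^n) = \hat I_{t,N}(K\varphi)$; note that Step 2 of the algorithm (the reshuffling of pairings) is irrelevant here, since it affects only $A_{t+1}$, not $X_{t+1}$ or $w_{t+1}$. Taking expectations and applying the induction hypothesis to the bounded function $K\varphi$ gives $\Exp\curly*{\hat I_{t+1,N}(\varphi)} = 2N\int (K\varphi)\,\dd\gamma$, and $\pi$-invariance of $K$ (equivalently $\gamma K = \gamma$, since $\gamma = Z\pi$) yields $\int (K\varphi)\,\dd\gamma = \int \varphi\,\dd\gamma$, closing the induction. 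The main work is the bookkeeping in the induction step; the only place genuinely requiring care is the remark that the weight-averaging step leaves the estimator unchanged, which is exactly what decouples the weight update from the marginal transition and lets the coupling's marginal property do the rest.
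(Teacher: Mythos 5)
Your proof is correct and follows essentially the same route as the paper's: induction on $t$, conditioning on the filtration, decomposing into the $N$ coupled pairs, observing that on the meeting event the weight averaging leaves the pair's contribution to the estimator unchanged (the paper phrases this as a vanishing correction integral over $\{x'=y'\}$), then invoking the marginal property of $\bar K$ and $\gamma$-invariance of $K$. Your explicit base-case computation and integrability remark are minor additions to the same argument.
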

In addition to expectations being invariant, they are consistent at any iteration of the algorithm.
\begin{theorem}[Consistency]\label{thm:consistency}
	For any initial distribution such that $\Exp_\pi(w_0^2)< \infty$ and function $\varphi$ such that $\Exp_\pi\curly*{\varphi(X)^4} < \infty$, we have
	\begin{align}
	\frac{1}{2N}\sum_{m=1}^{2N} w_t^m \varphi(X_t^m) {\rightarrow} \int \varphi(x) \gamma(\dd x), \quad
    \frac{1}{2N}\sum_{m=1}^{2N} W_t^m \varphi(X_t^m) {\rightarrow} \int \varphi(x) \pi(\dd x),
	\end{align}
	where the convergences are in probability.
\end{theorem}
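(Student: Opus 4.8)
The plan is to upgrade Proposition~\ref{prop:expectation-invariance} to an $L^2$ statement. Write $\bar I_t(\varphi) = \hat I_{t,N}(\varphi)/(2N) = (2N)^{-1}\sum_{n=1}^{2N} w_t^n\varphi(X_t^n)$ and let $\mathcal F_t$ be the $\sigma$-algebra generated by everything the algorithm has produced through time $t$. The conditional first-moment identity behind Proposition~\ref{prop:expectation-invariance} extends to the present $\varphi$ once the $L^2$ bounds below are in hand, so $\Exp[\bar I_t(\varphi)] = \int\varphi\,\dd\gamma$ exactly; it therefore suffices to show $\Var[\bar I_t(\varphi)]\to 0$, after which Chebyshev's inequality delivers the first convergence. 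The second follows by taking $\varphi\equiv 1$ to get $\bar I_t(1)\to Z$ in probability and applying Slutsky's lemma to $\sum_{n=1}^{2N} W_t^n\varphi(X_t^n) = \bar I_t(\varphi)/\bar I_t(1)$, which is valid because $Z>0$.

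The core is a one-step recursion for the second moment. Group the $2N$ indices into the $N$ current pairs $p = \{n, A_t^n + N\}$ and write $\hat I_{t+1}(\varphi) = \sum_p C_p$ for the induced decomposition. Since within a coupled pair the two descendants coincide, weight-averaging does not change a pair's contribution, so $C_p = w_t^n\varphi(X_{t+1}^n) + w_t^{A_t^n+N}\varphi(X_{t+1}^{A_t^n+N})$ regardless of whether the pair met, and the marginal property of $\bar K$ gives $\Exp[C_p\mid\mathcal F_t] = w_t^n(K\varphi)(X_t^n) + w_t^{A_t^n+N}(K\varphi)(X_t^{A_t^n+N})$; summing recovers the identity $\Exp[\hat I_{t+1}(\varphi)\mid\mathcal F_t] = \hat I_t(K\varphi)$ underlying Proposition~\ref{prop:expectation-invariance}. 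Distinct pairs evolve through independent applications of $\bar K$, hence are conditionally independent given $\mathcal F_t$, so the cross terms telescope and, using $\Var[C_p\mid\mathcal F_t]\le\Exp[C_p^2\mid\mathcal F_t]\le 2(w_t^n)^2(K\varphi^2)(X_t^n) + 2(w_t^{A_t^n+N})^2(K\varphi^2)(X_t^{A_t^n+N})$ (with $K\varphi^2$ the action of $K$ on $\varphi^2$) together with the fact that each index lies in exactly one pair,
\[
\Exp[\hat I_{t+1}(\varphi)^2\mid\mathcal F_t] = \hat I_t(K\varphi)^2 + \sum_p\Var[C_p\mid\mathcal F_t] \le \hat I_t(K\varphi)^2 + 2\sum_{k=1}^{2N}(w_t^k)^2(K\varphi^2)(X_t^k).
\]

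To control the remainder I would track $E_t(\psi) := \Exp\bigl[\sum_{k=1}^{2N}(w_t^k)^2\psi(X_t^k)\bigr]$ for nonnegative $\psi$. The elementary inequality $2\{(a+b)/2\}^2\le a^2+b^2$, together with $\psi\ge0$, shows that averaging the weights of a coupled pair can only decrease $\sum_k (w^k)^2\psi(X^k)$; taking $\mathcal F_t$-conditional expectations then gives $E_{t+1}(\psi)\le E_t(K\psi)$, hence $E_t(\psi)\le E_0(K^t\psi)$. At time $0$ the particles are i.i.d.\ from $\mu_0$ with $w_0 = \gamma/\mu_0$, so the change-of-measure identity $\Exp_{\mu_0}[(\gamma/\mu_0)^2 g] = Z\,\Exp_\pi[(\gamma/\mu_0)g]$ gives $E_0(\psi) = 2NZ\,\Exp_\pi[(\gamma/\mu_0)\psi]\le 2NZ\,\{\Exp_\pi(w_0^2)\,\Exp_\pi(\psi^2)\}^{1/2}$ by Cauchy--Schwarz. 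Taking $\psi$ of the form $K^s(\varphi^2)$ and using $\{K^s(\varphi^2)\}^2\le K^s(\varphi^4)$ with $\Exp_\pi[K^s(\cdot)] = \Exp_\pi(\cdot)$, this is $O(N)$ uniformly in $s$, so the remainder term above is $O(1/N)$. The analogous i.i.d.\ computation at $t=0$ gives $\Exp[\bar I_0(K^t\varphi)^2] = (\int\varphi\,\dd\gamma)^2 + O(1/N)$ (the variance of an i.i.d.\ average, finite by Cauchy--Schwarz and $\Exp_\pi[\varphi(X)^4]<\infty$, with $\int K^t\varphi\,\dd\gamma = \int\varphi\,\dd\gamma$ by $\gamma$-invariance). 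Iterating the recursion $\Exp[\bar I_{t+1}(\varphi)^2]\le\Exp[\bar I_t(K\varphi)^2] + O(1/N)$ down to $t=0$ then gives $\Exp[\bar I_t(\varphi)^2]\le(\int\varphi\,\dd\gamma)^2 + O_t(1/N)$, and therefore $\Var[\bar I_t(\varphi)] \to 0$.

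I expect the main obstacle to be the integrability bookkeeping rather than any single sharp estimate: one must verify that the two hypotheses $\Exp_\pi(w_0^2)<\infty$ and $\Exp_\pi[\varphi(X)^4]<\infty$ make every quantity above finite and $O(N)$. This rests on (i) the change-of-measure identity, which trades awkward $\mu_0$-moments of the weights for $\pi$-moments against which Cauchy--Schwarz can be applied, and (ii) Jensen's inequality $\{K^s g\}^q \le K^s(g^q)$ combined with $\pi$-invariance $\Exp_\pi[K^s(\cdot)] = \Exp_\pi(\cdot)$, which keeps the intermediate test functions $K^s\varphi$ under uniform $L^4(\pi)$ control in $s$. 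For full rigour one should also establish finiteness of all the $L^2$ norms involved before manipulating the conditional second moments, to avoid circularity.
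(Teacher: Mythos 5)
Your proposal is correct, and its technical core coincides with the paper's: the observation that weight harmonization leaves each pair's contribution $w^n\varphi(X^n)+w^m\varphi(X^m)$ unchanged, the conditional independence of pairs under $\bar K$, the bound of the conditional variance by $\sum_n (w^n)^2 (K\varphi^2)(X^n)$, the fact that averaging weights of met pairs can only decrease $\sum_n (w^n)^2\psi(X^n)$ (this is exactly the paper's Lemma~\ref{lem:jensen-coupling}), and the time-$0$ control via change of measure, Cauchy--Schwarz, and kernel Jensen with $\pi$-invariance (Lemma~\ref{lem:jensen-kernel}). Where you differ is the assembly: the paper runs an induction on $t$ in probability, applying the inductive hypothesis to the test function $K\varphi$ and controlling the increment $\hat\theta_t-\theta_{t-1}$ by a \emph{conditional} Chebyshev bound pushed back to time $0$, whereas you run an unconditional second-moment recursion $\Exp[\bar I_{t+1}(\varphi)^2]\le \Exp[\bar I_t(K\varphi)^2]+O(1/N)$ anchored by exact unbiasedness and conclude with a single Chebyshev step. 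Your route buys a slightly stronger, quantitative conclusion ($\Var[\bar I_t(\varphi)]=O_t(1/N)$) and makes the normalized statement explicit via $\varphi\equiv 1$ and Slutsky (the paper leaves that step implicit); its one extra obligation, which you correctly flag, is extending Proposition~\ref{prop:expectation-invariance} (stated for bounded $\varphi$) to the present unbounded $\varphi$ — this does go through under $\Exp_\pi(w_0^2)<\infty$ and $\Exp_\pi[\varphi(X)^4]<\infty$ since first moments telescope exactly as $\Exp[\hat I_t(|\varphi|)]=2N\int K^t|\varphi|\,\dd\gamma$, but the paper's induction-in-probability sidesteps the need for exact unbiasedness altogether.
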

The exact rate at which the convergence happens for different times $t$ is complex and may depend on the properties of the Markov chain, the coupling, and the target distribution. 
However it cannot worsen as $t$ increases: the variance of the weights cannot increase from iteration to iteration, as we shall prove.

\subsection{From consistency to diagnostics}\label{subsec:consistency-diagnostics}

Combining Theorem~\ref{thm:weighted_as_diag} and Theorem~\ref{thm:consistency}, we see that $\sum_{m=1}^{2N} f(2NW^m) /(2N)$ can be used as an asymptotic upper bound for $D_f(\pi \| \mu_t)$, where $\mu_t$ is the distribution of the MCMC chain at time $t$.
While it is not obvious to express Assumptions~\ref{ass:unweighted-LLN} and~\ref{ass:weighted-convergence} directly in terms of the initial distribution $\mu_0$ and the target distribution $\pi$, we point out some important cases in which these conditions are easily verified.
\begin{corollary}
\label{cor:diag-smooth}
    Suppose that the weights $w_0^n$ are bounded, i.e. there exists $L < \infty$ such that the Radon--Nikodym derivative $\dd \pi/\dd \mu_0$ satisfies $\dd \pi/\dd \mu_0 \leq L$ almost surely with respect to $\mu_0$.
    Consider an $f$-divergence such that $f$ is continuously differentiable at $0$. Then
    \[
    \Prob\curly*{\frac{1}{2N} \sum_{m=1}^{2N} f(2NW_t^m) \leq D_f(\pi\|\mu_t) - \varepsilon} \rightarrow 0.
    \]
\end{corollary}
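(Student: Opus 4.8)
The plan is to verify that the boundedness of $w_0^n$, together with the exponential weight harmonization from Theorem~\ref{thm:ergodic-weight-confergence}, is enough to discharge the abstract hypotheses (Assumptions~\ref{ass:weighted-convergence} and~\ref{ass:unweighted-LLN}) of Theorem~\ref{thm:weighted_as_diag}, applied with $\mu = \mu_t$, and then simply invoke that theorem. First I would note that $\psi_t := \dd\pi/\dd\mu_t$ satisfies $\psi_t \le M$ $\mu_t$-almost surely: this follows because $\pi K = \pi$ and $\mu_0 K^t = \mu_t$ imply, by the standard contraction of the Radon--Nikodym derivative under a Markov kernel (an application of Jensen's inequality / the data-processing structure), that $\|\dd\pi/\dd\mu_t\|_\infty \le \|\dd\pi/\dd\mu_0\|_\infty \le M$. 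Hence $\psi_t$ takes values in the compact interval $[0,M]$, on which $f$ is continuous and $f'$ is bounded by Assumption~\ref{ass:f-differentiability} together with the added hypothesis that $f$ is continuously differentiable at $0$ (so $f'$ extends continuously to $0$, and the at-most-one exceptional point $c$ does not obstruct boundedness on $[0,M]$).

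Next I would check Assumption~\ref{ass:unweighted-LLN}. The unweighted measure here is $\frac{1}{2N}\sum_{n=1}^{2N}\delta_{X_t^n}$, whose atoms $X_t^{1:2N}$ are marginally $\mu_t$-distributed (this is the marginal-preservation property of the coupling $\bar K$, established in Section~\ref{subsec:construction}). The two test functions $x\mapsto f\{\psi_t(x)\}$ and $x\mapsto \psi_t(x) f'\{\psi_t(x)\}$ are bounded on $[0,M]$ by the previous paragraph, so their empirical averages converge in probability to the corresponding $\mu_t$-integrals by the weak law of large numbers for triangular arrays of identically distributed (though dependent) bounded variables; the mild dependence induced by the pairwise couplings is harmless since boundedness already gives an $L^2$ bound and one only needs a variance-to-zero argument along the lines of Theorem~\ref{thm:consistency}. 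For Assumption~\ref{ass:weighted-convergence} I would apply Theorem~\ref{thm:consistency} (the consistency of the weighted system) to the bounded test function $\varphi = f'\circ\psi_t$: that theorem requires $E_\pi(w_0^2)<\infty$, which holds since $w_0$ is bounded, and $E_\pi[\varphi(X)^4]<\infty$, which holds since $\varphi$ is bounded; hence $\sum_{n=1}^{2N} W_t^n f'\{\psi_t(X_t^n)\} \to \int f'\{\psi_t\}\,\dd\pi$ in probability. With all three assumptions of Theorem~\ref{thm:weighted_as_diag} in force for $\mu=\mu_t$, the conclusion $\Prob\{\frac1N\sum f(NW_t^n) \le D_f(\pi\|\mu_t)-\varepsilon\}\to 0$ is immediate.

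The main obstacle I anticipate is the bookkeeping around the exceptional point $c$ of Assumption~\ref{ass:f-differentiability} and the behaviour of $f$ and $f'$ near the endpoints of $[0,M]$: one must confirm that adding ``$f$ continuously differentiable at $0$'' genuinely yields a uniform bound on $|f'|$ over $[0,M]$ (near $0$ this is the new hypothesis; near $M$ and at $c$ it is Assumption~\ref{ass:f-differentiability}) and hence a uniform bound on $|f|$ by the mean value theorem, so that the bounded-variable LLN truly applies — for instance, the reverse-$\kl$ choice $f(t)=-\log t$ is \emph{not} differentiable at $0$, which is exactly why the extra hypothesis is needed and why it must be used carefully. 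A secondary, more routine, point is making precise the reduction from the $2N$-particle indexing of Algorithm~\ref{alg:weight-harmonization} to the generic $N$-particle statement of Theorem~\ref{thm:weighted_as_diag}; this is purely notational, since $2N\to\infty$ iff $N\to\infty$.
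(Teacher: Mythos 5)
Your proposal is correct and follows essentially the same route as the paper: propagate the bound $\dd\pi/\dd\mu_t \leq M$ from time $0$ (the paper's Lemma~\ref{lem:propagation_finiteness}, which is exactly your Jensen/data-processing contraction), then verify Assumption~\ref{ass:weighted-convergence} by applying Theorem~\ref{thm:consistency} to the bounded function $f'\circ(\dd\pi/\dd\mu_t)$ and Assumption~\ref{ass:unweighted-LLN} by a law of large numbers for the bounded functions $f\circ\psi_t$ and $\psi_t\, f'\circ\psi_t$, before invoking Theorem~\ref{thm:weighted_as_diag}. The only cosmetic difference is that the paper justifies the unweighted law of large numbers via the exact independence of the particles within each half of the population $1{:}N$ and $N{+}1{:}2N$, whereas you argue through a variance-to-zero bound for the (boundedly) dependent array; both are valid for fixed $t$.
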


This corollary requires the continuous differentiability of $f$ at time $0$ and therefore is not applicable to the Kullback--Leibler divergence.
The following result addresses this shortcoming.
\begin{corollary}
\label{cor:diag_kl}
    Let the space be $\bbR^d$ and suppose that the initial distribution $\mu_0$ and the target distribution $\pi$ have densities with respect to the Lebesgue measure. 
    Assume that there exist finite $L_1$ and $L_2$ such that $\pi(x) \leq L_1 \mu_0(x)$ and $\mu_0(x) \leq L_2$. Suppose further that $\Exp_{\pi}\curly{(1+ \abs{\log \pi(X)})^4} < \infty$ and that the kernel $K$ is regular in the sense of Supplement~\ref{app:regular-kernel}. Then
    \[
    \Prob\curly*{\frac {1}{2N}\sum_{m=1}^{2N} f(2NW_t^m) \leq \kl(\pi\|\mu_t) - \varepsilon} \rightarrow 0
    \]
    for $f(t) = t\log t$ with the convention that $f(0) = 0$.
\end{corollary}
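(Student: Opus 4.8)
\textbf{Proof plan for Corollary~\ref{cor:diag_kl}.}
The plan is to verify that Assumptions~\ref{ass:weighted-convergence} and~\ref{ass:unweighted-LLN} hold for $f(t) = t\log t$ at the time-$t$ distribution $\mu_t$, so that Theorem~\ref{thm:weighted_as_diag} applies with $\mu = \mu_t$, $\psi = \dd\pi/\dd\mu_t$, and the harmonized weights $W_t^{1:2N}$ playing the role of $W^{1:N}$. The difficulty relative to Corollary~\ref{cor:diag-smooth} is that $f'(t) = \log t + 1$ blows up at $0$, so bounding the weights alone does not control $f'\{\psi(X^n)\}$; instead I must control $\psi = \dd\pi/\dd\mu_t$ from both above and below on the support of $\mu_t$, and show the relevant moments are finite.

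First I would establish two-sided bounds on $\psi$. The upper bound $\psi \le M_1$ follows from $\pi(x) \le M_1 \mu_0(x)$ together with the fact that $\pi K = \pi$ implies $\pi \le M_1 \mu_t$ pointwise: indeed $\pi(y) = \int K(x,y)\pi(x)\,\dd x \le M_1 \int K(x,y)\mu_0(x)\,\dd x = M_1 \mu_t(y)$ (this uses the density/regularity of $K$ so that $\mu_t$ has a density and $K(x,\cdot)$ has a density $K(x,y)$, which is where Appendix~\ref{app:regular-kernel} enters). For the lower bound on $\psi$ I would use $\mu_0(x) \le M_2$ to get $\mu_t \le M_2$ (again by the same $K$-smoothing argument), hence $\psi(x) = \pi(x)/\mu_t(x) \ge \pi(x)/M_2$. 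Consequently $\log \psi(x)$ lies between $\log(\pi(x)/M_2)$ and $\log M_1$, so $|f'\{\psi(x)\}| = |\log\psi(x) + 1| \le C(1 + |\log\pi(x)|)$ for a constant $C$ depending only on $M_1, M_2$, and similarly $|f\{\psi(x)\}| = |\psi(x)\log\psi(x)| \le M_1 |\log\psi(x)| \le C'(1 + |\log\pi(x)|)$ and $|\psi(x) f'\{\psi(x)\}| \le M_1 C (1 + |\log\pi(x)|)$.

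Next I would verify the three convergences. For Assumption~\ref{ass:unweighted-LLN}, the test functions $x \mapsto f\{\psi(x)\}$ and $x \mapsto \psi(x) f'\{\psi(x)\}$ are, by the previous paragraph, dominated by $C(1 + |\log\pi(x)|)$; since $\Exp_\pi\{(1+|\log\pi(X)|)^4\} < \infty$ and $\psi \le M_1$ gives $\Exp_{\mu_t}\{(1+|\log\pi(X)|)^2\} \le M_1 \Exp_\pi\{(1+|\log\pi(X)|)^2\} < \infty$, these functions are in $L^2(\mu_t)$, so the weak LLN for the unweighted empirical measure of the (exchangeable, identically $\mu_t$-distributed) particles $X_t^{1:2N}$ follows — this is exactly the content of Theorem~\ref{thm:consistency} applied with $\varphi = f\circ\psi$ and $\varphi = \psi\cdot(f'\circ\psi)$, whose fourth-moment condition is implied here since a function bounded by $C(1+|\log\pi|)$ has finite fourth $\pi$-moment by hypothesis, hence finite fourth $\mu_t$-moment after multiplying by the bounded density ratio. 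For Assumption~\ref{ass:weighted-convergence}, I need $\sum_n W_t^n f'\{\psi(X_t^n)\} \to \int f'\{\psi\}\,\dd\pi$; this is the \emph{weighted} LLN, again a case of Theorem~\ref{thm:consistency} (second display) with $\varphi = f'\circ\psi$, whose required $\Exp_\pi\{\varphi^4\} < \infty$ holds by the $|\log\pi|$-moment hypothesis, and whose required $\Exp_\pi(w_0^2) < \infty$ holds because $w_0 = \gamma/\mu_0 = Z\,\dd\pi/\dd\mu_0 \le ZM_1$ is bounded. With all three convergences in hand, Theorem~\ref{thm:weighted_as_diag} delivers the conclusion.

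The main obstacle is the passage from the pointwise hypotheses on $\mu_0$ and $\pi$ to the two-sided control of $\psi = \dd\pi/\dd\mu_t$ at a general intermediate time $t$: one must know that $\mu_t$ genuinely has a density (so that ratios and logarithms make sense) and that the inequalities $\pi \le M_1\mu_0$ and $\mu_0 \le M_2$ propagate through $t$ applications of $K$. This is precisely why the regularity-of-$K$ condition from Appendix~\ref{app:regular-kernel} is invoked; modulo that, the $\pi K = \pi$ identity does the rest, and the remaining work is routine domination to feed into Theorem~\ref{thm:consistency}.
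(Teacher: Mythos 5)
Your overall strategy is the same as the paper's: verify Assumptions~\ref{ass:weighted-convergence} and~\ref{ass:unweighted-LLN} for $f(t)=t\log t$ at $\mu=\mu_t$ by (i) propagating $\pi\leq M_1\mu_0$ to $\pi\leq M_1\mu_t$, (ii) bounding $\mu_t$ from above via the regularity of $K$ and $\mu_0\leq M_2$, so that $|\log\psi|\lesssim 1+|\log\pi|$, and (iii) feeding the $\Exp_\pi\{(1+|\log\pi(X)|)^4\}<\infty$ hypothesis into Theorem~\ref{thm:consistency} (with $w_0$ bounded) to invoke Theorem~\ref{thm:weighted_as_diag}. Two of your intermediate claims, however, are inaccurate as stated. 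First, $\mu_0\leq M_2$ does \emph{not} give $\mu_t\leq M_2$ ``by the same $K$-smoothing argument'': invariance propagates the bound $\pi\leq M_1\mu_t$, but sup-norm bounds are not preserved by a general $\pi$-invariant kernel. The correct statement, and the entire reason for Definition~\ref{def:kernel-regular}, is $\norm{\mu_t}_\infty\leq \norm{K}_\infty^t\,\norm{\mu_0}_\infty\leq \norm{K}_\infty^t M_2$; since $t$ is fixed this only changes your constants, so the conclusion survives, but the slip should be repaired (e.g.\ the paper's RWMH example has $\norm{K}_\infty\leq 2$, so the bound genuinely grows with $t$).

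Second, the unweighted law of large numbers in Assumption~\ref{ass:unweighted-LLN} is \emph{not} ``exactly the content of Theorem~\ref{thm:consistency}'': that theorem concerns the weighted sums $\sum_n w_t^n\varphi(X_t^n)/(2N)$ and $\sum_n W_t^n\varphi(X_t^n)$ converging to $\gamma$- and $\pi$-integrals, not the unweighted average $\frac{1}{2N}\sum_n\varphi(X_t^n)$ converging to $\mu_t$-integrals; nor does exchangeability alone yield a WLLN. The paper instead observes that the particles are mutually independent within each half ($1{:}N$ and $N{+}1{:}2N$), since the coupling only acts across halves, and applies the classical LLN to each half; with your integrability bounds this is a one-line fix, and incidentally for this part no lower bound on $\psi$ is even needed, since $t\log t$ and $t(\log t+1)$ are bounded on $[0,M_1]$. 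With these two repairs your argument matches the paper's proof (your propagation of $\pi\leq M_1\mu_t$ via $\pi K=\pi$ is a mild variant of the paper's Lemma~\ref{lem:propagation_finiteness}, which works with the joint-law Radon--Nikodym derivative and is cleaner for kernels with atoms such as Metropolis--Hastings).
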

In Supplement~\ref{app:corollary-kl-gauss} we explicitly verify the regularity conditions for a random walk Metropolis--Hastings kernel on a Gaussian target distribution.
\begin{remark}
Corollaries~\ref{cor:diag-smooth} and~\ref{cor:diag_kl} do not cover the \emph{reversed} $\kl$ divergence, given by $f(t) = -\log t$. The derivative magnitude $\abs{f'(t)} = 1/t$ blows up more rapidly at time $0$ than what can be compensated by usual moment conditions.
\end{remark}

\subsection{Time convergence of harmonization}\label{subsec:geom-harm}
Section~\ref{subsec:consistency-diagnostics} proved that, at any iteration of Algorithm~\ref{alg:weight-harmonization}, our weighted system of particles can be used to form a valid upper bound to the $f$-divergence $D_f(\pi \| \mu_t)$.
The following proposition ensures that this bound is furthermore non-increasing.
\begin{proposition}[Non-increasing $f$-divergence bounds]
	\label{prop:decreasing-bound}
	Let $W_t^n$ be the normalized weights at step $t$ of Algorithm~\ref{alg:weight-harmonization}.
	For any convex function $f$, the upper bound verifies
	\begin{equation}\label{eq:weight-decrease}
	\frac{1}{2N}\sum_{m=1}^{2N} f(2N W_{t+1}^m) \leq \frac{1}{2N}
    \sum_{m=1}^{2N} f(2N W_{t}^m)
	\end{equation}
    almost surely.
\end{proposition}
\begin{remark}
    This property is an empirical desirable counterpart to the data-processing inequality applied to Markov chains: for any $f$-divergence $D_f$, $D_f(\pi \| \mu_t)$ must be non-increasing with $t$~\citep[see Theorem 16.1.10 in][for the special case of the $\kl$ divergence, the $f$-divergence case being an immediate generalization]{elementsinformationtheory}.
\end{remark}

In order to make the diagnostic useful however, we need to prove that~\eqref{eq:weight-decrease} decreases to $0$ when the Markov chain converges to its stationary distribution.
We make the following assumptions.
\begin{assumption}
	\label{asp:ergodicity}
	The Markov kernel $K$ is ergodic, i.e.\ $\lim_{n\to\infty} \norm{\delta_x K^n -  \pi}_{\tv} = 0$ almost surely.
\end{assumption}
\begin{assumption}
\label{asp:compact_set}
    The coupled kernel $\bar{K}$ has a positive probability of coupling, i.e., for all $x, y \in \mathcal X^2$, we have
	\[ \Prob(X' = Y' \mid x, y) > 0 \]
	where the probability is taken over $\bar{K}(x, y, \dd x', \dd y')$.
    Additionally the function $(x, y) \mapsto \Prob(X' = Y' \mid x, y)$ is jointly continuous in $x$ and $y$.
\end{assumption}
\begin{assumption}
	\label{asp:finite_coupling_time}
	Under $\bar{K}$, the coupling time $\tau(x, y) = \inf\{t \mid X_t=Y_t, X_0=x, Y_0=y\}$ is almost surely finite for all $(x, y) \in \mathcal X^2$.
\end{assumption}
Assumptions~\ref{asp:compact_set} and~\ref{asp:finite_coupling_time} are relatively weak and expected to hold in most cases of interest (they for example hold for random-walk kernels, with the coupling of Section~\ref{subsec:reflection-maximal}).

\begin{theorem}\label{thm:as-convergence}
    Under Assumptions~\ref{asp:ergodicity},~\ref{asp:compact_set}, and~\ref{asp:finite_coupling_time}, the system of weights $W^{1:2N}_t$ converges to $\{1/(2N), \ldots 1/(2N)\}^{\top} \in \mathbb{R}^{2N}$ almost surely and in expectation.
    The diagnostic quantity
    $
    (2N)^{-1} \sum_{m=1}^{2N} f(2NW_t^m)$
also converges to $0$ almost surely, and when $f$ is continuous at $0$, in expectation.
\end{theorem}

While Theorem~\ref{thm:as-convergence} ensures convergence of the weights when the chain mixes well, it does assess how fast this is happening.
To get the rate of convergence, we introduce the following modification of Assumption~\ref{asp:compact_set}.
\begin{assumption}[Uniform Coupling]\label{ass:coupling}
	The coupled kernel $\bar{K}$ has a \emph{uniform} positive probability of coupling, i.e., there exists a constant $p_c > 0$ such that for any states $x, y$,
	\begin{equation*}
        \Prob(X' = Y' \mid x, y) \ge p_c,
	\end{equation*}
	where the probability is taken over $\bar{K}(x, y, \dd x', \dd y')$.
\end{assumption}
\begin{remark}
    This assumption informally states that the coupling time is at most a geometric distribution with parameter $p_c$ no matter where we start from.

    We note that the P\'olya--Gamma sampler~\citep{Polson01122013} used in Section~\ref{subsec:polya-gamma} verifies this~\citep[Appendix D]{jacob2020unbiasedmcmc}, as well as pseudo-marginal MCMC samplers~\citep[see, e.g., ][Proposition 10]{middleton2019unbiased} and independent Metropolis--Hastings-like samplers such as conditional sequential Monte Carlo methods~(see e.g., \citealt{jacob2020unbiasedsmoothing}, Lemma 1.1 in the supplementary material, or \citealt{lee2020coupled}, Theorems 8, 9).
    
	Still, this is admittedly an unrealistic assumption in many practical samplers, including simple autoregressive processes similar to Section~\ref{subsec:ornstein-uhlenbeck}, or the example considered in Section~\ref{subsec:stochvol}. However, it allows us to compute an informative rate of convergence that we expect to hold more generally.
\end{remark}
Under this assumption, the following theorem ascertains that the variance of the weights converges to $0$ exponentially fast as the number of MCMC iterations increases.
\begin{theorem}\label{thm:ergodic-weight-convergence}
	Let $W_t = (W_t^1, \dots, W_t^{2N})$ be the vector of weights at iteration $t$ of the weight-harmonization algorithm. 
    Under Assumption~\ref{ass:coupling}, as $t \to \infty$, the weight vector $W_t$ converges exponentially fast in mean square to the uniform weight vector $\bar{W} = \{1/(2N), \dots, 1/(2N)\}$, i.e.,
	\begin{equation*}
	\Exp\paren*{\norm{W_t - \bar{W}}_2^2} = O(\rho^{t/2})
	\end{equation*}
	with $\rho = 1 - p_c^3 / 4 < 1$.
\end{theorem}

\section{Numerical illustrations}\label{sec:exps}
\subsection{A fully tractable system}\label{subsec:ornstein-uhlenbeck}
We first turn to a case where all marginal distributions are tractable, and therefore the properties of the proposed method can be assessed without resorting to approximations.
When $\mu_0 \sim \mathcal{N}(\mu, \Sigma)$ and $\pi \sim \mathcal{N}(0, I)$ are Gaussian and $K(x, \dd y) = \mathcal{N}\curly{y; \rho x, (1 - \rho^2) I} \dd y$, for $\rho \in (0, 1)$, we have $\mu_t \sim \mathcal{N}\curly{\rho^t \mu, \rho^{2t} \Sigma + (1 - \rho^{2t}) I}$. 
As a consequence, it is possible to compute $\int f(\dd \pi / \dd \mu_t) \dd \mu_t$ for any $t \geq 0$ for several choices of $f$-divergences.

A natural choice of coupling $\bar{K}(\dd y', \dd x' \mid y, x)$ for $K$ is the \emph{reflection maximal coupling}~\citep[see, e.g.][]{bou2020coupling}, which we describe in Supplement~\ref{subsec:reflection-maximal}.
In order to analyse the relative efficiency of our method compared to the real weights, in Figure~\ref{fig:ess-gaussian}, we report the effective sample size profiles using our method versus the theoretical expected sample size as described in~\eqref{eq:ess-theo} for different values of $\rho$ and $N$, essentially measuring the gap in Theorem~\ref{thm:weighted_as_diag}.
In order to make them comparable, all samplers were rescaled into the same ``physical time'': for different $\rho$, $t \gets t (\log \rho) / (\log \rho_{\max})$.

\begin{figure}
    \centering
    \includegraphics[width=0.9\linewidth]{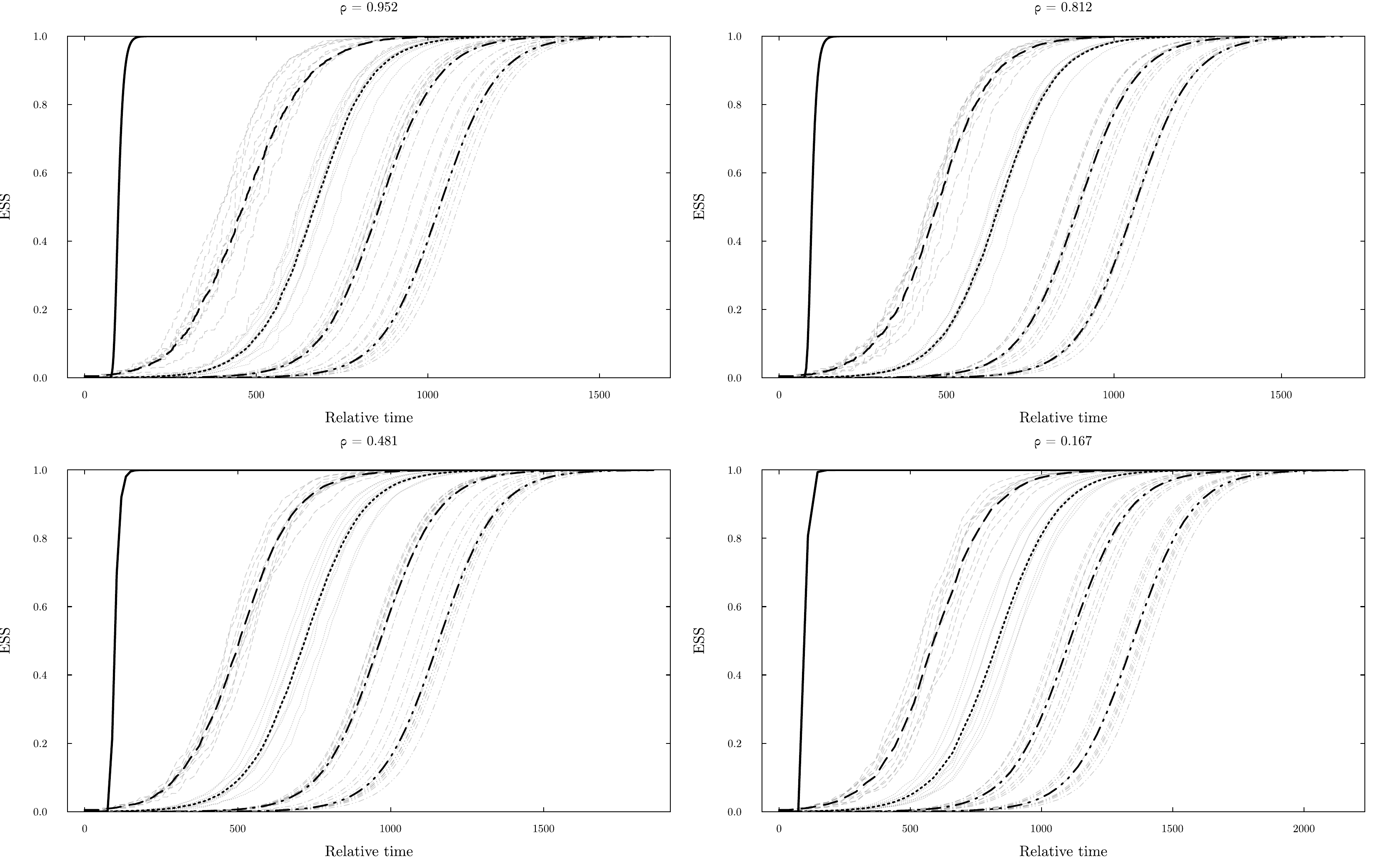}
    \caption[]{Effective sample size $\mathrm{ESS} = 1/(1+\chi^2(\pi\|\mu_t))$ for the Ornstein--Uhlenbeck Gaussian example across four mixing rates $\rho$ (panels), against rescaled time $t\log\rho/\log\rho_{\max}$. {\protect\tikz[baseline=-0.5ex]\draw[black, line width=2.5pt] (0,0)--(0.5cm,0);}~closed-form theoretical value; empirical harmonization: {\protect\tikz[baseline=-0.5ex]\draw[gray, line width=1.0pt, dashed] (0,0)--(0.5cm,0);} $N=100$; {\protect\tikz[baseline=-0.5ex]\draw[gray, line width=1.0pt, dotted] (0,0)--(0.5cm,0);} $N=1{,}000$; {\protect\tikz[baseline=-0.5ex]\draw[gray, line width=1.0pt, dashdotted] (0,0)--(0.5cm,0);} $N=10{,}000$; {\protect\tikz[baseline=-0.5ex]\draw[gray, line width=1.0pt, dash pattern=on 5pt off 2pt on 1pt off 2pt on 1pt off 2pt] (0,0)--(0.5cm,0);} $N=100{,}000$. Thin transparent lines show individual seed realizations; thick lines show the mean.}
    \label{fig:ess-gaussian}
\end{figure}
All experiments target the 100-dimensional standard Gaussian, and start from $\mu_0 \sim \mathcal{N}(10\cdot 1_{100}, 5\cdot I_{100})$, largely outside of stationarity.
Confidence intervals are shown as two standard deviations for ten independent realizations of Algorithm~\ref{alg:weight-harmonization}.

A first thing to note: there is no difference in the efficiency that can be attributed to the mixing speed of the kernels: all choices of $\rho$ essentially give the same (after re-scaling) results.
Furthermore, small sample-size effects notwithstanding (we can never give an effective sample size smaller than $1 / (2N)$), our method is \emph{systematically} conservative in estimating the $\chi^2$-based effective sample size: this is both a feature, as per Theorem~\ref{thm:weighted_as_diag}, and a drawback, as the underestimation does not seem to improve with the mixing of the kernel, and additionally worsens with the number of particles used.
Indeed, in the figure, all empirical lines are ordered from left to right in their respective number of particles.

In practice, we conjecture that the empirical effective sample size degradation converges to a non-degenerate ``worst case'' scenario which still provides useful bounds on the convergence of the system.
We come back to this point in Section~\ref{subsec:worst-case} and offer possible avenue for future work easing this problem.
Other $f$-divergences profiles are reported in Supplement~\ref{app:other-empirical-res}.

\subsection{P\'olya--Gamma Gibbs sampler for a logisitic regression}\label{subsec:polya-gamma}
We now turn to the same real-data example as used by \citet{biswas2019estimating} for evaluating their total-variation-based convergence diagnostic.
The target distribution is defined as a Bayesian logistic regression on the German credit dataset~\citep{statlog_(german_credit_data)_144}.
It consists of $1{,}000$ data entries comprising, after encoding and the addition of an intercept, $49$ covariates $x_i$ used to predict the creditworthiness of client $i$, encoded as an outcome variable $y_i \in \{-1, 1\}$.
The model is formally described as 
$p(y_{1:n}, \beta \mid x_{1:n}) = \mathcal{N}(\beta; 0_{49}, 10 \cdot I_{49}) \prod_{i=1}^n (1 + e^{-y_i x_i^{\top} \beta})^{-1}$.
A powerful MCMC sampler for $p(\beta \mid y_{1:n}, x_{1:n})$ is the P\'olya--Gamma Gibbs sampler~\citep{Polson01122013}, which performs a Gibbs~\citep{geman1984stochastic} routine over an augmented state space.
We describe the sampler in Supplement~\ref{app:pgg}.
In order to produce comparable results as \citet{biswas2019estimating}, we use the same coupling strategy, again described in Supplement~\ref{app:pgg}, and choose the same initial distribution $\mathcal{N}(0, 0_{49}, 10 \cdot I_{49})$ for all the particles $\beta_0^{1:2N}$ (note the slight change of notations compared to Section~\ref{sec:weight-harmonization} where we used $X$ for the state).
To match their $N=100$ independent estimators of the total variation between two chains, we use $2N = 200$ particles in total.
\begin{figure}[t]
    \centering
    \includegraphics[width=0.75\linewidth]{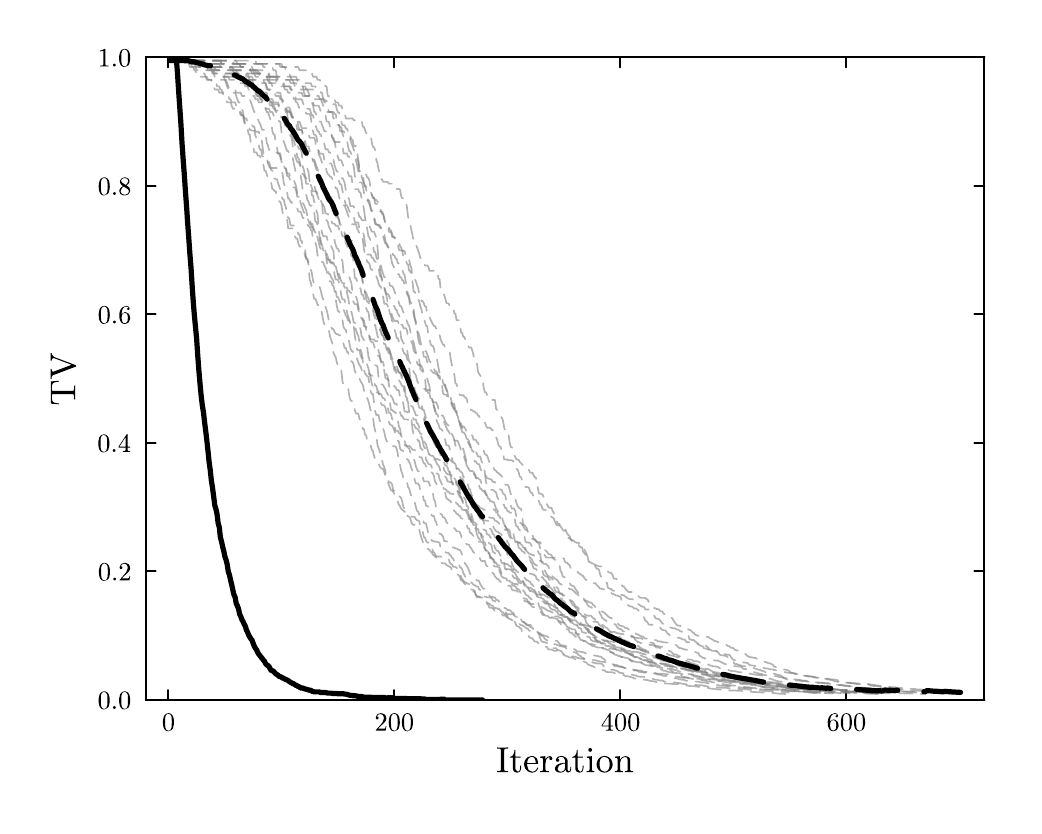}
    \caption[]{ 
    Total variation upper bounds for the P\'olya--Gamma Gibbs sampler applied to the German credit dataset logistic regression ($N=100$ particle pairs).{\protect\tikz[baseline=-0.5ex]\draw[black, line width=2.5pt] (0,0)--(0.5cm,0);} \citet{biswas2019estimating} L-lag bound; {\protect\tikz[baseline=-0.5ex]\draw[black, line width=2.5pt, dashed] (0,0)--(0.5cm,0);}~harmonization mean; {\protect\tikz[baseline=-0.5ex]\draw[gray, line width=0.8pt, dashed] (0,0)--(0.5cm,0);}~individual harmonization realizations (20 seeds).}
    \label{fig:ess-pgg}
\end{figure}
In Figure~\ref{fig:ess-pgg}, we report 20 independent realizations of our method and the resulting mean for the approximate total variation bound.%
We also reproduce the total variation bound obtained by implementing \citet{biswas2019estimating} for ease of comparison. 

Our method proves more conservative than \citet{biswas2019estimating} for total variation estimation, though it does not require a pre-run warmup of 350 iterations~\citep[in this specific case for][Section 3.2]{biswas2019estimating}.
Several points help place this finding in context: (i) while \citet{biswas2019estimating} can only ever improve by increasing the number of estimators (because it is unbiased), we have highlighted in the previous section that our bounds increased with $N$, (ii) improvements over \citet{biswas2019estimating} exist: for instance \citet{craiu2022double} implements control variates to reduce the need for the warm-up, but importantly not changing the eventual bound, (iii) \citet{biswas2019estimating} is embarrassingly parallel because only two chains ever communicate, while our method does require communication between $2N$ chains, potentially slowing it down.\footnote{This is however dependent on the choice of hardware: parallelization on CPU clusters would naturally benefit \citet{biswas2019estimating}, but our method would be more easily parallelized on GPUs owing to its deterministic run time.}
Crucially, the two methods differ in a fundamental way: our approach produces importance weights that are consistent estimates of $\dd \pi / \dd \mu_t$, and as such carry additional information beyond the total variation bound, with natural applications to importance-weighted inference that \citet{biswas2019estimating} does not offer.

Other $f$-divergences profiles are reported in Figure~\ref{fig:all-stats-credit} in Supplement~\ref{app:other-empirical-res}.

\subsection{MALA for a stochastic volatility model}\label{subsec:stochvol}
We now turn to a stochastic volatility model~\citep{kim1998stochvol} for which we aim to measure the convergence to stationarity of the Metropolis-adjusted Langevin algorithm~\citep{besag1994representations}.
The model is defined as the distribution 
\begin{align}
    \pi(x_{0:L}) 
        \propto \brac*{\mathcal{N}\{x_0; 0, \sigma^2 / (1 - \phi^2)\} \prod_{l=1}^L \mathcal{N}\{x_l; \phi x_{l-1}, \sigma^2\}}\, \times \brac*{\prod_{l=0}^L \mathcal{N}\{y_l; 0, \beta^2 \exp(x_l)\}}
\end{align}
over $\mathbb{R}^{L+1}$ for $L = 2{,}499$.
In order to proceed with the simulation, we fix the hyper-parameters $(\beta, \phi, \sigma) = (0.65, 0.98, 0.15)$, generate a dataset $y_{0:K}$ as per the model, and use the proposal distribution 
\begin{equation}\label{eq:proposal}
    q(x'_{0:L} \mid x_{0:L}) = \mathcal{N}\left\{x'_{0:L}; x_{0:L} + \frac{\tau}{2} A \nabla \log \pi(x_{0:L}), \tau A\right\}
\end{equation}
where $A$ was computed as the covariance corresponding to the Laplace approximation $\mathcal{N}(\mu, A)$ of $\pi$, and $\tau = 2.89 D^{-1/3}$, which is the optimal value recommended by \citet{roberts1998optimal} and corresponds to a $54\%$ acceptance rate in practice, close to the theoretical optimum.
We then initialize the particles $X_0^{1:2N}$ from independent draws of $\mathcal{N}(\mu, A)$, and compute the corresponding weights $w_0^n \propto \pi(X^n_0) / \mathcal{N}(x^n_0; \mu, A)$. 
The harmonization procedure is then carried using the method in Section~\ref{sec:weight-harmonization} for $N=100$ pairs of chains, and with a maximal reflection coupling of~\eqref{eq:proposal}, as described in Supplement~\ref{subsec:reflection-maximal}.

For comparison purposes, we also computed the total variation upper bound of \citet[see also Section~\ref{sec:introduction}]{biswas2019estimating} using the same coupling strategy, albeit with a larger number $N=250$ pairs of chains, owing to some observed variability in the results for lower numbers. 
The method was used with a warm-up lag of $500$ steps, which, based on our harmonization study was largely sufficient to ensure the chains had reached stationarity.
We report both of these, together with additional statistics computed by our method, in Figure~\ref{fig:stoch-vol}.
\begin{figure}
    \centering
    \includegraphics[width=0.75\linewidth]{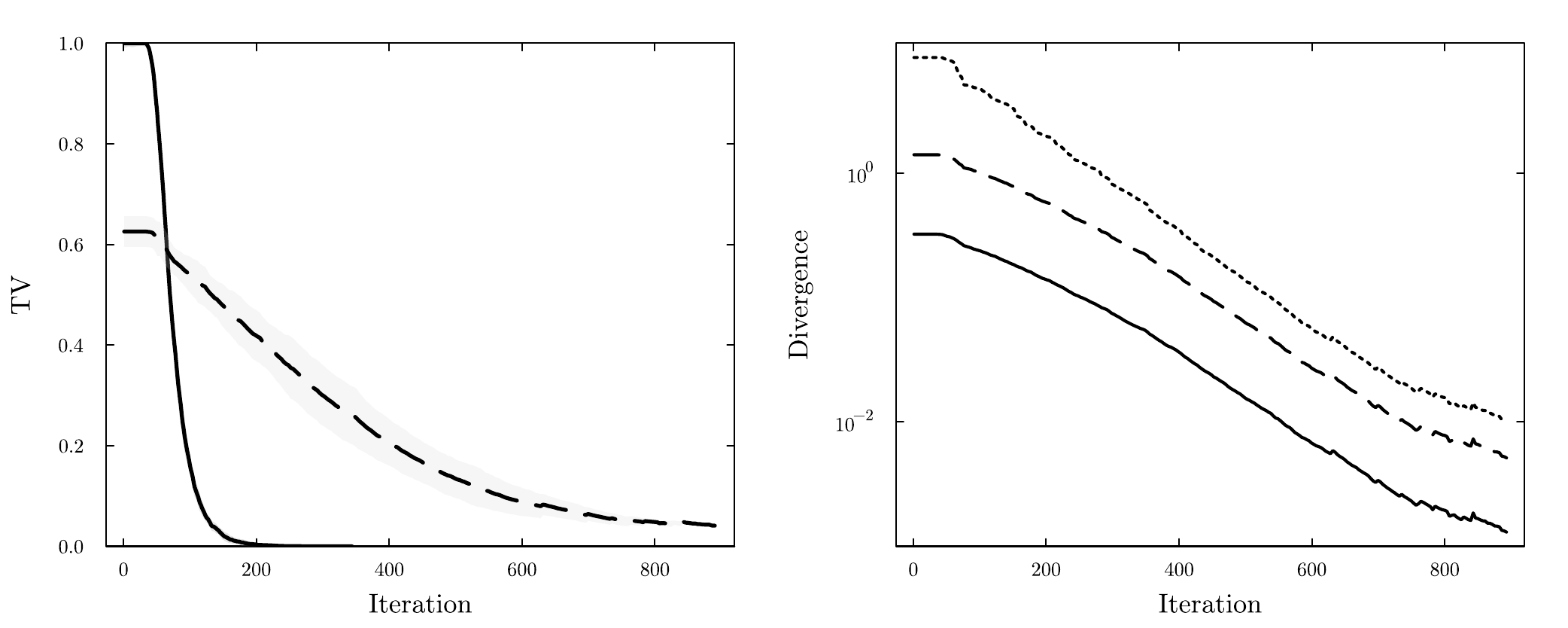}
    \caption[]{Total variation and other $f$-divergences upper bounds for the stochastic volatility model. Left: total variation upper bounds for the stochastic volatility model ($D=2{,}500$). {\protect\tikz[baseline=-0.5ex]\draw[black, line width=2.5pt] (0,0)--(0.5cm,0);} \citet{biswas2019estimating} L-lag estimate ($N=250$ chain pairs); {\protect\tikz[baseline=-0.5ex]\draw[black, line width=2.5pt, dashed] (0,0)--(0.5cm,0);}~weight harmonization mean ($N=100$ particle pairs). Shaded bands show $\pm 1$ standard deviation over 10 seeds. Right: harmonization $f$-divergence upper bounds. {\protect\tikz[baseline=-0.5ex]\draw[black, line width=2.0pt] (0,0)--(0.5cm,0);}~$\mathrm{Hellinger}^2$; {\protect\tikz[baseline=-0.5ex]\draw[black, line width=2.0pt, dashed] (0,0)--(0.5cm,0);}~$\mathrm{KL}$; {\protect\tikz[baseline=-0.5ex]\draw[black, line width=2.0pt, dotted] (0,0)--(0.5cm,0);}~$\chi^2$.}
    \label{fig:stoch-vol}
\end{figure}
As in Section~\ref{subsec:polya-gamma}, we exhibit a more conservative upper bound than~\cite{biswas2019estimating}, with the gap being particularly pronounced in the ``warm'' phase of the algorithm, when the total variation of \citet{biswas2019estimating} has already started decreasing.
This behaviour seems independent of the number of dimensions $L$ and we were able to replicate it for lower values too.
We believe this to be a similar occurrence as in Section~\ref{subsec:ornstein-uhlenbeck}, where we reach a regime where the harmonization is penalized by its low level of interaction.
However, a key point to note is that our procedure, in contrast with \citet{biswas2019estimating}, does not require a warm-up lag to be used. 
In this specific instance, we can tell \emph{after the fact} that a reasonable lag should have been around or above $200$ iterations for this, lower than the $500$ steps we used, but it could not have easily been known ahead of time.

\section{Discussion}\label{sec:discussion}
\subsection{Summary}\label{subsec:summary}
We have presented a novel coupling-based approach to estimating the Radon--Nikodym weights of Markov chain iterates with respect to their target: $(\dd \pi / \dd \mu_t)(X_t)$.
The procedure, easily implementable as soon as a Markov kernel coupling is available, can be summarized as alternating Markov chain couplings and shufflings to exchange information.

Several properties of the construction have been investigated: first, we (i) have shown consistency of expectations $\pi(\varphi) \approx \sum_{n=1}^{2N} W^n_t \varphi(X^n_t)$ at any point in time for increasing number of particles. 
This is a desirable property that ensures that we can accept the outcome of our algorithm as a trustworthy representation of the target distribution.
Second, we (ii) proved that our Radon--Nikodym estimator recovered the expected convergence behaviour: as the number of iteration increases, our weights homogeneize towards a common average value and their variance decreases (under strong assumptions) exponentially fast.
We believe this speed still holds under more realistic assumptions, akin to those in \citet{biswas2019estimating,jacob2020unbiasedmcmc}.
The proof would be largely more convoluted and is left to future work.
Finally, based on a novel bound of $f$-divergences, we (iii) introduced diagnostic criteria for any \emph{$f$-divergence} of the system of Markov chains generated with respect to its target distribution.
This covered in particular the $\chi^2$ divergence and total variation distance.
To the best of our knowledge, our estimator provides the first computable upper bound to general $f$-divergences for Markov chains.

Despite its favourable theoretical properties, our method consistently produces more conservative total variation upper bounds than the state of the art~\citep{biswas2019estimating} across all benchmarks considered (Sections, \ref{subsec:ornstein-uhlenbeck}, ~\ref{subsec:polya-gamma} and~\ref{subsec:stochvol}), including when closed-form solutions are available for reference (Section~\ref{subsec:ornstein-uhlenbeck}).
Its key advantage over purely coupling-based alternatives lies in the importance weights themselves: being consistent estimates of $\dd \pi / \dd \mu_t$, they can be leveraged to correct Monte Carlo expectations under $\mu_t$ toward the target $\pi$, an application that goes beyond convergence diagnosis and that we identify as an important direction for future work.
While we believe the method to still be useful \emph{as is}, in the following sections we describe possible future avenues for improvement of the diagnostic bounds themselves.

\subsection{Perfect sampling example and Rao--Blackwellization}\label{subsec:worst-case}
Consider the case when the kernel $K(x, \cdot) = \pi(\cdot)$ produces perfect samples from the target no matter $x$, in which case, as soon as $t \geq 1$, $\dd \pi / \dd \mu_t \equiv 1$ is constant almost everywhere.
However, it can be proved that (Theorem~\ref{thm:weight-harmonization} in Supplement~\ref{app:thm1bis}) the improvement brought by Algorithm~\ref{alg:weight-harmonization} is limited by $\ess_{t+1} \leq \ess_t (1 - \bar{\lambda} / N)^{-N}$, for $\bar\lambda = (\kappa_0 - 1)^2/4$ and $\kappa_0 = \max W_0^{1:2N} / \min W_0^{1:2N}$.
For large $N$, this can be approximated as 
\begin{equation}
    \ess_{t+1} \lesssim \ess_t \cdot \exp\curly*{\bar\lambda}.
\end{equation}
This gap, as $N \to \infty$ can be seen as an asymptotic regime of our method: because $(1 -x/n)^n$ is an increasing function of $n$ for positive $x$, the achievable improvement has to decrease as $N \to \infty$ until reaching $\exp\curly*{\bar\lambda}$.
We conjecture that this structure holds in general for the expected improvement and general kernels and couplings.

Nonetheless, this sub-efficiency is entirely due to the fact we use pairwise couplings based on the realization of a random permutation.
In theory, other pairings could have been chosen, and in this instance, they would all have been successful: when integrating over all possible permutations of $N+1:2N$, for any $m=1, \ldots, 2N$, we have
\begin{equation}
\begin{split}
    \Exp\paren*{W^{m}_1 \mid W^{1:2N}_0, X_0^{1:2N}}
        &= \frac{1}{2N}\sum_{n=1}^N (W^m_0 + W^{n+N}_0) = \frac{W^m_0 + \frac{1}{N} \sum_{n=1}^N W^{n+N}_0}{2},
\end{split}
\end{equation}
and similarly for $m=N+1, \ldots, 2N$.
If we were able to update $W_1^m$ directly to the right hand size, the weights would harmonize much faster. 

This simple analysis offers a first route of improvement for the method: (partial) Rao--Blackwellization over the set of chosen permutations.
While this is easy for the simple case highlighted above, doing so for a more general problem is likely harder. 
Indeed, one would need to keep track of possible coupling trajectories throughout the simulation of the Markov chain, a task that will computationally increase with the number of chains and time steps.

\subsection{Forward-coupling, backward-correcting}\label{subsec:forward-backward}
We have presented an online, single-pass algorithm, where all trajectories are simulated, and past realizations are discarded in computing weights.
In some sense, and ignoring the coupling construction, this can be understood as approximating weights functions at time $t+1$:
\begin{equation}
    w_{t+1}(x_{t+1}) = \int w_t(x_t) \mu_{t+1}(\dd x_t \mid x_{t+1})
\end{equation}
for $\mu_{t+1}(x_t \mid x_{t+1}) \propto \mu_t(x_t) K(x_{t}, x_{t+1})$, using samples from the prior distribution $X^{1:N}_t \sim \mu_t$.
This is similar to approximating smoothing distributions using filtering genealogies in particle filtering~\citep{dau2023complexity}, an approach known to quickly exhibit degeneracy.

This hints at a second route of improvement for the method: if one is willing to perform offline corrections, backward simulation similar to those used in sequential Monte Carlo methods~\citep{chopin2020book} could be used to improve upon the computation of the weights \emph{after an initial pass of our algorithm has been achieved}.

\subsection{Control variates and variance reduction}\label{subsec:control-variates}
In \citet{craiu2022double}, the authors introduce control variates for the upper bound of \citet{biswas2019estimating}, thereby improving the required ``L-lag'' warm-up of the method. 
Such a style of improvement is a likely fruitful research direction for harmonization too: for instance, related but tractable target distributions may be used in a fashion similar to \citet{goodman2009coupling}, albeit at the cost of implementing a 4-way coupling.
The resulting methodology, analysis, and implementation are however not directly obvious and we therefore leave this for future work too.

\section*{Acknowledgement}
We thank Pierre E. Jacob for his detailed and useful comments on the article and Yvann Le Fay for saving a factor 2 from being forgotten.
AC acknowledges the financial support provided by UKRI for grant EP/Y014650/1, as part of the ERC Synergy project OCEAN.
HD acknowledges support from the Singaporean Ministry of Education Tier 2 grant MOE-T2EP20123-0010.

\section*{Supplementary material}
\label{SM}
The Supplementary Material includes proofs of all results listed in the paper, background on coupling methods for MCMC kernels, as well as additional figures generated for Section~\ref{sec:exps}.
Specifically for our main results, Theorem~\ref{thm:weighted_as_diag} is proven in Supplement~\ref{app:proof-weighted-diag}, Theorem~\ref{thm:consistency} is proven in Supplement~\ref{app:proof-consistency}, Theorem~\ref{thm:ergodic-weight-convergence} in Supplement~\ref{app:proof-ergodic}.

\bibliographystyle{apalike}
\bibliography{references}
\appendix

\section{Notations for the proofs}
We write $\mathcal{F}_t$ for the filtration generated by all particles, weights, and pairings up to time $t$ (including the shuffling step before Markov kernels at iteration $t+1$ are applied).
When a particle $X^n_t$ is paired with another particle $X^{A^n_t + N}$ in Algorithm~\ref{alg:weight-harmonization}, we write 
\[m^n_t = A^n_t + N\]
for short, dropping the $t$ index when context is clear.
Other notations are defined in Section~\ref{subsec:notations} of the main text.

\section{Fixed-horizon end-to-end implementation of Algorithm~\ref{alg:weight-harmonization}}
\label{app:weight-harmonization-fixed}

Algorithm~\ref{alg:weight-harmonization-fixed} describes an end-to-end implementation of Algorithm~\ref{alg:weight-harmonization} to produce an upper bound of the $f$-divergence of the target distribution $\pi$ with respect to the law $\mu_T$ of the MCMC iterates at time $T$.

\begin{algorithm}
\DontPrintSemicolon
\caption{End-to-end implementation of weight harmonization}
\label{alg:weight-harmonization-fixed}

\KwIn{Initial distribution $\mu_0$, unnormalized target density $\gamma$, coupled kernel $\tilde K$, and final horizon $T$.}
\KwOut{Particles $X_{t}^{1:2N}$ and weights $W_{t}^{1:2N}$ such that $\sum_{n=1}^{2N} W_t^n \varphi(X_t^n)$ approximates $\int \varphi(x) \pi(\dd x)$ and $(2N)^{-1}\sum_{n=1}^{2N} f(2NW_t^{n})$ probabilistically upper bounds $D_f(\pi||\mu_t)$, $t=1, \ldots, T$}
\BlankLine

\nl\textit{1. Initialization}\;
\nl Generate $2N$ i.i.d. random variables $X_t^n \sim \mu_0$ for $n=1, \ldots, 2N$.\;
\nl Set $w_0^n \leftarrow \gamma(X_0^n)/\mu_0(X_0^n)$ for $n=1, \ldots, 2N$.\;
\nl Normalize $W_0^n \leftarrow w_t^n/\sum_{i=1}^{2N} W_0^i$ for $n=1, \ldots, 2N$.\;
\nl Set $A_0^n \leftarrow n$ for $n=1, \dots, N$.\;
\BlankLine

\nl\textit{2. Loop}\;
\nl \For{$t \leftarrow 0$ \KwTo $T-1$}{
Get $X_{t+1}^{1:2N}$, $W_{t+1}^{1:2N}$, and $A_{t+1}^{1:N}$ from $X_{t}^{1:2N}$, $W_{t}^{1:2N}$, and $A_{t}^{1:N}$ via Algorithm~\ref{alg:weight-harmonization}.\;
Compute the upper bound $\sum_{n=1}^{2N} f(2N W^n_{t+1}) / (2N)$.\; 
}
\end{algorithm}

\section{Proof of Theorem~\ref{thm:weighted_as_diag} (Divergence bounds from weights)}\label{app:proof-weighted-diag}
The following lemma is basic but useful to cover the case of the total variation for which $f(t) = \abs{t-1} / 2$.
\begin{lemma}
Suppose that the function $f$ satisfies Assumption~\ref{ass:f-differentiability}, and that there exists a point $0 < c< \infty$ such that $f$ is continuously differentiable on $(0, c)$ and $(c, \infty)$. 
Then
\[ -\infty < \lim_{t \to c^-} f'(t) \leq \lim_{t \to c^+} f'(t) < \infty \]
and the convention
\begin{equation}
\label{eq:convention-convex}
f'(c) = \frac 12 \curly*{\lim_{t \to c^-} f'(t) + \lim_{t \to c^+} f'(t)}
\end{equation}
is well defined.
Moreover the identity
\[ f(x) \geq f(y) + f'(y)(x-y) \]
still holds for all pairs $(x,y)$, including when one of the entries is equal to $c$ or the two entries are on different sides of $c$.
\end{lemma}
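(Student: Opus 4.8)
The plan is to prove the three claims — finiteness of the one-sided derivative limits at $c$, well-definedness of the averaged convention, and the subgradient inequality $f(x) \geq f(y) + f'(y)(x-y)$ for all pairs — using only convexity of $f$ together with the stated piecewise differentiability.

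First, for the finiteness of the one-sided limits at $c$: since $f$ is convex on $[0,\infty)$, its difference quotients are monotone, so the one-sided derivatives $f'_-(c) = \lim_{t \to c^-} \{f(c) - f(t)\}/(c-t)$ and $f'_+(c) = \lim_{t \to c^+} \{f(t) - f(c)\}/(t-c)$ exist in $[-\infty, +\infty]$ and satisfy $f'_-(c) \leq f'_+(c)$. Because $0 < c < \infty$, we can sandwich these between difference quotients anchored at interior points $a < c < b$, which are finite since $f$ is real-valued and continuous on $(0,\infty)$; hence both limits are finite. Moreover, on each interval where $f$ is continuously differentiable, the classical derivative agrees with the one-sided convex derivatives, so $\lim_{t\to c^-} f'(t) = f'_-(c)$ and $\lim_{t\to c^+} f'(t) = f'_+(c)$; this gives $-\infty < \lim_{t\to c^-} f'(t) \leq \lim_{t\to c^+} f'(t) < \infty$ and makes the averaged convention $f'(c) = \tfrac12\{f'_-(c) + f'_+(c)\}$ a well-defined finite number lying in the subdifferential $[f'_-(c), f'_+(c)]$ of $f$ at $c$.

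Second, for the subgradient inequality: recall that for a convex function any element of the subdifferential $\partial f(y)$ is a valid slope, i.e.\ $f(x) \geq f(y) + s(x-y)$ for all $x$ whenever $s \in \partial f(y)$. When $y \neq c$, $f$ is differentiable at $y$, so $\partial f(y) = \{f'(y)\}$ and the inequality is the standard one. When $y = c$, we have just shown $f'(c) = \tfrac12\{f'_-(c)+f'_+(c)\} \in [f'_-(c), f'_+(c)] = \partial f(c)$, so the inequality again holds for every $x$, including $x$ on either side of $c$ or $x = c$ itself (where it is trivial). This handles all the cases mentioned in the statement.

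I do not anticipate a serious obstacle here: the whole argument is a routine invocation of standard facts about convex functions of one real variable (existence and monotonicity of one-sided derivatives, the subdifferential characterization, and agreement of $C^1$-derivatives with convex one-sided derivatives on intervals of differentiability). The only mild care needed is to verify that the hypothesis $0 < c < \infty$ is genuinely used to rule out infinite slopes — a convex function on $[0,\infty)$ can have $f'_+(0) = -\infty$ in principle, so anchoring the difference quotients strictly inside $(0,\infty)$ is the one place where one must be slightly attentive.
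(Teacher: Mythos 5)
Your proof is correct and follows essentially the same route as the paper's: both arguments rest on the fact that the averaged value $f'(c)$ lies in the subdifferential $[f'_-(c), f'_+(c)]$ of the convex function $f$ at the interior point $c$, and then invoke the standard subgradient (tangent-line) inequality. You simply spell out details the paper leaves implicit, namely the finiteness of the one-sided limits via difference quotients anchored at $a<c<b$ and the identification $\lim_{t\to c^\pm} f'(t) = f'_\pm(c)$.
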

\begin{proof}
The function $f$ is convex on $(0, \infty)$, as such, for all $x < y < z$ we have
\[
\frac{f(y) - f(x)}{y - x} \leq \frac{f(z) - f(y)}{z-y}.
\]
In particular, for $x < y \leq c \leq z < t$,
\[
\frac{f(y) - f(x)}{y - x} \leq \frac{f(t) - f(z)}{t - z}.
\]
This means that the non-decreasing function $f'$ is upper bounded on $(0, c)$, hence has a finite limit $\lim_{t\to c^-} f'(t)$.
Similarly, $\lim_{t \to c^+}$ exists and is finite. The above inequality also implies that $\lim_{t\to c^-} f'(t) \leq \lim_{t\to c^+} f'(t)$.
Finally, by virtue of convexity, the identity
\[
f(x) \geq f(y) + f'(y)(x-y)
\]
holds for any point $y$ where $f$ is differentiable.
To extend it to $y=c$ using the convention~\eqref{eq:convention-convex}, it suffices to let $y \to c^-$ and $y \to c^+$ and use the continuity of $f$ on $(0, \infty)$.
\end{proof}
\begin{proof}[Proof of Theorem~\ref{thm:weighted_as_diag}]
The proof relies on the convexity of the function $f$ that defines the divergence. 
Let $Q_M = \sum_{m=1}^M f(MW^m) / M$ be our estimator.
Recall that we write $\psi$ for the Radon-Nikodym derivative $\dd \pi/\dd \mu$.
We have
\begin{align}
    Q_M - D_f(\pi \| \mu) 
        &= \frac 1M \sum_{m=1}^M f(MW^m) - \int f\{\psi(x)\} \mu(\dd x) \\
        &= \left(\frac 1M \sum_{m=1}^M \left[f(MW^m) - f\{\psi(X^m)\}\right]\right) + \left[\frac{1}{M} \sum_{m=1}^M f\{\psi(X^m)\} - \int f\{\psi(x)\} \mu(\dd x)\right] \label{eq:decomp}
\end{align}
Owing to the convexity and differentiability of $f$, the first term in~\eqref{eq:decomp} can be bounded from below:
\begin{align}
    \frac 1M \sum_{m=1}^M \left[f(MW^m) - f\{\psi(X^m)\}\right]
        &\geq \frac{1}{M} \sum_{m=1}^M f'\{\psi(X^m)\}\{M W^m - \psi(X^m)\}\\
        &= \left[\sum_{m=1}^M W^m f'\{\psi(X^m)\}\right] - \left[\frac 1M \sum_{m=1}^M \psi(X^m) f'\{\psi(X^m)\}\right]
\end{align}
Let us denote the two terms on the right-hand side as $A_M$ and $B_M$ respectively.
By Assumption~\ref{ass:weighted-convergence}, $A_M \to \int f'\{\psi(x)\} \pi(\dd x)$ in probability.
By Assumption~\ref{ass:unweighted-LLN}, $B_M \to \int \psi(x) f'\{\psi(x)\} \mu(\dd x)$ in probability.

Since $\pi(\dd x) = \psi(x)\mu(\dd x)$, the limits are identical:
\[
\int f'\{\psi(x)\} \pi(\dd x) = \int f'\{\psi(x)\} \psi(x) \mu(\dd x).
\]
Therefore, the difference $A_M - B_M \to 0$ in probability.

Now consider the second term in~\eqref{eq:decomp} and call it $C_M$:
\[
C_M = \frac{1}{M} \sum_{m=1}^M f\{\psi(X^m)\} - \int f\{\psi(x)\} \mu(\dd x).
\]
By Assumption~\ref{ass:unweighted-LLN}, $C_M \to 0$ in probability.

Combining these results, we have a lower bound for the total expression:
\[
Q_M - D_f(\pi \| \mu) \geq (A_M - B_M) + C_M.
\]
Let $L_M = (A_M - B_M) + C_M$. 
Since $A_M - B_M \to 0$ in probability and $C_M \to 0$ in probability, their sum $L_M$ also converges to $0$ in probability.

This means that for any $\varepsilon > 0$, $\Prob(\abs{L_M} \geq \varepsilon/2) \to 0$.
The event $\{Q_M - D_f(\pi \| \mu) \leq -\varepsilon\}$ is a subset of the event $\{L_M \leq -\varepsilon\}$, because $Q_M - D_f(\pi \| \mu) \geq L_M$.
Thus,
\[
\Prob\curly*{Q_M - D_f(\pi\|\mu) \leq -\varepsilon} \leq \Prob\curly*{L_M \leq -\varepsilon}.
\]
The event $\{L_M \leq -\varepsilon\}$ is itself a subset of $\{\abs{L_M} \geq \varepsilon\}$.
Therefore,
\[
\Prob\curly*{Q_M - D_f(\pi\|\mu) \leq -\varepsilon} \leq \Prob\curly*{\abs{L_M} \geq \varepsilon}.
\]
Since $L_M \to 0$ in probability, the right-hand side tends to $0$ as $M \to \infty$.
\end{proof}

\section{Bound on ESS improvement}
\label{app:thm1bis}
\begin{theorem}\label{thm:weight-harmonization}
	Let $W^{1:2N}$ be a set of strictly positive weights, and, for some $j \in \curly*{1, \ldots, N}$, let $\tilde{W}^{1:2N}$ be defined as $\tilde{W}^n = W^n$, $n \in \curly*{1, \ldots, 2N} \setminus \{j, j+N\}$, and $\tilde{W}^j = \tilde{W}^{j+N} = (W^j + W^{j+N}) / 2$.
	Then, we have
	\begin{equation}
	\ess(\tilde{W}^{1:2N}) = \ess(W^{1:2N}) \curly*{1 - \ess(W^{1:2N}) \frac{(W^j - W^{j+N})^2}{2}}^{-1}.
	\end{equation}
	In particular, 
	\begin{alignat}{2}
	\ess(W^{1:2N}) &\leq \ess(\tilde{W}^{1:2N})  &&\leq 2 \ess(W^{1:2N})
	\end{alignat}
	and the left inequality is strict if $W^j \neq W^{j+N}$.
	Additionally, writing $W_* = \min (W^{1:2N})$; $W^* = \max(W^{1:2N})$; and $\kappa = W^* / W_*$, we have the tighter upper bound
	\begin{align}\label{eq:upper-bound-ess}
	\ess(\tilde{W}^{1:2N}) \leq \ess(W^{1:2N})\curly*{1 - \frac{(\kappa - 1)^2}{2(\kappa^2 + 2N - 1)}}^{-1}
    \leq \ess({W}^{1:2N}) \curly*{1 - \frac{(\kappa - 1)^2}{4N}}^{-1}.
	\end{align}
\end{theorem}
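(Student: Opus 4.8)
The plan is to reduce everything to a single identity for the sum of squared weights. Write $S=\sum_{n=1}^{2N}(W^n)^2$ and $\tilde S=\sum_{n=1}^{2N}(\tilde W^n)^2$, so that $\ess(W^{1:2N})=1/S$ and $\ess(\tilde W^{1:2N})=1/\tilde S$. Since $\tilde W$ coincides with $W$ off $\{j,j+N\}$ and carries two copies of $(W^j+W^{j+N})/2$ at those indices, expanding the modified terms and using the elementary identity $a^2+b^2-\tfrac12(a+b)^2=\tfrac12(a-b)^2$ gives
\[
\tilde S = S - (W^j)^2 - (W^{j+N})^2 + \tfrac12(W^j+W^{j+N})^2 = S - \tfrac12(W^j-W^{j+N})^2 .
\]
Inverting and factoring out $S$ yields the stated identity $\ess(\tilde W^{1:2N})=\ess(W^{1:2N})\big/\bigl(1-\ess(W^{1:2N})(W^j-W^{j+N})^2/2\bigr)$. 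Both $S$ and $(W^j-W^{j+N})^2$ are homogeneous of degree two in the weights, so this identity, and every inequality derived from it, is scale invariant; in particular no normalisation of the $W^n$ is needed.

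The two-sided bound is then immediate. From $\tilde S=S-\tfrac12(W^j-W^{j+N})^2\le S$, with equality exactly when $W^j=W^{j+N}$, we get $\ess(\tilde W^{1:2N})\ge\ess(W^{1:2N})$, strictly unless $W^j=W^{j+N}$. For the upper bound, $S\ge(W^j)^2+(W^{j+N})^2\ge(W^j-W^{j+N})^2$ because the cross term $2W^jW^{j+N}$ is positive; hence $\ess(W^{1:2N})(W^j-W^{j+N})^2=(W^j-W^{j+N})^2/S\le1$, the denominator in the identity is at least $1/2$, and $\ess(\tilde W^{1:2N})\le2\,\ess(W^{1:2N})$.

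For the sharper bound \eqref{eq:upper-bound-ess} the idea is to control $(W^j-W^{j+N})^2/S$ using only $W_*=\min_nW^n$ and $W^*=\max_nW^n$. Since $W^j,W^{j+N}\in[W_*,W^*]$ we have $(W^j-W^{j+N})^2\le(W^*-W_*)^2$, and $S$ has one summand equal to $(W^*)^2$ and $2N-1$ further summands each at least $(W_*)^2$, so $S\ge(W^*)^2+(2N-1)(W_*)^2$. Dividing numerator and denominator by $(W_*)^2$ turns the ratio into a bound by $(\kappa-1)^2/(\kappa^2+2N-1)$, a quantity in $[0,1)$; plugging $x=(W^j-W^{j+N})^2/(2S)\le(\kappa-1)^2/\bigl(2(\kappa^2+2N-1)\bigr)$ into the increasing map $x\mapsto(1-x)^{-1}$ gives \eqref{eq:upper-bound-ess}.

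Finally, the new value $(W^j+W^{j+N})/2$ lies between $\min(W^j,W^{j+N})$ and $\max(W^j,W^{j+N})$, hence in $[W_*,W^*]$; as the other entries of $\tilde W$ are unchanged, every entry of $\tilde W$ still lies in $[W_*,W^*]$, giving $\tilde W_*\ge W_*$, $\tilde W^*\le W^*$, and therefore $\tilde\kappa=\tilde W^*/\tilde W_*\le\kappa$. For strictness: if $W_*$ is attained at a single index and that index is $j$ or $j+N$, then after averaging that entry becomes the strictly larger $(W^j+W^{j+N})/2$ while every remaining entry already exceeds $W_*$, so $\tilde W_*>W_*$; the argument for $W^*$ is symmetric, and either strict inequality forces $\tilde\kappa<\kappa$. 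The computations are entirely elementary; the only steps needing a little care are choosing the right worst-case lower bound for $S$ in \eqref{eq:upper-bound-ess} and the bookkeeping of the strictness conditions for the extremes.
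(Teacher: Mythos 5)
Your proof is correct and follows essentially the same route as the paper: the identity $\tilde S=S-\tfrac12(W^j-W^{j+N})^2$ for the sum of squared weights, inversion to get the ESS formula, the bound $(W^j-W^{j+N})^2\le (W^j)^2+(W^{j+N})^2$ for the factor-of-two estimate, and the worst-case bounds $(W^j-W^{j+N})^2\le(W^*-W_*)^2$, $S\ge(W^*)^2+(2N-1)W_*^2$ for the $\kappa$-dependent refinement. The only difference is that you spell out the monotonicity and strictness claims about $W_*$, $W^*$, $\tilde\kappa$ (which the paper dismisses as immediate) and note the scale invariance; both are fine.
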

\begin{remark}\label{rem:alpha-opt}
    If the general updates
    $\tilde W^j = \alpha W^j + (1-\alpha) W^{j+N}$ and $\tilde W^{j+N} = (1-\alpha)W^j + \alpha W^{j+N}$ are used,
	we have
	\begin{equation}
	\ess(\tilde{W}^{1:2N}) = \ess(W^{1:2N}) \curly*{1 - \ess(W^{1:2N}) \frac{\alpha(1-\alpha)(W^j - W^{j+N})^2}{4N}}^{-1}.
	\end{equation}
    Therefore, choosing $\alpha = 1/2$ maximizes the improvement.
\end{remark}
\begin{remark}
It is easy to see that $\tilde \kappa = \max(\tilde W^{1:2N})/\min(\tilde W^{1:2N}) \leq \kappa$, which means that the upper bound on ESS improvement decreases as more harmonization operations are performed. 
\end{remark}
\begin{proof}
     The proof follows from direct computation:
  \begin{align}
    \sum_{m=1}^{2N} (\tilde{W}^m)^2 
      &= \sum_{n=1, n \neq j}^N (W^n)^2 + \sum_{n=1, n \neq j}^N (W^{n+N})^2 + 2\frac{(W^j + W^{j+N})^2}{4}\\
      &= \sum_{n=1, n \neq j}^N (W^n)^2 + \sum_{n=1, n \neq j}^N (W^{n+N})^2 + \frac{(W^j)^2 + (W^{j+N})^2 + 2 W^j W^{j+N}}{2}\\
      &= \sum_{m=1}^{2N} (W^m)^2 - \frac{(W^j - W^{j+N})^2}{2}.
  \end{align}
  Hence, we have
  \begin{align}
    \ess(\tilde{W}^{1:2N}) 
      &= \frac{1}{\sum_{m=1}^{2N} (\tilde{W}^m)^2} = \curly*{\sum_{m=1}^{2N} (W^m)^2 - \frac{(W^j + W^{j+N})^2}{2}}^{-1}\\
      &= \ess(W^{1:2N}) \curly*{1 - \frac{(W^j - W^{j+N})^2}{2 \sum_{m=1}^{2N} (W^m)^2}}^{-1}.
  \end{align}
  This gives the first desired equality.
  Because $(W^j - W^{j+N})^2 \geq 0$, it is clear that $\ess(W^{1:2N}) \leq \ess(\tilde{W}^{1:2N})$.
  For the second inequality, 
  \begin{align}
    \frac{(W^j - W^{j+N})^2}{2 \sum_{m=1}^{2N} (W^m)^2} \leq \frac{(W^j)^2 + (W^{j+N})^2}{2 \sum_{m=1}^{2N} (W^m)^2} \leq 1/2,
  \end{align}
  so that 
  \begin{align}
    \curly*{1 - \frac{(W^j + W^{j+N})^2}{2 \sum_{m=1}^{2N} (W^m)^2}}^{-1} \leq \curly*{1 - \frac12}^{-1} = 2.
  \end{align}
  Now, writing $W_* = \min W^{1:2N}$ and $W^* = \max W^{1:2N}$, we have $(W^j - W^{j+N})^2 \leq (W^* - W_*)^2$ so that
  \begin{align}
    \frac{(W^j - W^{j+N})^2}{2 \sum_{m=1}^{2N} (W^m)^2} \leq \frac{(W^* - W_*)^2}{2 \left\{(W^*)^2 + (2N-1)W_*^2\right\}}
  \end{align}
  and the last upper bounds follow.
\end{proof}

\section{Proof of Proposition~\ref{prop:expectation-invariance} (Expectation invariance)}\label{app:proof-invariance}
The proof follows from induction. 
The base case $t=0$ is immediate by definition.
We first compute the conditional expectation of the sum of un-normalized estimators, $\sum_{i=1}^{2N} w_{t+1}^i \varphi(X_{t+1}^i)$, given the history $\mathcal{F}_t$.
For all $m \in 1, \ldots, 2N$, the law of $X_{t+1}^m$ given $\mathcal F_t$ is $K(X_t^m, \dd x_{t+1})$, regardless of the specific coupling.
Therefore
\begin{equation}
    \label{eq:proof-invariance-ft}
    \Exp\curly*{w_t^n \varphi(X_{t+1}^n) + w_t^{m_t^n} \varphi(X_{t+1}^{m_t^n}) \mid \mathcal F_t} = w_t^n (K\varphi)(X_t^n) + w_t^{m_t^n} (K\varphi)({X_t^{m_t^n}}).
\end{equation}
Now, the key observation is that
\begin{equation}
    \label{eq:proof-invariance-conservation}
    w_{t+1}^n \varphi(X_{t+1}^n) + w_{t+1}^{m_t^n} \varphi(X_{t+1}^{m_t^n}) =
    w_t^n \varphi(X_{t+1}^n) + w_t^{m_t^n} \varphi(X_{t+1}^{m_t^n}).
\end{equation}
This is thanks to the harmonization mechanism.
When $X_{t+1}^n \neq X_{t+1}^{m_t^n}$ (no coupling), we simply keep the weights: $w_{t+1}^n = w_{t}^n$ and $w_{t+1}^{m_t^n} = w_{t}^{m_t^n}$.
When $X_{t+1}^n = X_{t+1}^{m_t^n} = x^*$ for some $x^*$ (coupling successful), we set 
\[w_{t+1}^n = w_{t+1}^{m_t^n} = (w_t^n + w_t^{m_t^n})/2\]
and both sides of~\eqref{eq:proof-invariance-conservation} are then equal to $(w_t^n + w_t^{m_t^n}) x^*$.
Combining~\eqref{eq:proof-invariance-ft} and~\eqref{eq:proof-invariance-conservation}, we have
\[
    \Exp\curly*{w_{t+1}^n \varphi(X_{t+1}^n) + w_{t+1}^{m_t^n} \varphi(X_{t+1}^{m_t^n}) \mid \mathcal F_t} = w_t^n (K\varphi)(X_t^n) + w_t^{m_t^n} (K\varphi)({X_t^{m_t^n}})
\]
and summing over all $n = 1, \ldots, N$ gives
\[
    \Exp\curly{\hat I_{t+1,N}(\varphi)\mid \mathcal F_t}
    =\Exp\curly*{\sum_{m=1}^{2N} w_{t+1}^m \varphi(X_{t+1}^m) \mid \mathcal F_t} = \sum_{m=1}^{2N} w_t^m (K\varphi)(X_t^m) = \hat I_{t,N}(K\varphi),
\]
which, by the law of total expectation, yields
\begin{align*}
\Exp\{\hat{I}_{t+1,N}(\varphi)\} = \Exp\{\hat{I}_{t,N}(K\varphi)\}.
\end{align*}
For the inductive step, assume $\Exp\{\hat{I}_{t,N}(\psi)\} = \pi(\psi)$ for any bounded function $\psi$.
Applying the induction hypothesis with $\psi=K\varphi$ gives $\Exp\{\hat{I}_{t,N}(K\varphi)\} = \pi(K\varphi)$.
Since $\pi$ is an invariant distribution for $K$,
\begin{align*}
\pi(K\varphi) = \int (K\varphi)(x) \pi(\dd x) = \int \varphi(x) (\pi K)(\dd x) = \int \varphi(x) \pi(\dd x) = \pi(\varphi)
\end{align*}
and the proof is finished.

\section{Proof of Theorem~\ref{thm:consistency} (Consistency)}\label{app:proof-consistency}

\begin{lemma}\label{lem:jensen-kernel}
	For any $\alpha \geq 1$ and any $\pi$-invariant kernel $K$, if $\Exp_\pi\curly{\varphi(X)^{\alpha}} < \infty$ then
	$\Exp_\pi[\curly{(K\varphi)(X)}^{\alpha}] < \infty$.
\end{lemma}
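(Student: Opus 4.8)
The plan is to combine Jensen's inequality with the invariance $\pi K = \pi$; I will work throughout with $\abs{\varphi}$, since for non-integer $\alpha$ the quantity $\Exp_\pi\curly{\varphi(X)^\alpha}$ is to be read as $\Exp_\pi\curly{\abs{\varphi(X)}^\alpha}$. Because $t \mapsto \abs{t}^\alpha$ is convex on $\bbR$ for $\alpha \geq 1$, and because $K(x, \cdot)$ is a probability measure for each $x$, Jensen's inequality gives the pointwise bound
\[
\abs{(K\varphi)(x)}^\alpha = \abs*{\int \varphi(y)\, K(x, \dd y)}^\alpha \leq \int \abs{\varphi(y)}^\alpha\, K(x, \dd y) = (K\abs{\varphi}^\alpha)(x),
\]
valid for every $x$ (with the right-hand side allowed to be $+\infty$).

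Next I would integrate this inequality against $\pi$ and use the invariance of $\pi$. Since $\abs{\varphi}^\alpha \geq 0$, Tonelli's theorem applies to the iterated integral, so
\[
\int (K\abs{\varphi}^\alpha)(x)\, \pi(\dd x) = \int\!\!\int \abs{\varphi(y)}^\alpha\, K(x, \dd y)\, \pi(\dd x) = \int \abs{\varphi(y)}^\alpha\, (\pi K)(\dd y) = \Exp_\pi\curly{\abs{\varphi(X)}^\alpha}.
\]
Combining the two displays yields
\[
\Exp_\pi\brac*{\abs{(K\varphi)(X)}^\alpha} \leq \Exp_\pi\brac*{(K\abs{\varphi}^\alpha)(X)} = \Exp_\pi\curly{\abs{\varphi(X)}^\alpha} < \infty,
\]
which is the claim.

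There is essentially no obstacle here. The only minor points to address are: (i) measurability of $x \mapsto (K\varphi)(x)$ and $x \mapsto (K\abs{\varphi}^\alpha)(x)$, which holds by definition of a Markov kernel (and, if one wants $K\varphi$ itself to be well defined pointwise, one restricts to the $\pi$-full-measure set where $(K\abs{\varphi}^\alpha)(x) < \infty$, on which $\varphi$ is $K(x,\cdot)$-integrable by Hölder with exponent $\alpha \geq 1$); and (ii) the exceptional set where $(K\abs{\varphi}^\alpha)(x) = +\infty$ has $\pi$-measure zero precisely because its integral against $\pi$ equals $\Exp_\pi\curly{\abs{\varphi(X)}^\alpha} < \infty$, so the pointwise Jensen bound is meaningful $\pi$-almost everywhere and the final estimate is unaffected.
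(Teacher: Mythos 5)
Your proof is correct and follows the same route as the paper's: Jensen's inequality applied pointwise to $K(x,\cdot)$ followed by Tonelli and the $\pi$-invariance $\pi K = \pi$. The extra care you take with absolute values and the $\pi$-null exceptional set is sound but not needed beyond what the paper's shorter argument already implies.
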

\begin{proof}
	By Jensen's inequality
	\begin{align*}
		\Exp_\pi[\curly{(K\varphi)(X)}^{\alpha}] &=\int
		\curly*{\int \varphi(y)K(x, \dd y)}^{\alpha} \pi(\dd x)
		\leq \iint \varphi(y)^{\alpha} K(x, \dd y) \pi(\dd x).
	\end{align*}
	The $\pi$-invariance of the kernel then concludes the proof.
\end{proof}

\begin{lemma}\label{lem:jensen-coupling}
	For any non-negative function $\varphi$, we have
	\[ \Exp\curly*{\sum_{m=1}^{2N} (w_t^m)^2 \varphi(X_t^m)}\leq \Exp\curly*{\sum_{m=1}^{2N} (w_{t-1}^m)^2 (K\varphi)(X_{t-1}^m)}. \]
\end{lemma}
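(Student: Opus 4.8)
The plan is to condition on the filtration $\mathcal{F}_{t-1}$ and argue pair by pair, exactly as in the proof of Proposition~\ref{prop:expectation-invariance}, only replacing the linear functional there by the quadratic one here. The sum over the $2N$ particles decomposes into $N$ groups of two, the coupled pairs $(n, m^n)$ with $m^n = A_{t-1}^n + N$, and these pairs are conditionally independent given $\mathcal{F}_{t-1}$; moreover, since $A_{t-1}^{1:N}$ is a permutation of $\{1, \dots, N\}$, the pairs form a perfect matching of $\{1, \dots, 2N\}$, so that summing the contributions of the pairs reproduces $\sum_{n=1}^{2N}(\cdot)$. For a single pair, write $a = w_{t-1}^n$ and $b = w_{t-1}^{m^n}$, and let $(X', Y') \sim \bar{K}(X_{t-1}^n, X_{t-1}^{m^n}, \cdot, \cdot)$.

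The crux is the elementary scalar inequality $(a+b)^2 / 2 \leq a^2 + b^2$, i.e. $(a-b)^2 \geq 0$. On the event $\{X' = Y'\}$ the weight-update rule sets both updated weights to $(a+b)/2$ and both locations to $X'$, so this pair contributes $\tfrac{(a+b)^2}{2}\,\varphi(X')$; since $\varphi \geq 0$ and $X' = Y'$ on this event, that quantity is at most $(a^2 + b^2)\,\varphi(X') = a^2 \varphi(X') + b^2 \varphi(Y')$. On the complementary event $\{X' \neq Y'\}$ the weights are unchanged and the pair contributes exactly $a^2 \varphi(X') + b^2 \varphi(Y')$. Hence, realization by realization, the contribution of the pair at time $t$ is bounded above by $a^2 \varphi(X') + b^2 \varphi(Y')$.

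Taking conditional expectations and invoking the marginal property of the coupling, $\Exp\curly*{\varphi(X') \mid \mathcal{F}_{t-1}} = (K\varphi)(X_{t-1}^n)$ and likewise for $Y'$, so the expected contribution of the pair is at most $(w_{t-1}^n)^2 (K\varphi)(X_{t-1}^n) + (w_{t-1}^{m^n})^2 (K\varphi)(X_{t-1}^{m^n})$. Summing over the $N$ pairs, using the tower property, and then taking total expectation yields
\[
\Exp\curly*{\sum_{n=1}^{2N} (w_t^n)^2 \varphi(X_t^n)} \leq \Exp\curly*{\sum_{n=1}^{2N} (w_{t-1}^n)^2 (K\varphi)(X_{t-1}^n)},
\]
which is the claim. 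There is no substantive obstacle in this argument; the only points that require care are that the non-negativity of $\varphi$ is used precisely to pass from the scalar inequality $(a+b)^2/2 \leq a^2 + b^2$ to the pointwise bound on each pair's contribution, and that the coupled pairs genuinely partition the index set so that the pairwise bounds reassemble into the full sum.
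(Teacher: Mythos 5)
Your proof is correct and follows essentially the same route as the paper: a realization-by-realization bound on each coupled pair via the elementary inequality $(a+b)^2/2 \leq a^2 + b^2$ together with $\varphi \geq 0$, followed by taking expectations and using the marginal property of the coupled kernel. The paper's version is just terser, leaving the per-pair case analysis and the passage to $(K\varphi)(X_{t-1}^n)$ implicit, whereas you spell them out.
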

\begin{proof}
	Write
\begin{align*}
	\sum_{m=1}^{2N} (w_t^m)^2 \varphi(X_t^m)
	&= \sum_{n=1}^{N} (w_t^n)^2 \varphi(X_t^n) +  (w_t^{m_{t-1}^n})^2 \varphi(X_t^{m_{t-1}^n}) \\
	&\leq \sum_{n=1}^{N} (w_{t-1}^n)^2 \varphi(X_t^n) +  (w_{t-1}^{m_{t-1}^n})^2 \varphi(X_t^{m_{t-1}^n})
\end{align*}
where the inequality `$\leq$' holds for each individual term of the summation thanks to the coupling mechanism. The proof is concluded by taking the expectation on both sides.
\end{proof}

\begin{proof}[Proof of Theorem~\ref{thm:consistency}]
	We prove that statement by induction on $t$. We first remark that
\begin{align*}
	\hat \theta_t = \frac{1}{2N} \sum_{n=1}^{2N} w_t^n \varphi(X_t^n)
	&= \frac{1}{2N} \sum_{n=1}^{N} w_t^n \varphi(X_{t}^n) + w_t^{m_{t-1}^n} \varphi(X_t^{m_{t-1}^n}) \\
	&= \frac{1}{2N} \sum_{n=1}^{N} w_{t-1}^n \varphi(X_{t}^n) + w_{t-1}^{m_{t-1}^n} \varphi(X_t^{m_{t-1}^n})
\end{align*}
where the equality holds for each individual term in the summation thanks to the weight harmonization mechanism. Remark that
\begin{align*}
\Exp\curly*{\frac{1}{2N}\sum_{m=1}^{2N} w_t^m \varphi(X_t^m)   \mid \mathcal F_{t-1} }
&= \frac{1}{2N} \sum_{n=1}^{N} w_{t-1}^n K\varphi(X_{t-1}^n) + w_{t-1}^{m_{t-1}^n} K\varphi(X_{t-1}^{m_{t-1}^n}) \\
&= \theta_{t-1}
\end{align*}
where $\mathcal F_{t-1}$ contains all variables up to time $t-1$ included, \textit{including} the shuffling. 
Now, $\theta_{t-1} \overset{\Prob}{\rightarrow} \int (K\varphi) \mathrm d\gamma = \int \varphi \mathrm \dd\gamma$ by induction hypothesis. (Note the subtle use of Lemma~\ref{lem:jensen-kernel} to make sure that the fourth moment condition also holds for $K\varphi$ instead of $\varphi$.) 
Thus we only need to show that
\[ \hat\theta_t-\theta_{t-1}  {\rightarrow} 0 \]
in probability.
Using Chebyshev's inequality
\begin{align*}
	\Prob\left( \hat\theta_t-\theta_{t-1} \geq \varepsilon \mid \mathcal F_{t-1} \right)
	&\leq \varepsilon^{-2}\Exp\curly*{
	\left(
	\hat\theta_t-\theta_{t-1}
	 \right) ^ 2
	 \mid \mathcal F_{t-1}} \\
	 &= \varepsilon^{-2} \Var \left(
	 \frac{1}{2N} \sum_{n=1}^{N} w_{t-1}^n \varphi(X_{t}^n) + w_{t-1}^{m_{t-1}^n} \varphi(X_t^{m_{t-1}^n})
	 \mid \mathcal F_{t-1}\right) \\
	 &= (2\varepsilon N)^{-2} \sum_{n=1}^N 
	 \Var(w_{t-1}^n \varphi(X_{t}^n) + w_{t-1}^{m_{t-1}^n} \varphi(X_t^{m_{t-1}^n})|\mathcal F_{t-1}) \\
	 &\leq \frac 12 (\varepsilon N)^{-2} \sum_{n=1}^N (w_{t-1}^n)^2 (K[\varphi^2])(X_{t-1}^n) + 
	 (w_{t-1}^{m_{t-1}^n})^2 (K[\varphi^2])(X_{t-1}^{m_{t-1}^n}) \\
	 &= \frac{1}{2\varepsilon^2N^2} \sum_{m=1}^{2N} (w_{t-1}^m)^2 (K[\varphi^2])(X_{t-1}^m).
\end{align*}
Now taking expectation of both sides and applying Lemma~\ref{lem:jensen-coupling} repeatedly, we have
\begin{align}
	\Prob\left( \hat\theta_t-\theta_{t-1}\geq \varepsilon \right)
	\label{eq:conv_proba}
	 \leq \frac{1}{\varepsilon^2N} \Exp[(w_0^1)^2 K^t[\varphi^2](X_0^1)].
\end{align}
Now the last expectation does not depend on $N$, so to prove that~\eqref{eq:conv_proba} converges to $0$ as $N \to \infty$ it remains to check that that expectation is finite. We have
\[ 
\Exp \curly{(w_0^1)^2 K^t(\varphi^2)(X_0^1)} = \Exp_{\pi} \curly{w_0(X) K^t(\varphi^2)(X)} \leq (
	\Exp_\pi[w_0^2] \Exp_\pi\curly{\abs{K^t(\varphi^2)}}^2)
^{1/2}
\]
using Cauchy--Schwarz inequality.
The first expectation is finite by assumption. 
The second is finite too given that $\Exp_\pi[\varphi^4] < \infty$ and Lemma~\ref{lem:jensen-kernel} with $\alpha = 2$ and kernel $K^t$.
\end{proof}

\section{From consistency to diagnostics}
\label{app:consistency_to_diag}
\subsection{Regular Markov kernels}
\label{app:regular-kernel}
\begin{definition}
\label{def:inf-norm}
Let $K(x, \dd y)$ be a Markov kernel from $\bbR^d$ to $\bbR^d$ such that $K(x, \dd y)$ admits a density with respect to the Lebesgue measure. 
We define
\[ \norm{K}_\infty = \sup_{\mu}\frac{\norm{\mu K}_\infty}{\norm{\mu}_\infty} \]
where the supremum is taken over all densities $\mu$ in $\bbR^d$ for which we define $\norm{\mu}_\infty = \sup_{x \in \bbR^d} \mu(x)$, and the density $\mu K$ is given by $(\mu K)(x) = \int \mu(y) K(x, y) \dd y$.
\end{definition}
We are now ready to define the notion of regular Markov kernels.
\begin{definition}
\label{def:kernel-regular}
    A Markov kernel from $\bbR^d$ to $\bbR^d$ admitting a density with respect to the Lebesgue measure is called \emph{regular} if $\norm{K}_{\infty} < \infty$ where $\norm{K}_{\infty}$ is defined in~\ref{def:inf-norm}.
\end{definition}

\subsection{Regularity conditions for Corollary~\ref{cor:diag_kl}}
\label{app:corollary-kl-gauss}
In this section we verify the regularity conditions of Corollary~\ref{cor:diag_kl} for the random walk Metropolis--Hastings algorithm on a Gaussian target distribution $\pi$ and Gaussian starting distributions $\mu_0$.
The arguments can be easily generalized to larger classes of distributions.
The only non-trivial point is the regularity of the Metropolis--Hastings kernel in the sense of Definition~\ref{def:kernel-regular}.
Given the current state $y$, the random walk Metropolis--Hastings algorithm proposes a new state using a proposal $q(\dd y^*|y)$ of the form
\[ q(y^*|y) = \mathcal N(y^*; y, \delta \Sigma) \]
for some covariance matrix $\Sigma$ and a scale parameter $\delta$.
The next state is accepted with probability
\[ \alpha(y, y^*) = 1 \wedge \frac{\pi(y^*)}{\pi(y)}. \]
The average rejection rate is given by
\[
\bar r(y) = \int q(y^*|y) (1 - a(y, y^*)) \dd y
\]
so the full kernel reads
\[
K(y, \dd y^*) = q(\dd y^*|y) a(y,y^*) + r(y) \delta_y(\dd y^*).
\]
Starting from a measure $\mu$, after one iteration the chain has distribution
\[
(\mu K)(x) = \int \mu(y) K(y, x) \dd y
= \int \mu(y) q(x|y) a(y,x) \dd y + r(x) \mu(x).
\]
Thus, using the fact that $r(x)$ and $a(y,x)$ are bounded by $1$ and the symmetric form of the proposal $q$,
\begin{equation*}
\begin{split}
(\mu K)(x) &\leq \int \mu(y) q(x|y) \dd y + \mu(x) \\
&\leq \norm{\mu}_{\infty} \int q(x|y) \dd y + \norm{\mu}_\infty = \norm{\mu}_\infty \int q(y|x) \dd y + \norm{\mu}_\infty \\
&= 2\norm{\mu}_\infty.
\end{split}
\end{equation*}
Thus the kernel $K$ is regular with $\norm{K}_\infty \leq 2$.
\subsection{Proofs of Corollaries~\ref{cor:diag-smooth} and~\ref{cor:diag_kl}}
\begin{lemma}
\label{lem:propagation_finiteness}
Suppose that the initial distribution $\mu_0$ and the target distribution $\pi$ are such that $\dd \pi/\dd \mu_0 \leq M$ for some finite $M$. 
Then the distribution at the iteration $t$ of the MCMC algorithm satisfies $\dd \pi/\dd \mu_t \leq M$.
\end{lemma}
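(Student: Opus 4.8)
The plan is to prove the equivalent statement that $\mu_t$ dominates $M^{-1}\pi$ as a measure, i.e. $\mu_t(A) \ge M^{-1}\pi(A)$ for every measurable set $A$ (abbreviated $\mu_t \ge M^{-1}\pi$), and to propagate this order relation through the kernel by induction on $t$. The first step is an elementary measure-theoretic equivalence: for probability measures $\mu$ and $\pi$, the condition $\mu \ge M^{-1}\pi$ is equivalent to $\pi \ll \mu$ together with $\dd\pi/\dd\mu \le M$ holding $\mu$-almost surely. Indeed, if $\mu \ge M^{-1}\pi$ then every $\mu$-null set is $\pi$-null, so $\pi \ll \mu$ and $\psi := \dd\pi/\dd\mu$ is well defined; the inequality $\int_A (1 - M^{-1}\psi)\,\dd\mu \ge 0$ for all $A$ then forces $\psi \le M$ $\mu$-a.s., and the converse implication is immediate.

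With this reformulation, the base case $t = 0$ is exactly the hypothesis $\dd\pi/\dd\mu_0 \le M$. For the inductive step, assume $\mu_t \ge M^{-1}\pi$, so that $\nu_t := \mu_t - M^{-1}\pi$ is a non-negative finite measure. A Markov kernel is order-preserving on measures: for any non-negative measure $\nu$, $(\nu K)(A) = \int \nu(\dd x)\,K(x,A) \ge 0$, so $\nu_t K \ge 0$. By linearity and the $\pi$-invariance of $K$,
\[
\nu_t K \;=\; \mu_t K - \tfrac{1}{M}\,\pi K \;=\; \mu_{t+1} - \tfrac{1}{M}\,\pi ,
\]
hence $\mu_{t+1} \ge M^{-1}\pi$. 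By the equivalence of the first paragraph this yields $\pi \ll \mu_{t+1}$ and $\dd\pi/\dd\mu_{t+1} \le M$, which closes the induction.

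There is no substantial obstacle here; the only points requiring a line of care are (i) the equivalence between a uniform bound on the Radon--Nikodym derivative and the pointwise domination of measures, and (ii) verifying that $\pi \ll \mu_{t+1}$ still holds so that the derivative remains well defined at each step — both of which are delivered by the domination $\mu_{t+1} \ge M^{-1}\pi \ge 0$ obtained in the inductive step.
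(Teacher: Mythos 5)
Your proof is correct, and it takes a genuinely different route from the paper's. The paper argues pointwise on densities: it writes $\frac{\dd \pi}{\dd \mu_t}(x_t)$ as the conditional expectation, given $X_t = x_t$, of the joint Radon--Nikodym derivative $\frac{\dd (\pi \times K^t)}{\dd(\mu_0 \times K^t)}(X_0, X_t)$, which factorizes as $\frac{\dd\pi}{\dd\mu_0}(X_0) \leq M$ times a ratio of identical kernels, so the bound is inherited by the time-$t$ marginal through this backward disintegration. You instead reformulate the derivative bound as the setwise domination $\mu_t \geq M^{-1}\pi$ and propagate it by the two elementary facts that a Markov kernel maps non-negative measures to non-negative measures and that $\pi K = \pi$. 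Your route buys simplicity and rigour at low cost: it avoids conditional expectations and disintegration entirely, and it makes explicit the absolute continuity $\pi \ll \mu_t$ at every step (which the paper's proof takes for granted so that $\frac{\dd\pi}{\dd\mu_t}$ is even defined), with the a.s.\ qualification of the bound coming out cleanly from the equivalence you establish in your first paragraph. The paper's route, on the other hand, is stated in one line and generalizes naturally to situations where one wants to track the derivative along the whole trajectory rather than just the marginal. A minor remark: your induction is not needed — applying your positivity-plus-invariance argument once to the $t$-step kernel $K^t$ gives $\mu_t = \mu_0 K^t \geq M^{-1}\pi K^t = M^{-1}\pi$ directly — but this does not affect correctness.
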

\begin{proof}
We have
\begin{align}
    \frac{\dd \pi}{\dd \mu_t}(x_t) 
        &= \Exp\curly*{\frac{\dd (\pi \times K^t)}{\dd (\mu_0 \times K^t)}(X_0, X_t) \mid X_t = x_t}\\
        &\leq \Exp\curly*{M \frac{\dd K^t(x_0, \cdot)}{\dd K^t(x_0, \cdot)}(X_t) \mid X_t = x_t} \leq M.
\end{align}
\end{proof}

To prove Corollaries~\ref{cor:diag-smooth} and~\ref{cor:diag_kl}, we need to verify Assumptions~\ref{ass:weighted-convergence} and~\ref{ass:unweighted-LLN} on the system of particles $X_t^{1:N}$ and weights $W_t^{1:N}$ so that Theorem~\ref{thm:weighted_as_diag} can be applied.
Using Theorem~\ref{thm:consistency}, Assumption~\ref{ass:weighted-convergence} is fulfilled if
\begin{equation}
\label{eq:ass-w-in-act}
\Exp_{\pi} \brac*{f'\curly*{
\frac{\dd \pi}{\dd \mu_t}(X)
}^4} < \infty.
\end{equation}
By the strong law of large number and the independence of $X_t^i$ and $X_t^j$ for $1 \leq i < j \leq N $ and $N+1 \leq i < j \leq 2N$, Assumption~\ref{ass:unweighted-LLN} is fulfilled if
\newcommand{\dpidmut}{\frac{\dd \pi}{\dd \mu_t}}
\begin{align}
\label{eq:ass-uw-in-act-1}
\Exp_{\pi_t}\brac*{\dpidmut(X) \abs*{f'\curly*{\dpidmut(X)}}} &< \infty;\\
\label{eq:ass-uw-in-act-2}
\Exp_{\pi_t}\brac*{\abs*{f\curly*{\dpidmut(X)}}} &< \infty.
\end{align}

\begin{proof}[Proof of Corollary~\ref{cor:diag-smooth}]
Recall that $\pi(x) \leq L \mu_0(x)$ for some constant $L$.
Using Lemma~\ref{lem:propagation_finiteness} we have $\pi(x) \leq L\mu_t(x)$ as well.
It is then easy to verify~\eqref{eq:ass-w-in-act}, ~\eqref{eq:ass-uw-in-act-1} and~\eqref{eq:ass-uw-in-act-2} by noting that all the three functions $f'(t)$, $tf'(t)$, and $f(t)$ are bounded on $[0, L]$.
\end{proof}

\begin{proof}[Proof of Corollary~\ref{cor:diag_kl}]
Recall that $\pi(x) \leq L_1 \mu_0(x)$ for some constant $L_1$.
Using Lemma~\ref{lem:propagation_finiteness} we have $\pi(x) \leq L_1\mu_t(x)$ as well.
It is then easy to verify~\eqref{eq:ass-uw-in-act-1} and~\eqref{eq:ass-uw-in-act-2} by noting that $f(t) = t \log t$ and $tf'(t) = t(\log t + 1)$ are both bounded on the interval $[0, L_1]$.
To verify~\eqref{eq:ass-w-in-act}, we rewrite it for our specific $\kl$ divergence:
\begin{equation}
\label{eq:fourth-moment-kl}
\Exp_\pi \brac*{\curly*{1 + \log \frac{\pi(X)}{\mu_t(X)}}^4} < \infty.
\end{equation}
Using the regularity of kernel $K$, the hypothesis $\norm{\mu_0}_\infty \leq L_2$, and the fact that $\pi(x) \leq L_1 \mu_t(x)$ we have
\[ -\log L_1 + \log \pi(x) \leq \log \mu_t(x) \leq t \log \norm{K}_\infty + L_2 \]
which means
\begin{equation}
    |\log \mu_t(x)| \leq |\log L_1| + |\log \pi(x)| + t\log \norm{K}_\infty + L_2.
\end{equation}
Combining this inequality with the assumption $\Exp_\pi\curly{(1+\log \pi(X))^4} < \infty$, we obtain~\eqref{eq:fourth-moment-kl} and conclude the proof.
\end{proof}

\section{Proof of Proposition~\ref{prop:decreasing-bound} (Non-increasing bound)}
Consider the $f$-divergence upper bound at time $t$:
\begin{equation}
    \frac{1}{2N}\sum_{n=1}^{2N} f(N W_t^n).
\end{equation}
The weight update only affects pairs of particles that couple. 
We first analyze the effect on a single pair $(n, m^n = A_t^n+N)$. 
The weights before the update are $W_t^n$ and $W_t^{m^n}$.
\begin{itemize}
    \item If the particles do not couple, $W_{t+1}^n = W_t^n$ and $W_{t+1}^{m^n} = W_t^{m^n}$. 
    The contribution to the upper bound, $f(N W_t^n) + f(N W_t^{m^n})$, is unchanged.
    \item If the particles do couple, the new weights become $W_{t+1}^n = W_{t+1}^{m^n} = (W_t^n + W_t^{m^n}) / 2$. 
    By convexity of $f$, using Jensen's inequality, the contribution to the upper bound is \emph{decreased}
    \begin{align*}
        f(N W_{t+1}^n) + f(N W_{t+1}^{m^n}) \leq f(N W_t^n) + f(N W_t^{m^n}).
    \end{align*}
\end{itemize}
In either case, for any realization of the algorithm and for any pair, the contribution to the $f$-divergence bound is non-increasing.
Summing over all pairs, we have for any realization:
\begin{align*}
    \frac{1}{2N}\sum_{n=1}^{2N} f(N W_{t+1}^n) \leq \frac{1}{2N} \sum_{n=1}^{2N} f(N W_{t}^n).
\end{align*}

\section{Proof of Theorem~\ref{thm:as-convergence} (Bound decreasing to zero)}\label{app:proof-as-convergence}
The proof is split into 3 main lemmata which we informally describe below.
Lemma~\ref{lem:proj} ascertains that the harmonization operation is a projection onto diagonal subspaces for the weights.
Lemma~\ref{lem:conv} proves that a sequence of orthogonal projections taken from a finite set, and applied an infinite amount of times, is the projection on the intersection of the subspaces.
Lemma~\ref{lem:ergodicity} is used to show that the harmonization operation happens infinitely often on the correct subspaces, linking Lemma~\ref{lem:proj} to Lemma~\ref{lem:conv}.
Finally, Lemma~\ref{lem:finite-bernoulli-sum} is a technical lemma only used to prove Lemma~\ref{lem:ergodicity}.

\begin{lemma}
	\label{lem:proj}
	For $i \neq j$ the matrix $H^{ij}$ of size $2N \times 2N$ defined by
	\begin{itemize}
		\item $H^{ij}_{m,m}=1, \forall m \notin \{i, j\}$,
		\item $H^{ij}_{ij} = H^{ij}_{ji} = H^{ij}_{ii} = H^{ij}_{jj} = 1/2$,
		\item all other elements are zero,
	\end{itemize}
    that is, 
    \begin{equation}
        H^{ij} = 
        \begin{pmatrix}
            1 & \cdots & 0 & \cdots & 0 & \cdots & 0 \\
            \vdots & \ddots & \vdots & & \vdots & & \vdots \\
            0 & \cdots & 1/2 & \cdots & 1/2 & \cdots & 0 \\
            \vdots & & \vdots & \ddots & \vdots & & \vdots \\
            0 & \cdots & 1/2 & \cdots & 1/2 & \cdots & 0 \\
            \vdots & & \vdots & & \vdots & \ddots & \vdots \\
            0 & \cdots & 0 & \cdots & 0 & \cdots & 1
        \end{pmatrix}
    \end{equation}
    is an orthogonal projection onto the hyperplane $L_{ij} = \{x \in \bbR^{2N} | x_i = x_j \}$.
\end{lemma}

\begin{lemma}\label{lem:conv}
	Let $x \in \bbR^d$ and let $\mathcal L$ be a finite family of subspaces of $\bbR^d$. 
    Let $L_1, L_2, \ldots$ be a sequence of elements in $\mathcal L$ such that, for all $n$
	\[ 
    \cap_{i=n}^\infty L_i = L^*.
	\]
    and denote $\proj_{A}$ the orthogonal projection on a subspace $A$.
	Then
	\[
	\proj_{L_n} \circ \cdots \circ \proj_{L_1}(x) \to \proj_{L^*}(x)
	\]
	as $n\to\infty$.
\end{lemma}

\begin{lemma}
	\label{lem:ergodicity}
    If two particles $X_t^m$ and $X_t^{m'}$ are paired after shuffling, i.e. $m' = A_t^m + N$, define $B_t(m') = m$ and
    \begin{align*}
    	\tau(X_t^m, X_t^{m'}) = \inf(s \geq 1 | X_{t+s}^{m} = X_{t+s}^{m'}), \\
    \end{align*}
    Let $1 \leq i \leq N$ and $N+1 \leq j \leq 2N$ and
    define $N_{ij}$ as
    \[
    N_{ij} = \sum_t \mathds 1\curly*{\tau(X_t^i, X_t^{N+A_t^i}) = \tau(X_t^{B_t(j)}, X_t^j) = 1; j = N + A_{t+1}^{i}}.
    \]    
    Then, under Assumptions~\ref{asp:compact_set},~\ref{asp:ergodicity}, and~\ref{asp:finite_coupling_time}, $N_{ij} = \infty$ almost surely, leading to an infinite number of $t$ at which the two weights $W_t^i$ and $W_t^j$ are harmonized.
\end{lemma}
\begin{remark}
    In plain English, each term of $N_{ij}$ checks whether the two pairs respectively containing $X_t^i$ and $X_t^j$ both successfully couple at the very next iteration; and that at that time the shuffling pairs up $X_{t+1}^{i}$ and $X_{t+1}^j$.
\end{remark}

\begin{lemma}\label{lem:finite-bernoulli-sum}
Let $(\mathcal{G}_t)$ be a filtration and $(Z_t)$ be a sequence of Bernoulli variables such that $Z_t$ is $\mathcal{G}_t$-measurable. 
If there exists $\rho > 0$ such that
\[
\liminf_{s \to \infty} \Prob(Z_s = 1 \mid \mathcal{G}_t) \geq \rho
\]
almost surely, then $\sum_{t=0}^{\infty} Z_t$ is infinite almost surely.
\end{lemma}

We defer the proof of the lemmata to the end of this section, and now provide the proof of Theorem~\ref{thm:as-convergence}.

\begin{proof}[Proof of Theorem~\ref{thm:as-convergence}]
	Define the vector $W^{1:2N}_t = [W_t^1 \ldots W_t^{2N}]^\top$ and the matrix $H^{ij}$ as in Lemma~\ref{lem:proj}.
	Let $C_t = \{ n | X_t^n = X_t^{A_{t-1}^n+N}\}$ the set of indices for coupled chains.
	Then
	\[
	W^{1:2N}_t = \prod_{n \in C_t} H_{n, A_{t-1}^n+N} \vec W_{t-1}
\]
where the notation $\prod$ is legitimate as the considered matrices commute.
Additionally since the coupling times are finite a.s., for each $A_t^n$ and $n$ we will necessarily multiply our weight vector by $H_{n, A_t^n+N}$ at some point in the future.
By Lemma~\ref{lem:ergodicity}, all the projections $H_{i,j}$ for $i \leq N$ and $N+1 \leq j \leq 2N$ will be involved an infinite number of times.
Finally we have
\[
	\bigcap_{\substack{1\leq i \leq N\\ N + 1 \leq j \leq 2N}} L_{ij} = \{x | x_1 = \ldots = x_{2N}\}
\]
where $L_{ij}$ are defined in Lemma~\ref{lem:proj}.
Applying Lemma~\ref{lem:conv} we have $W^{1:2N}_t \to \curly*{1/(2N) \ldots 1/(2N)}^{\top}$ a.s.
The weights being bounded, this almost-sure convergence entails the convergence in expectation.
By the continuity of $f$, we have $\sum_{m=1}^{2N} f(2NW_t^m)/2N$ converges to $0$ almost surely.
Using the dominated convergence theorem, we see that it also converges in expectation if $f$ is continuous at $0$ (and thus is bounded on $[0, 2N]$).
\end{proof}

\begin{proof}[Proof of Lemma~\ref{lem:proj}]
	Clearly $H^{ij}$ is symmetric, and direct calculation shows that $H^{ij} H^{ij} = H^{ij}$.
    Now, solving $H^{ij} x = 0$ gives the solution $x_m = 0$ for all $m \notin \{i, j\}$, and $x_i=-x_j$, the orthogonal of which is $L_{ij}$.
\end{proof}

\begin{proof}[Proof of Lemma~\ref{lem:conv}]

Define $a^* = \proj_{L^*}(x)$ and let $s_n = \proj_{L_n} \circ \cdots \circ \proj_{L_1}(x)$.

\textbf{Step 1.} Show that, for all $v \in L_n$, we have
\begin{equation}
\label{eq:decrease_global}
\norm{s_n - v}^2 = \norm{s_{n-1}-v}^2 - d(s_{n-1}, L_n)^2,
\end{equation}
where $d(\cdot, A)$ denotes the distance to a closed set $A$ for the Euclidean norm.
This comes from the identity
\[
\norm{w}^2 = \norm{\proj_{L_n}(w)}^2 + d(w, L_n)^2
\]
and we replace $w = s_{n-1}-v$ and remark that $\proj_{L_n}(s_{n-1}-v) = s_n - v$ as well as
\[
d(s_{n-1}-v, L_n) = d(s_{n-1}, L_n + v) = d(s_{n-1}, L_n).
\]

\textbf{Step 2.} Let $a$ be an accumulation point of $(s_n)$. We prove by contradiction that $a \in L^*$.
Suppose $a \notin L^*$. By definition of accumulation point there exists an increasing sequence $(k_n)$ such that $s_{k_n} \to a$.
Additionally $a \notin L^*$ hence there are an infinity of $n$ such that $a \notin L_n$.
We can therefore define
\begin{equation}
\label{eq:def_tn}
t_n = \min(m > k_n | a \notin L_m ) < \infty
\end{equation}
and note that there is an infinite number of such $t_n$'s because $k_n \to \infty$.
By construction $a \in \cap_{j=k_{n} + 1}^{t_n - 1} L_j$, so, using the identity~\eqref{eq:decrease_global} recursively (with the edge case $t_{n} = k_n + 1$ being an equality), we have
\begin{equation}
	\label{eq:decrease_local}
	\norm{s_{t_n - 1} - a} \leq \norm{s_{t_n - 2} - a} \leq \cdots \leq \norm{s_{k_n} - a}.
\end{equation}
Moreover $\mathcal L$ is finite so let 
\begin{equation}
\label{eq:def_epsilon}
\varepsilon = \min_{L \in \mathcal L,  a \notin L} d(a, L) > 0.
\end{equation}

\textbf{Step 2 (contradiction).} We now apply the identity~\eqref{eq:decrease_global} again
\[
\norm{s_{t_n} - a^*}^2 = \norm{s_{t_n - 1} - a^*}^2 - d(s_{t_n-1}, L_{t_n})^2.
\]
For $n$ large enough such that $\norm{s_{k_n} - a} \leq \varepsilon/2$ we have
\begin{align*}
	d(s_{t_n-1}, L_{t_n}) &\geq d(a, L_{t_n}) - \norm{s_{t_{n}-1} - a} \quad \text{by the triangle inequality}\\
	&\geq \varepsilon - \norm{s_{k_n} - a} \quad \text{by~\eqref{eq:def_tn}, ~\eqref{eq:decrease_local}, and~\eqref{eq:def_epsilon}} \\
	&\geq \varepsilon - \varepsilon/2 = \varepsilon/2.
\end{align*}
This means that $\norm{s_{n} - a^*} \geq 0$ makes an infinite number of decreases of magnitude at least $\varepsilon/2$.
This is impossible, so we must have $a \in L^*$.

\textbf{Step 3.} 
The remainder of the proof identifies $a^*$ by means of a fixed point argument.
We have
\[
\proj_{L^*}(s_{n+1}) = \proj_{L^*} \circ \proj_{L_{n+1}} (s_n) =
\proj_{L^*}(s_n).
\]
So $\proj_{L^*}(s_n)$ is constant with respect to $n$ and is equal to $a^*$.
Since $a$ is an accumulation point of $s_n$ we must have $\proj_{L^*}(a) = a^*$ as well.
Yet $a \in L^*$ so we have $a=a^*$.
\end{proof}

\begin{proof}[Proof of Lemma~\ref{lem:ergodicity}]
    Let $Z_t$ be the indicator variable corresponding to the event at time $t$ in the definition of $N_{ij}$.
    Note that $Z_t$ is $\mathcal F_{t+1}$-measurable.

    We aim to apply Lemma~\ref{lem:finite-bernoulli-sum}. Fix an arbitrary deterministic time $T$ and consider $t > T + 1$.
    We write $Z_t = Z_t^1 Z_t^2$ where $Z_t^1$ is the event that the relevant pairs couple at time $t+1$ (i.e. $\tau=1$), and $Z_t^2$ is the event that the shuffling at $t+1$ pairs $i$ and $j$.
    Specifically,
    \begin{align*}
    	Z_t^1 &= \mathds 1 \left(E_1\right), \\
    	Z_t^2 &= \mathds 1 \left(E_2\right),
    \end{align*}
    for $E_1 = \curly*{\tau(X_t^i, X_t^{N+A_t^i}) = \tau(X_t^{B_t(j)}, X_t^j) = 1}$ and $E_2 = \curly*{j = N + A_{t+1}^i}$.
    
    We are interested in $\Prob(Z_t=1 \mid \mathcal F_T)$.
    First, consider the shuffling term $Z_t^2$. 
    The shuffling $A_{t+1}$ is uniform and independent of the particle positions and coupling outcomes, given the set of coupled indices.
    If $Z_t^1=1$, the pairs have coupled at $t+1$, so they are available to be paired.
    Thus, because there are at most $N$ particles that can be paired to $j$,
    \[
    \Prob\curly*{Z_t^2=1 \mid Z_t^1=1, \sigma(\mathcal F_t, \mathcal F_{t+1}^X) } \geq \frac{1}{N},
    \]
    where $\mathcal F_{t+1}^X$ is the sigma-algebra generated by the particles up to $t+1$.
    The law of total expectation (recall $\mathcal{F}_T \subseteq \sigma(\mathcal F_t, \mathcal F_{t+1}^X)$ then gives
    \[
    \Prob(Z_t^2=1 \mid Z_t^1=1, \mathcal F_T) \geq \frac{1}{N}.
    \]
    
    Now consider $Z_t^1$. 
    We want a lower bound for the probability that the two pairs in its definition couple in one step.
    To do so, first choose a compact set $\setC$ large enough such that
    \begin{equation}
    \label{eq:setC_def}
    \pi(\setC) > 1 - \frac{1}{4N}
    \end{equation}
    where $\pi$ is the invariant measure.
    By Assumption~\ref{asp:compact_set}, we have
    \begin{equation}
    \label{eq:setC_prop}
    \nu = \inf_{(x, y) \in \setC\times\setC} \Prob(X' = Y' \mid x, y) > 0, 
    \end{equation}
    that is, the probability that a pair whose both elements lie in $\setC$ couples in one iteration is lower-bounded by some strictly positive $\nu$.
    
    Now let $\bar X_t = (X_t^i, X_t^{N+A_t^i}, X_t^{B_t(j)}, X_t^j)$.
    By~\eqref{eq:setC_prop}, if $\bar X_t \in \setC^4$, then the probability of coupling at the next step is bounded below by $\nu^2$ (since we need two independent couplings).
    So $\Prob(Z_t^1=1 \mid \bar X_t) \geq \nu^2 \mathds{1}(\bar X_t \in \setC^4)$.
    
    Combining these, for $t > T$,
    \[
    \Prob(Z_t=1 \mid \mathcal F_T) \geq \frac{\nu^2}{N} \Prob(\bar X_t \in \setC^4 \mid \mathcal F_T).
    \]
    We will now show that $\liminf_{t \to \infty} \Prob(\bar X_t \in \setC^4 \mid \mathcal F_T) \geq \frac 12$, in fact we shall show the stronger statement
    $\liminf_{t \to \infty} \Prob( X_t^{1:2N} \in \setC^{2N} \mid \mathcal F_T) \geq \frac 12$.
    Write
\begin{equation}
\begin{split}
\Prob(X_t^{1:2N} \in \setC^{2N} \mid \mathcal F_T) &\geq 1 - \sum_{m=1}^{2N} \Prob(X_t^m \notin \setC \mid \mathcal F_T) \\
&\geq 1 - \sum_{m=1}^{2N} \pi(\mathcal X \setminus \setC) + \norm{\delta_{X_T^m} K^{t-T} - \pi}_{\tv} \\
&\geq \frac 12 - \sum_{m=1}^{2N} \norm{\delta_{X_t^m} K^{t-T} - \pi}_{\tv}
\end{split}
\end{equation}
    where the first inequality is a rewrite of the union bound; the second inequality follows from the definition of the total variation distance: $\Exp_{\mu}\{f(X)\}\leq \Exp_{\pi}\{f(X)\} + \norm{\mu - \pi}_{\tv}$ for $\abs{f} \leq 1$; and the last inequality follows from~\eqref{eq:setC_def}.
    By Assumption~\ref{asp:ergodicity} on the ergodicity
    \[\liminf_{t \to \infty} \Prob( X_t^{1:2N} \in \setC^{2N} \mid \mathcal F_T) \geq \frac 12.\]
    Consequently,
    \[
    \liminf_{t \to \infty} \Prob(Z_t=1 \mid \mathcal F_T) \geq \frac{\nu^2}{2N} > 0 \quad \text{a.s.}
    \]
    Lemma~\ref{lem:finite-bernoulli-sum} then implies that $\sum_{t=0}^\infty Z_t = \infty$ almost surely.
\end{proof}

\begin{proof}[Proof of Lemma~\ref{lem:finite-bernoulli-sum}]
Let $E$ be the event that the sum is finite:
\[
E = \left\{ \sum_{t=0}^{\infty} Z_t < \infty \right\}.
\]
We aim to show that $\Prob(E) = 0$.

    On the set $E$, there are only finitely many successes, meaning $Z_s(\omega) \to 0$ as $s \to \infty$. 
    On the set $E^c$, the indicator function $\mathds{1}_E$ is zero.
    Therefore, the sequence of random variables $Z_s \mathds{1}_E$ converges to $0$ pointwise everywhere as $s \to \infty$.

    Fix an arbitrary time $t$. 
    Since $|Z_s \mathds{1}_E| \leq 1$, we have:
    \[
    \lim_{s \to \infty} \Exp[Z_s \mathds{1}_E \mid \mathcal{G}_t] = 0,
    \]
    almost surely.

    Consider the conditional probability provided in the hypothesis. We decompose the expectation over $E$ and $E^c$:
    \[
    \Prob(Z_s = 1 \mid \mathcal{G}_t) = \Exp\paren{Z_s \mid \mathcal{G}_t} = \Exp\paren{Z_s \mathds{1}_E \mid \mathcal{G}_t} + \Exp\paren{Z_s \mathds{1}_{E^c} \mid \mathcal{G}_t}.
    \]

    Taking the limit inferior as $s \to \infty$ on both sides:
    \[
    \liminf_{s \to \infty} \Prob(Z_s = 1 \mid \mathcal{G}_t) = 0 + \liminf_{s \to \infty} \Exp\curly{Z_s \mathds{1}(E^c) \mid \mathcal{G}_t}.
    \]
    Using the hypothesis $\liminf_{s \to \infty} \Prob(Z_s = 1 \mid \mathcal{G}_t) \geq \rho$ almost surely, and noting that $Z_s \mathds{1}_{E^c} \leq \mathds{1}_{E^c}$:
    \[
    \rho \leq \liminf_{s \to \infty} \Exp\curly*{Z_s \mathds{1}(E^c) \mid \mathcal{G}_t} \leq \Exp\curly*{\mathds{1}(E^c) \mid \mathcal{G}_t} = 1 - \Prob(E \mid \mathcal{G}_t).
    \]
    Rearranging this inequality yields:
    \[
    \Prob(E \mid \mathcal{G}_t) \leq 1 - \rho,
    \]
    almost surely.

    The event $E$ is measurable with respect to $\mathcal{G}_{\infty} = \sigma(\bigcup_t \mathcal{G}_t)$. 
    Note that $E \in \mathcal{G}_{\infty}$ because $E = \bigcup_{M = 1}^{\infty} \bigcap_{t=0}^{\infty} \curly*{Z_{0} + \ldots + Z_{t} \leq M}$.
    By Lévy's 0-1 Law, the conditional probability $\Prob(E \mid \mathcal{G}_t)$ converges to the indicator variable $\mathds{1}_E$ almost surely as $t \to \infty$.
    Taking the limit as $t \to \infty$ in our derived inequality:
    \[
    \mathds{1}_E \leq 1 - \rho.
    \]
    If $\omega \in E$, then $1 \leq 1 - \rho$, which implies $\rho \leq 0$. 
    This contradicts the assumption that $\rho > 0$.

    Therefore, $\Prob(E) = 0$, implying $\sum_{t=0}^{\infty} Z_t$ is infinite almost surely.
\end{proof}

\section{Proof of Theorem~\ref{thm:ergodic-weight-convergence} (Exponential convergence)}\label{app:proof-ergodic}
The core idea is to show that the expected squared Euclidean distance to the uniform weight vector $\bar{W}$ is a Lyapunov function for the process, contracting every two steps. 
Let $V_t = \norm{W_t - \bar{W}}_2^2 = \sum_{i=1}^{2N} \{W_t^i - 1/(2N)\}^2$. 
Note that $V_t=0$ if and only if $W_t = \bar{W}$.

We start by stating two easily verified identities:
for any collection of real numbers $(q_n)_{n=1}^N$,
\begin{equation}\label{eq:first}
    \sum_{n=1}^N (q_n - \bar{q})^2 = \frac{1}{2N} \sum_{m,n=1}^{N}(q_n - q_m)^2
\end{equation}
with $\bar{q} = \sum_{n=1}^N q_n / N$, and
for any $x, y, z \in \bbR$,
\begin{equation}\label{eq:second}
    (x-y)^2 + (x + y - 2z)^2 = 2 [(x-z)^2 + (y-z)^2].
\end{equation}

Now, at each step $t+1$, for each pair of indices $(n, m^n_t)$ where $m^n_t = A_t^n+N$, the particles are propagated through the coupled kernel $\bar{K}$. 
If the particles couple, which occurs with probability $p_{n,t} \ge p_c$, their corresponding weights are averaged: $W_{t+1}^n = W_{t+1}^{m^n_t} = (W_t^n + W_t^{m^n_t})/2$. If they do not couple, the weights remain unchanged.

The weight update rule implies that the sum of weights is conserved, $\sum_{i=1}^{2N} W_{t+1}^i = \sum_{i=1}^{2N} W_t^i = 1$.

We can express $V_{t+1}$ in terms of $V_t$. 
The change in the sum of squares only affects the weights of coupled pairs. 
Let $C_{t+1}$ be the random set of indices $n \in \{1, \dots, N\}$ for which the corresponding pair of particles coupled at step $t+1$. 
For a single coupled pair indexed by $n \in C_{t+1}$ (with partner $m^n_t = A_t^n+N$), the change in their contribution to $V_t$ is:
\begin{align*}
    \Delta V_n 
      &= 2\paren*{\frac{W_t^n+W_t^{m^n_t}}{2} - \frac{1}{2N}}^2 - \paren*{W_t^n - \frac{1}{2N}}^2 - \paren*{W_t^{m^n_t} - \frac{1}{2N}}^2 \\
      &= -\frac{1}{2}(W_t^n - W_t^{m^n_t})^2,
\end{align*}
as per the second identity \eqref{eq:second} stated at the beginning of this proof.
Summing over all coupled pairs, we get the total change in $V_t$:
\begin{equation*}
    V_{t+1} = V_t + \sum_{n \in C_{t+1}} \Delta V_n = V_t - \frac{1}{2}\sum_{n \in C_{t+1}} \left(W_t^n - W_t^{m^n_t}\right)^2.
\end{equation*}
Now, let $\mathcal{F}_t$ be the filtration generated by the history of particles and pairings up to time $t$. 
We take the conditional expectation with respect to the random coupling events.
\begin{align}
    \Exp(V_{t+1} \mid \mathcal{F}_t) &= \Exp\curly*{V_t - \frac{1}{2}\sum_{n=1}^N \mathds{1}(n \in C_{t+1}) (W_t^n - W_t^{m^n_t})^2 \mid \mathcal{F}_t} \\
    &= V_t - \frac{1}{2}\sum_{n=1}^N \Exp\curly*{\mathds{1}(n \in C_{t+1}) \mid \mathcal{F}_t} (W_t^n - W_t^{m^n_t})^2.\label{eq:trick}
\end{align}
By Assumption~\ref{ass:coupling}, the probability that pair $n$ couples is at least $p_c$, regardless of the current particle states $X_t^n, X_t^{m^n_t}$. 
Thus, $\Exp\curly*{\mathds{1}(n \in C_{t+1}) \mid \mathcal{F}_t} \ge p_c$. This gives us:
\begin{equation*}
    \Exp(V_{t+1} \mid \mathcal{F}_t) \le V_t - \frac{p_c}{2}\sum_{n=1}^N (W_t^n - W_t^{m^n_t})^2.
\end{equation*}
The term $\sum_{n=1}^N (W_t^n - W_t^{m^n_t})^2$ represents the potential for variance reduction for a given pairing $A_t$. 

Applying the same argument to $\Exp\paren*{V_{t} \mid \mathcal{F}_{t-1}} \leq V_{t-1} - p_c \sum_{n=1}^N (W_{t-1}^n - W_{t-1}^{m^n_{t-1}})^2 / 2$, we can then write, using the tower law of expectations, that, conditionally on all random variables generated up to time $t-1$,
\begin{align*}
    \Exp\paren{V_{t+1} \mid \mathcal{F}_{t-1}}
        \le V_{t-1} &- \frac{p_c}{2}\sum_{n=1}^N (W_{t-1}^n - W_{t-1}^{m^n_t})^2 - \frac{p_c}{2}\Exp\curly*{\sum_{n=1}^N (W_t^n - W_t^{m^n_t})^2 \mid \mathcal{F}_{t-1}}.
\end{align*}
Now, one can show by rearranging terms, that  because $(A_{t}^n)_{n=1}^N$ is a uniform shuffling of $(A_{t-1}^n)_{n=1}^N$ 
on $C_t$
\begin{align*}
    \Exp\curly*{\sum_{n=1}^N (W_t^n - W_t^{m^n_t})^2 \mid \mathcal{F}_{t-1}} \geq \Exp\curly*{\sum_{i,j \in C_{t}} \frac{1}{\abs{C_t}}(W_t^i - W_t^{m^j_{t-1}})^2 \mid \mathcal{F}_{t-1}} 
\end{align*}
where, by convention $0 / 0 = 0$ and empty sums are null so that the inequality is trivial if no coupling happens.
As a consequence, using the definition of $W_{t}^n$ as an average of $W_{t-1}^n$ and $W_{t-1}^{m^{n}_{t-1}}$ as well as using \eqref{eq:first} stated at the beginning of this proof, the above inequality implies
\begin{align}
    \Exp\curly*{\sum_{n=1}^N (W_t^n - W_t^{m^n_t})^2 \mid \mathcal{F}_{t-1}}
        &\geq \frac{1}{4} \Exp\curly*{\sum_{i, j \in C_t} \frac{1}{\abs{C_t}} \paren*{W_{t-1}^{i} + W_{t-1}^{m^i_t} - W_{t-1}^{j} - W_{t-1}^{m^j_t}}^2 \mid \mathcal{F}_{t-1}}\\
        &\geq \frac{1}{4 N} p_c^2 \sum_{i, j=1}^N \paren*{W_{t-1}^{i} + W_{t-1}^{m^i_t} - W_{t-1}^{j} - W_{t-1}^{m^j_t}}^2
\end{align}
where we have used the fact that $\abs{C_t} \leq N$, and the same inequality of the sum on $C_t$ as in~\eqref{eq:trick}.
Finally, using \eqref{eq:first}
\begin{align}
    \Exp\curly*{\sum_{n=1}^N (W_t^n - W_t^{m^n_t})^2 \mid \mathcal{F}_{t-1}}&\geq \frac{p_c^2}{2} \sum_{n=1}^N \paren*{W_{t-1}^{n} + W_{t-1}^{m^n_t} - \frac{1}{N}}^2
\end{align}
and combining everything, noting that $p_c / 2 > p_c^3 / 4$,
\begin{align}
    \Exp\paren*{V_{t+1} \mid \mathcal{F}_{t-1}} 
        &\leq V_{t-1} 
        - \frac{p_c}{2}\sum_{n=1}^N (W_{t-1}^n - W_{t-1}^{m^n_t})^2 - \frac{p_c}{2}\Exp\curly*{\sum_{n=1}^N (W_t^n - W_t^{m^n_t})^2 \mid \mathcal{F}_{t-1}} \\
        &\leq V_{t-1} 
        - \frac{p_c}{2}\sum_{n=1}^N (W_{t-1}^n - W_{t-1}^{m^n_t})^2 - \frac{p_c^3}{4}\sum_{n=1}^N \paren*{W_{t-1}^{n} + W_{t-1}^{m^n_t} - \frac{1}{N}}^2\\      
        &\leq V_{t-1} 
        - \frac{p_c^3}{4}\sum_{n=1}^N \paren*{W_{t-1}^n - W_{t-1}^{m^n_t}}^2 + \paren*{W_{t-1}^{n} + W_{t-1}^{m^n_t} - \frac{1}{N}}^2\\   
        &= V_{t-1} 
        - \frac{p_c^3}{4}\sum_{n=1}^N \paren*{W_{t-1}^n - \frac{1}{2N}}^2 + \paren*{W_{t-1}^{m^n_t} - \frac{1}{2N}}^2\\ 
        &= \paren*{1 - \frac{p_c^3}{4}} V_{t-1}.
\end{align}
where the penultimate equality comes from~\eqref{eq:second} and the last one from the definition of $V_{t-1}$.

The result then follows from the tower law, starting the recursion either at $\Exp(V_1)$ for odd $t$'s or $\Exp(V_0)$ for even ones.

\section{Couplings}\label{app:couplings}
\subsection{Reflection maximal coupling}\label{subsec:reflection-maximal}
When $p(x) \sim \mathcal{N}(\mu_1, \Sigma)$ and $q(x) \sim \mathcal{N}(\mu_2, \Sigma)$ are two multivariate Gaussian densities with the same covariance matrix, it is possible to form a maximal coupling of $X \sim p$, $Y \sim q$, that is, a coupling such that $\Prob(X = Y)$ is maximized as per Algorithm~\ref{alg:reflection-maximal}.
The reflection maximal version (see e.g. \citealt{lindvall1986lectures}) furthermore incorporates a reflection, which is known to minimize the expected meeting time of Ornstein--Uhlenbeck trajectories, as those in Section~\ref{subsec:ornstein-uhlenbeck}.

\begin{algorithm}
\DontPrintSemicolon
\caption{Reflection-maximal coupling for $\mathcal{N}(\mu_1, L L^\top)$ and $\mathcal{N}(\mu_2, L L^\top)$}
\label{alg:reflection-maximal}

\KwIn{Means $\mu_1, \mu_2$ and matrix $L$.}
\KwOut{A coupled pair $(x, y)$.}
\BlankLine

$z \leftarrow L^{-1} (\mu_1 - \mu_2)$\;
$e \leftarrow z / \norm{z}$\;
Sample $V \sim \mathcal{N}(0, I)$ and $U \sim \mathcal{U}([0, 1])$\; \tcp{Independently}
\If{$\mathcal{N}(V; 0, I) \, U < \mathcal{N}(V + z; 0, I)$}{
    $W \leftarrow V + z$\;
}
\Else{
    $W \leftarrow V - 2 \left\langle e, V \right\rangle e$\;
}
$x \leftarrow \mu_1 + L V$\;
$y \leftarrow \mu_2 + L W$\;

\end{algorithm}

\subsection{Reflection maximal coupling for random walk Metropolis--Hastings}
Given the current state, the proposal distribution for the random-walk Metropolis--Hastings algorithm is Gaussian: $q(y \mid x) = \mathcal{N}(y; x, \delta \Sigma)$, and 
the proposed state is accepted with probability $\alpha(y, x) = 1 \wedge \{\gamma(y) / \gamma(x)\}$.
Different coupling systems have been proposed for this proposal~\citep{wang2021coupling,Papp2024couplings}.
Here we take a slightly suboptimal but practical approach which consists in using the reflection maximal coupling of the proposals and then using a common uniform variate to accept or reject the proposals.
This is summarized in Algorithm~\ref{alg:reflection-maximal-rw}.
\begin{algorithm}
\DontPrintSemicolon
\caption{Coupling for the random walk kernel with pre-conditioning covariance $L L^{\top} = \Sigma$ and target $\pi \propto \gamma$.}
\label{alg:reflection-maximal-rw}

\KwIn{Positions $x$, $y$, step-size $\delta$.}
\KwOut{A coupled pair $(x, y)$.}
\BlankLine

$\mu_x \leftarrow x$\;
$\mu_y \leftarrow y$\;
Sample $x', y'$ using Algorithm~\ref{alg:reflection-maximal}\;
Sample a uniform $U \sim \mathcal{U}([0, 1])$\;
\If{$U < \alpha(x', x)$}{
    $x \leftarrow x'$\;
}
\If{$U < \alpha(y', y)$}{
    $y \leftarrow y'$\;
}
\end{algorithm}

The resulting coupling is successful if the proposals are coupled and both are accepted.

\subsection{Reflection maximal coupling of Metropolis-adjusted Langevin algorithm}\label{subsec:mala-couplings}
Given the current state, the proposal distribution for MALA is Gaussian: $q(y \mid x) = \mathcal{N}(y; x + \delta/2\, \Sigma \nabla \log \gamma(x), \delta \Sigma)$, and 
the proposed state is accepted with probability $\alpha(y, x) = 1 \wedge \{\gamma(y) q(x \mid y)\} / \{\gamma(x) q(y \mid x)\}$.
Here we again take the suboptimal approach of using the reflection maximal coupling of the proposals and a common uniform variate to accept proposals.
This is summarized in Algorithm~\ref{alg:reflection-maximal-mala}.
\begin{algorithm}
\DontPrintSemicolon
\caption{Coupling for the MALA kernel with pre-conditioning covariance $L L^{\top} = \Sigma$ and target $\pi \propto \gamma$.}
\label{alg:reflection-maximal-mala}

\KwIn{Positions $x$, $y$, step-size $\delta$.}
\KwOut{A coupled pair $(x, y)$.}
\BlankLine

$\mu_x \leftarrow x + \delta/2\, \Sigma \nabla \log \gamma(x)$\;
$\mu_y \leftarrow y + \delta/2\, \Sigma \nabla \log \gamma(y)$\;
Sample $x', y'$ using Algorithm~\ref{alg:reflection-maximal}\;
Sample a uniform $U \sim \mathcal{U}([0, 1])$\;
\If{$U < \alpha(x', x)$}{
    $x \leftarrow x'$\;
}
\If{$U < \alpha(y', y)$}{
    $y \leftarrow y'$\;
}
\end{algorithm}
The resulting coupling is again successful if the proposals are coupled and both are accepted.

\subsection{P\'olya-Gamma sampler}\label{app:pgg}
The P\'olya-Gamma sampler was introduced in \citet{Polson01122013} as a data augmentation Gibbs scheme for the logistic regression problem as described in Section~\ref{subsec:polya-gamma}.
In its uncoupled version, given the current state $\beta \in \mathbb{R}^n$ of the parameter chain, it alternates between
\begin{enumerate}
\itemsep-1em 
    \item $W_{i} \sim \PG(1, \abs{x_i \cdot \beta})$, $i=1, \ldots, n$\\
    \item $\beta \sim \mathcal{N}\curly*{\mu(W), \Sigma(W)}$
\end{enumerate}
where $X \in \mathbb{R}^{n \times k}$ is the matrix of $k$ entries for the $n$ features, $\tilde{y}_i \in \{-1/2, 1/2\}$ is the scaled outcome, $W$ is the stacked matrix of the $W_i$'s, $\mu(W) = \Sigma(W_t)(X^\top \tilde{y} + B^{-1}b)$, $\Sigma(W) = X^{\top} \mathrm{diag}(W) X + B^{-1})^{-1}$, and $\mathcal{N}(b, B)$ is the Gaussian prior for $\beta$.

Couplings for such chains are readily available owing to the P\'olya-Gamma distribution having tractable likelihood ratios $\PG(x; 1, c)/\PG(x; 1, c')$.
In Algorithm~\ref{alg:coupling-pgg}, we describe such a coupling, which corresponds to~\citep[Algorithm 7]{biswas2019estimating}.

\begin{algorithm}
\DontPrintSemicolon
\caption{Maximal coupling of P'olya--Gamma variables. Let the log of the unnormalised P'olya--Gamma density kernel be $h(z, \omega) = \log\cosh(z/2) - \frac{1}{2}z^2\omega$.}
\label{alg:coupling-pgg}

\KwIn{Distributions $\PG(1, c_1)$, $\PG(1, c_2)$.}
\KwOut{Coupled samples $(\omega_1, \omega_2)$ from the Pólya--Gamma distributions.}
\BlankLine

Draw $\omega_1 \sim \PG(1, c_1)$\;
Draw $U_1 \sim \Uniform(0,1)$\;
\If{$\log(U_1) \le h(c_2, \omega_1) - h(c_1, \omega_1)$}{
    $\omega_2 \leftarrow \omega_1$\;
}
\Else{
    \Repeat{$\log(U_2) > h(c_1, \omega_2') - h(c_2, \omega_2')$}{
        Draw $\omega_2' \sim \PG(1, c_2)$\;
        Draw $U_2 \sim \Uniform(0,1)$\;
    }
    $\omega_2 \leftarrow \omega_2'$\;
}
\end{algorithm}
The full coupled sampler is then given by Algorithm~\ref{alg:coupling-pgg-full}.
\begin{algorithm}
\DontPrintSemicolon
\caption{P'olya--Gamma Gibbs Coupling Step.}
\label{alg:coupling-pgg-full}

\KwIn{Current states $\beta, \hat{\beta} \in \mathbb{R}^d$.}
\KwOut{Next states $\beta', \hat{\beta}' \in \mathbb{R}^d$.}
\BlankLine

\For{$i=1, \dots, n$}{
    Sample $(W_{i}, \hat{W}_{i})$ from Algorithm~\ref{alg:coupling-pgg} for $\PG(1, |x_i^\top \beta|)$ and $\PG(1, |x_i^\top \hat{\beta}|)$.\;
}
Sample $(\beta', \hat{\beta}')$ from a maximal coupling of $\mathcal{N}(\mu(W), \Sigma(W))$ and $\mathcal{N}(\mu(\hat{W}), \Sigma(\hat{W}))$.\;

\end{algorithm}

We refer to \citet[Section S4 and Algorithm 7, respectively]{jacob2020unbiasedmcmc,biswas2019estimating} for details.

\subsection{Other samplers and couplings}
A large collection of couplings have been proposed for different MCMC kernels over the past decade or so.
While we do not expect this list to be exhaustive, we have collected here several of them which seem to cover most of the techniques employed.

\begin{enumerate}
    \item Hamiltonian Monte Carlo~\citep[HMC][]{neal:2011} MCMC is perhaps one of the most used MCMC algorithm. 
    It is worth noting that no exact coupling of the method has been proposed to date. 
    In \citet{heng2019couplings}, the authors instead implement a two-scale coupling of a mixture of HMC and of a random-walk. 
    The HMC component is then made to contract in space using techniques akin to the reflection of Section~\ref{subsec:reflection-maximal} while the random walk is used to eventually try and make the chains stick once they are close enough.
    This is a case of a two-scale coupling, where an algorithm is made to behave differently when the states are far apart versus when they are close.
    \item Piecewise deterministic Markov processes~\citep{bierkens2019zig,bierkens2020boomerang,bouchard2018bps} are classes of continuous-time Markov chains on a joint position-velocity state-space with tractable dynamics between velocity jumps, which may be coupled too~\citep{corenflos2025pdmp}.
    The underlying technique relies mostly on \emph{clock synchronization}: managing the time and type of the jump events to control the dynamics of the process.
    These couplings are however less efficient than for discrete-time dynamics owing to their largely more rigid structure.
    \item Optimal transport~\citep[see, e.g.,][]{villani2009optimal,cuturi2019compOT} contraction is commonly used in order to enforce geometric (in the squared Euclidean distance sense) proximity, either for a subpart of the sampler~\citep[in the context of a Gibbs scheme for instance]{nguyen2022many}, or as part of a two-scale method~\citep{biswas2022coupling,ceriani2024linearcostunbiasedposteriorestimates}, where the two chains are brought closer together, and then the optimal transport coupling is swapped for one that may make the chains meet exactly.
    \item Gibbs(-like) couplings in general can be designed or studied component-wise or in a collapsed manner~\citep{ceriani2024linearcostunbiasedposteriorestimates}, sometimes reaching different conclusions.
    A class of such samplers which has attracted a lot of research over the past few years is conditional sequential Monte Carlo methods~\citep{jacob2020unbiasedsmoothing,lee2020coupled,karjalainen2023mixing}, which, in some sense, extend Metropolis-within-Gibbs methods to systems with tractable Markovian dependencies.
    \item Pseudo-marginal MCMC~\citep{andrieu2009pseudomarginal} algorithms have also been the subject of research and their couplings have been studied in~\citep{middleton2019unbiased,middleton2020unbiased}. 
    The general idea is often to design two different couplings, one for the latent state used in acceptance, and the other for the proposal distribution used for the parameter of interest.
\end{enumerate}
Other uses of couplings we did not mention in this article include multilevel Monte Carlo~\citep[see, e.g.,][]{giles2015mlmc,rhee2015unbiased,vihola2018mlmc} or coupling from past~\citep[see, e.g.,][]{propp1996exact,huber2016perfect}.

\section{Other empirical outputs}\label{app:other-empirical-res}

\subsection{Gaussian additional statistics}
We now report other computed $f$-divergences based on our weight-harmonization procedure.
In Figure~\ref{fig:tv-gaussian}, the total variation is computed by itself owing to no closed form solutions being available in general, in Figure~\ref{fig:kl-gaussian} we report the Kullback--Leibler divergence profile, while in Figure~\ref{fig:hellinger-gaussian}, we report the squared Hellinger distance profile.

\begin{figure}
    \centering
    \includegraphics[width=\linewidth]{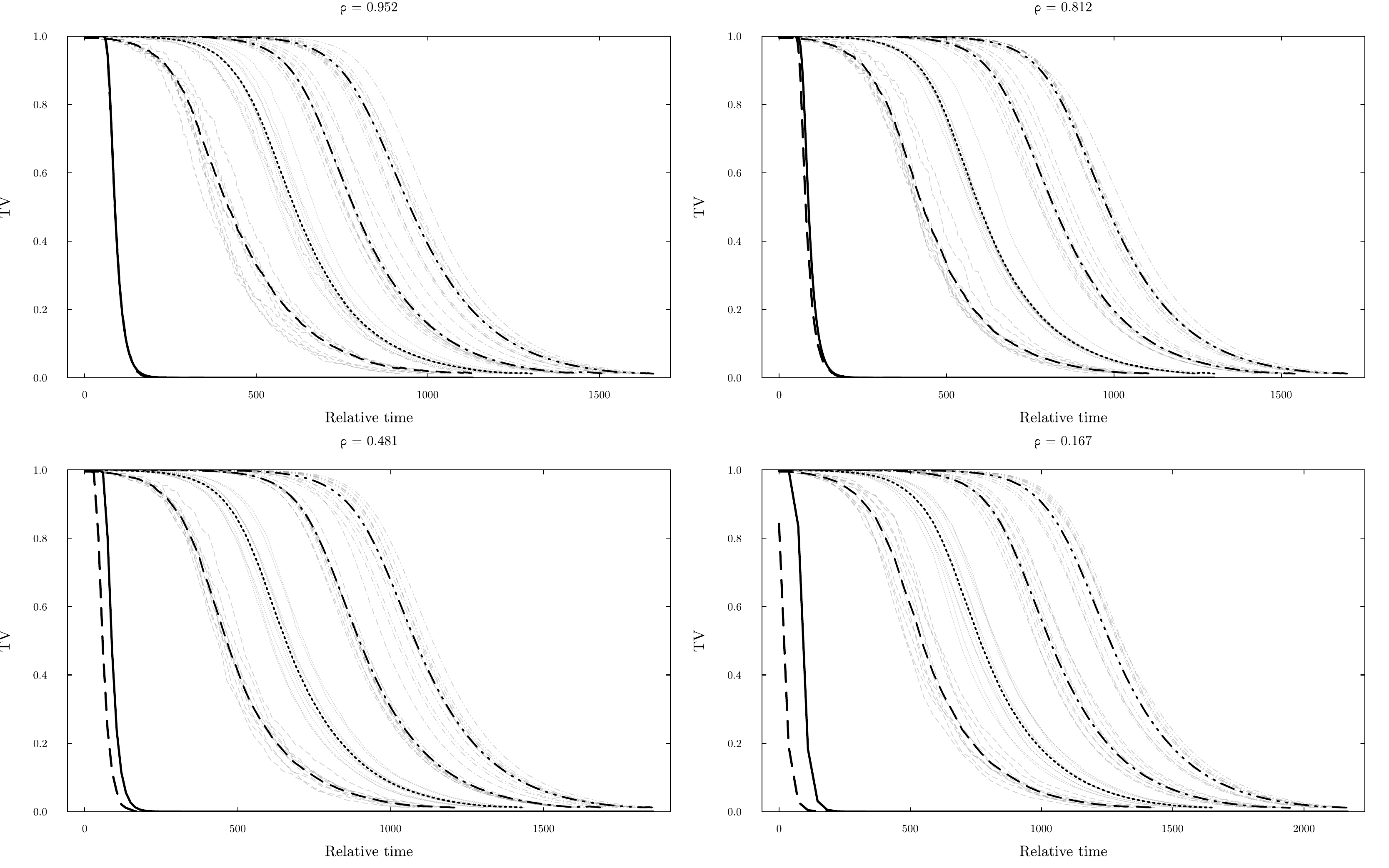}
    \caption[]{Total variation for the Ornstein--Uhlenbeck Gaussian example. {\protect\tikz[baseline=-0.5ex]\draw[black, line width=2.5pt] (0,0)--(0.5cm,0);}~CLT closed-form; {\protect\tikz[baseline=-0.5ex]\draw[black, line width=2.5pt, dashed] (0,0)--(0.5cm,0);} \citet{biswas2019estimating} L-lag estimate; harmonization (dashed gray, by $N$): {\protect\tikz[baseline=-0.5ex]\draw[gray, line width=1.0pt, dashed] (0,0)--(0.5cm,0);} $N=100$; {\protect\tikz[baseline=-0.5ex]\draw[gray, line width=1.0pt, dotted] (0,0)--(0.5cm,0);} $N=1{,}000$; {\protect\tikz[baseline=-0.5ex]\draw[gray, line width=1.0pt, dashdotted] (0,0)--(0.5cm,0);} $N=10{,}000$; {\protect\tikz[baseline=-0.5ex]\draw[gray, line width=1.0pt, dash pattern=on 5pt off 2pt on 1pt off 2pt on 1pt off 2pt] (0,0)--(0.5cm,0);} $N=100{,}000$. Thin transparent lines show individual seed realizations; thick lines show the mean.}
    \label{fig:tv-gaussian}
\end{figure}

\begin{figure}
    \centering
    \includegraphics[width=\linewidth]{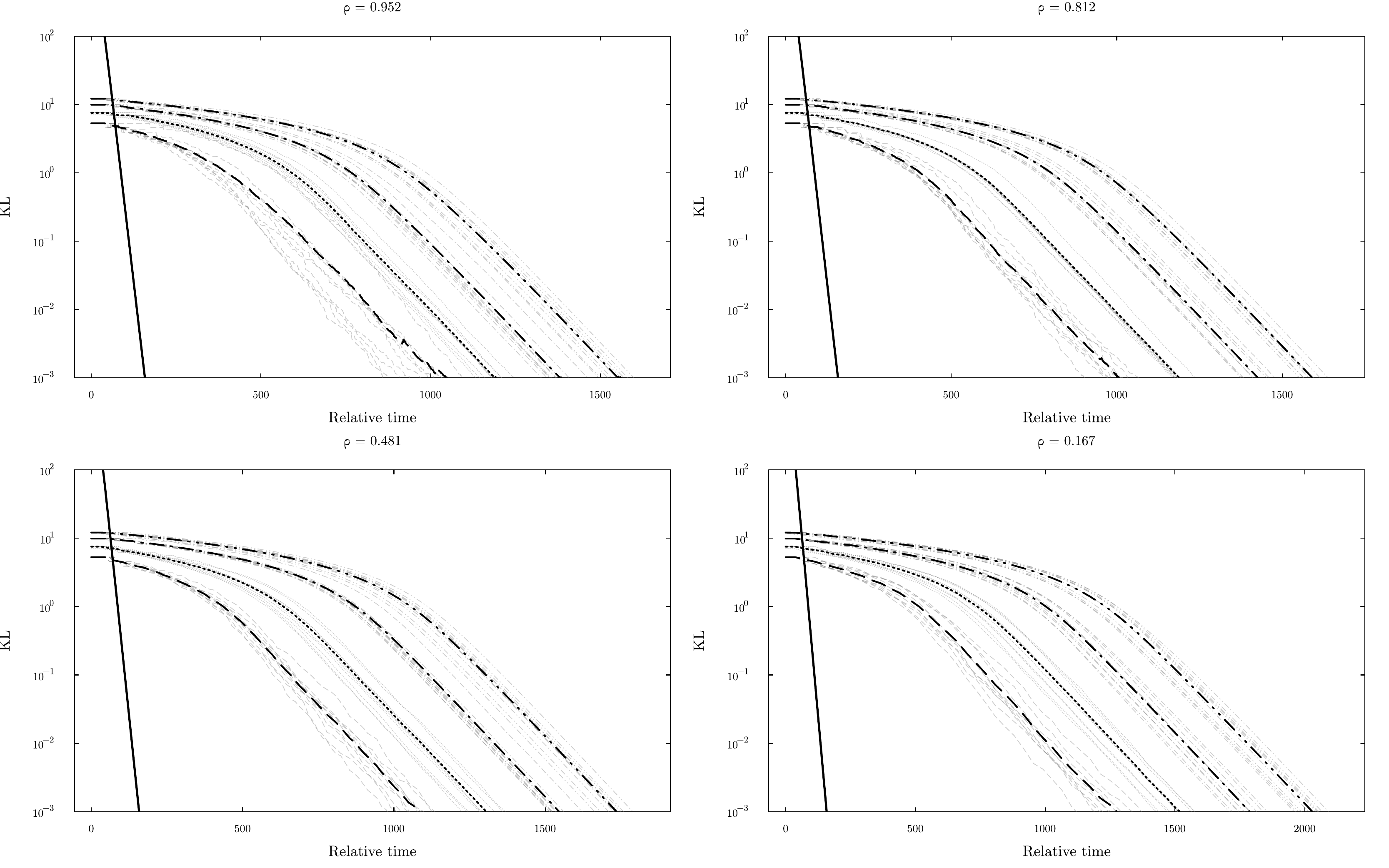}
    \caption[]{KL divergence $\mathrm{KL}(\pi\|\mu_t)$ (log scale) for the Ornstein--Uhlenbeck Gaussian example. {\protect\tikz[baseline=-0.5ex]\draw[black, line width=2.5pt] (0,0)--(0.5cm,0);}~closed-form theoretical value; empirical: {\protect\tikz[baseline=-0.5ex]\draw[gray, line width=1.0pt, dashed] (0,0)--(0.5cm,0);} $N=100$; {\protect\tikz[baseline=-0.5ex]\draw[gray, line width=1.0pt, dotted] (0,0)--(0.5cm,0);} $N=1{,}000$; {\protect\tikz[baseline=-0.5ex]\draw[gray, line width=1.0pt, dashdotted] (0,0)--(0.5cm,0);} $N=10{,}000$; {\protect\tikz[baseline=-0.5ex]\draw[gray, line width=1.0pt, dash pattern=on 5pt off 2pt on 1pt off 2pt on 1pt off 2pt] (0,0)--(0.5cm,0);} $N=100{,}000$. Thin transparent lines show individual seed realizations; thick lines show the mean.}
    \label{fig:kl-gaussian}
\end{figure}

\begin{figure}
    \centering
    \includegraphics[width=\linewidth]{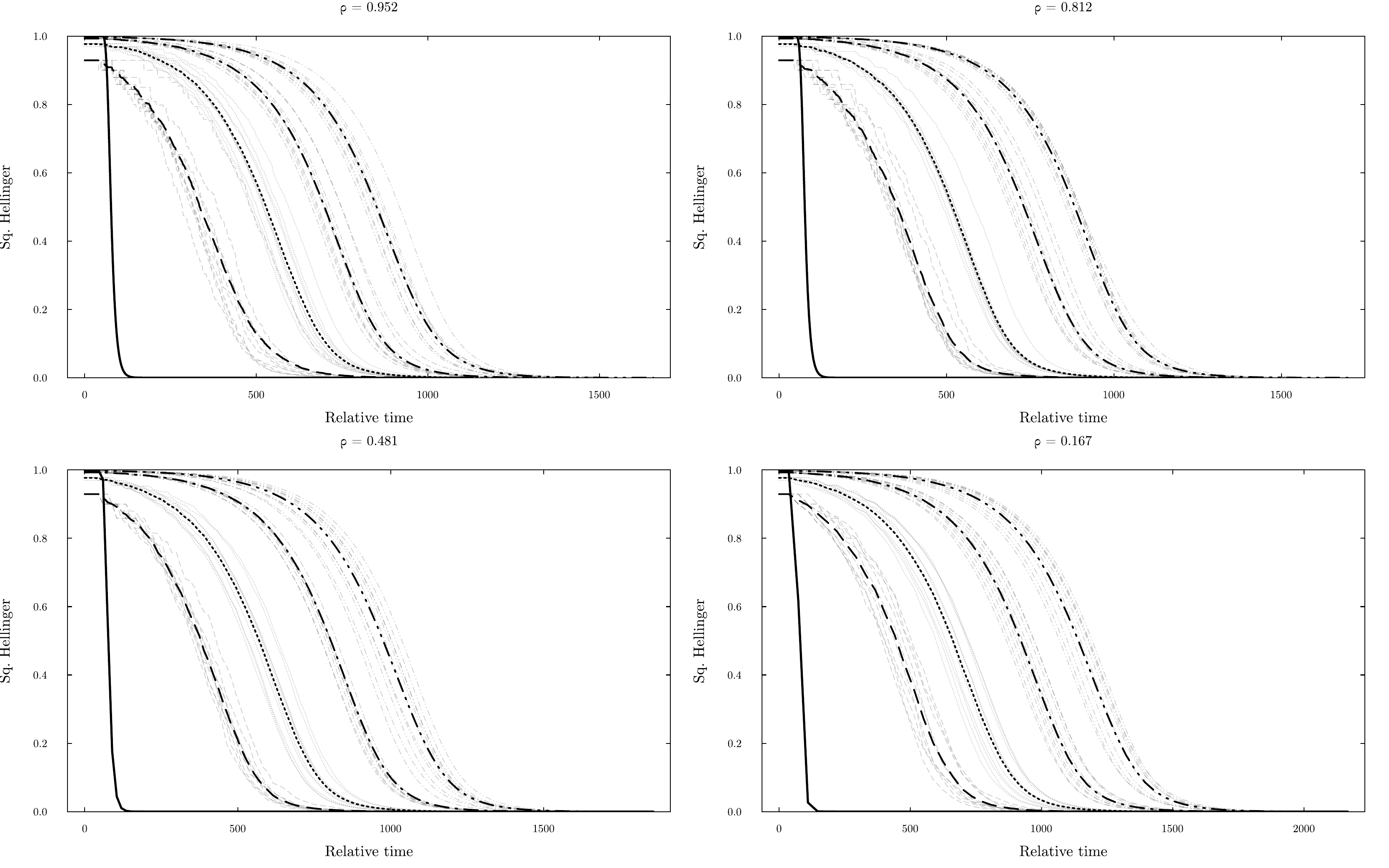}
    \caption[]{Squared Hellinger distance $H^2(\pi,\mu_t)$ for the Ornstein--Uhlenbeck Gaussian example. {\protect\tikz[baseline=-0.5ex]\draw[black, line width=2.5pt] (0,0)--(0.5cm,0);}~closed-form theoretical value; empirical: {\protect\tikz[baseline=-0.5ex]\draw[gray, line width=1.0pt, dashed] (0,0)--(0.5cm,0);} $N=100$; {\protect\tikz[baseline=-0.5ex]\draw[gray, line width=1.0pt, dotted] (0,0)--(0.5cm,0);} $N=1{,}000$; {\protect\tikz[baseline=-0.5ex]\draw[gray, line width=1.0pt, dashdotted] (0,0)--(0.5cm,0);} $N=10{,}000$; {\protect\tikz[baseline=-0.5ex]\draw[gray, line width=1.0pt, dash pattern=on 5pt off 2pt on 1pt off 2pt on 1pt off 2pt] (0,0)--(0.5cm,0);} $N=100{,}000$. Thin transparent lines show individual seed realizations; thick lines show the mean.}
    \label{fig:hellinger-gaussian}
\end{figure}

\subsection{P\'olya--Gamma sampler additional statistics}
\label{app:polya-gamma-additional}

In Figure~\ref{fig:all-stats-credit}, we report additional statistics for several $f$-divergences: the total variation distance, Kullback--Leibler divergence, the reversed Kullback--Leibler divergence, and the $\chi^2$ and squared Hellinger distances.
\begin{figure}
    \centering
    \includegraphics[width=0.5\linewidth]{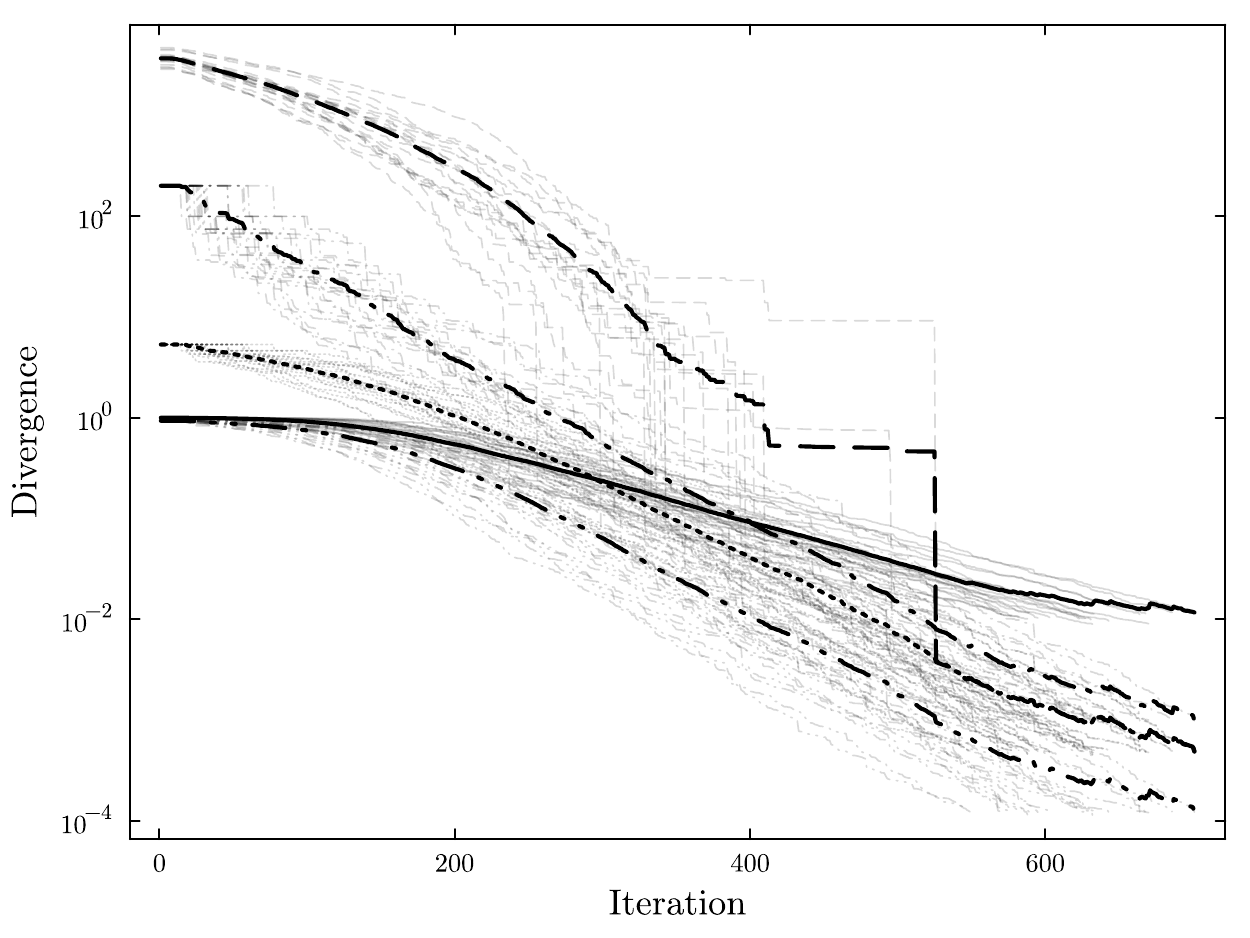}
    \caption[]{All five $f$-divergence upper bounds from weight harmonization ($N=100$, 20 seeds) for the P\'olya--Gamma Gibbs sampler. Mean curves (thick) with individual seed realizations (transparent, same style): {\protect\tikz[baseline=-0.5ex]\draw[black, line width=2.0pt] (0,0)--(0.5cm,0);}~total variation; {\protect\tikz[baseline=-0.5ex]\draw[black, line width=2.0pt, dashed] (0,0)--(0.5cm,0);}~reverse KL; {\protect\tikz[baseline=-0.5ex]\draw[black, line width=2.0pt, dotted] (0,0)--(0.5cm,0);}~KL; {\protect\tikz[baseline=-0.5ex]\draw[black, line width=2.0pt, dashdotted] (0,0)--(0.5cm,0);}~$\chi^2$; {\protect\tikz[baseline=-0.5ex]\draw[black, line width=2.0pt, dash pattern=on 5pt off 2pt on 1pt off 2pt on 1pt off 2pt] (0,0)--(0.5cm,0);}~squared Hellinger.}
    \label{fig:all-stats-credit}
\end{figure}

\end{document}